\pgfplotsset{width=10cm,compat=1.9}
\definecolor{green}{HTML}{30AE17}
\newtheorem{theorem}{Theorem}
\newtheorem{assumption}[theorem]{Assumption}
\newtheorem{claim}[theorem]{Claim}
\newtheorem{definition}{Definition}
\newtheorem{lemma}{Lemma}
\newtheorem{proposition}{Proposition}
\newtheorem{remark}{Remark}
\newenvironment{proof}[1][Proof]{\textbf{#1.} }{\ \rule{0.5em}{0.5em}}
\let\svmaketitle\maketitle
\def\maketitle{\svmaketitle\thispagestyle{empty}}
\begin{document}

\title{Limits of Disclosure in Search Markets\thanks{%
 We thank Yunus Aybas, Mike Baye, Alessandro Bonatti, Piotr Dworczak, Huiyi Guo, Emir Kamenica, Navin Kartik, Aaron Kolb, R. Vijay Krishna, Jordan Martel, Eric Maskin, Priyanka Sharma, Mark Whitmeyer, Kun Zhang, conference participants at the 2023 Society of Advancement of Economic Theory, the 2023 European Association for Research in Industrial Economics conference, the 2023 Southern Economic Meetings, the 2024 American Economic Association Meetings, and seminar participants at Arizona State University, Florida State University, Kelley School of Business, Texas A\&M University, the University of Illinois Urbana-Champaign, the University of Iowa, the University of Queensland for helpful comments and suggestions. Boleslavsky thanks the Weimer
Faculty Fellowship at Indiana University for financial support. An earlier version of this paper was circulated under the title ``Information and Pricing in Search Markets.''}}

\author{Raphael Boleslavsky\footnote{Kelley School of  Business, Indiana University. E-mail: rabole@iu.edu.} \  and  Silvana Krasteva\footnote{Department of Economics, Texas A\&M University. E-mail: ssk8@tamu.edu.}\\ }
\maketitle

\begin{abstract}

This paper examines competitive information disclosure in search markets with a mix of savvy consumers, who search costlessly, and inexperienced consumers, who face positive search costs. Savvy consumers incentivize truthful disclosure; inexperienced consumers, concealment. With both types, the equilibrium features partial disclosure, which persists despite intense competition: in large markets, firms always conceal low valuations. Inexperienced consumers may search actively, but only in small markets. While savvy consumers benefit from increased competition, inexperienced consumers may be harmed. Changes in search costs have non-monotone effects: when costs are low, sufficient reductions increase informativeness and welfare; when costs are high, the opposite.

\textbf{JEL Classifications: D4, D21, D83, L13}
	
\textbf{Keywords: information design, search market, heterogeneous search cost} 

\end{abstract}

\newpage

\begin{flushleft}As the number of choices grows... choice no longer liberates, but debilitates.
\end{flushleft}
\begin{flushright}
― Barry Schwartz, \textit{The Paradox of Choice: Why More Is Less}
\end{flushright}

\section{Introduction}

\baselineskip16pt

In an expanding global economy, the challenge for consumers is not merely finding a suitable product but navigating an overwhelming abundance of choices. In such markets, firms employ a variety of tools to influence consumer learning and decisions. For example, software companies offer trial periods with limited features to draw users' attention to specific functionalities; car dealers control the route and duration of test drives to highlight certain features and downplay others; realtors tend to highlight key selling points of a property while glossing over less favorable aspects. Such selective disclosure may have a significant and heterogeneous impact on consumers, depending on their ability to search for and evaluate competing products. 

This paper studies competitive information provision by firms in search markets with heterogeneous search costs, embedding information design into a standard model of consumer search. The model features $n$ horizontally differentiated firms, each  selling a single product. A mass of consumers have a unit demand and iid valuations for each product, drawn from a continuous distribution. As in \cite{stahl1989oligopolistic}, consumers are heterogeneous and privately informed about their search costs: inexperienced consumers have a positive search cost, while savvy consumers have zero (or negative) search cost and visit all firms. Upon visiting a firm, a consumer draws a signal realization which conveys information about her valuation for the firm's product. The consumer then decides whether to stop searching and purchase any of the previously sampled products, or to continue searching. If the consumer continues, then she pays the search cost and decides which of the previously unsampled firms to visit next. Crucially, the signal structure which generates the consumer's information about her valuation is designed by the firm. Following the Bayesian persuasion literature, we impose no structural restrictions on firms’ information provision: firms may select any distribution of signals conditional on the true valuation. Thus, an inexperienced consumer's search decision depends on the anticipated information provision of unsampled firms---simultaneously, a firm's incentive to provide information depends on the search strategy it expects consumers to follow. We characterize the unique symmetric Perfect Bayesian equilibrium of this game, examining how changes in market structure and search frictions affect information provision, consumer behavior, and welfare across market segments.

The mix of savvy and inexperienced consumers in the market generates a tradeoff between information disclosure and concealment. To understand this tradeoff, it is helpful to consider the two extremes: only savvy consumers, studied by \cite{hwang2019competitive}, and only inexperienced consumers, studied by \cite{au2023attraction}. In particular, \cite{hwang2019competitive} show that with no search frictions, full disclosure is the unique equilibrium when the market is sufficiently competitive. In contrast, when all consumers have  the same positive search cost, \cite{au2023attraction} find that  the equilibrium features an ``informational Diamond paradox,'' in which each firm withholds any information that would trigger a consumer to search.\footnote{These authors also consider a separate case in which firms commit to information provision before being visited, which implies a directed search.} As a result, consumers purchase from the first visited firm. Thus, even a tiny search cost dramatically changes the equilibrium.

In the presence of both consumer types, the tension between capturing inexperienced consumers and competing for savvy consumers results in partial disclosure.  Savvy consumers visit all firms before purchase, while inexperienced consumers follow the standard reservation search strategy.\footnote{Inexperienced consumers search until they encounter a (posterior mean) realization above their reservation value, or until they have sampled all firms. Once all firms are sampled, they select the one with the highest valuation.} This search strategy incentivizes firms to conceal (true) valuations that are just below the inexperienced consumers' reservation value, pooling them with higher valuations. Revealing such valuations truthfully incentivizes further search, and the firm only sells to an inexperienced consumer if its realized valuation is highest. In contrast, by pooling valuations on both sides of the reservation value in one signal realization, the firm generates a posterior mean that lies in between. If this mean is above the reservation value, then the inexperienced consumer stops searching and buys its product immediately. Though it can be helpful in the inexperienced market, such pooling can be harmful in the savvy market. The posterior mean is smaller than the high valuations in the pool---by disclosing these valuations truthfully, the firm could increase sales to savvy consumers. We show that in the unique equilibrium, firms always distort information around the reservation value, concealing an interval of valuations just below it, and pooling them with valuations above. In contrast, sufficiently high or low valuations---far from the reservation value---may be revealed truthfully. The extent of the distortion depends critically on the structure of the market and the characteristics of the search technology.

In a small market, a firm must carefully balance the incentives for disclosure and concealment. Because it has few competitors, the firm has a reasonable chance of selling to an inexperienced consumer with a low value even if it discloses truthfully: though the consumer leaves initially, she might return after sampling the small number of alternatives. Thus, the firm has less to gain from distorting information. Indeed, we show that in many small markets, only a narrow interval below the reservation value is concealed, while all lower valuations are truthfully disclosed. The incentives are starkly different in large markets. With many competitors, disclosing a low value truthfully is extremely unlikely to lead to a sale. Thus, in a large market, the gain from concealing low valuations is substantial, resulting in a highly distorted equilibrium. Indeed, we show that in a sufficiently large market, firms conceal \textit{all} valuations below the reservation value. Even in the limit economy with a large number of competitors, the incentive to capture inexperienced consumers prevents the market from approaching full disclosure.  This finding stands in contrast to much of the existing literature on competitive information provision, where strong competition results in full disclosure \citep{ivanov2013information,au2020competitive,hwang2019competitive}. 

The distinct patterns of disclosure in small and large markets have several intriguing consequences. In markets with only a few competitors, firms reveal low valuations truthfully, and inexperienced consumers with bad initial draws visit multiple firms. In contrast, in large markets, firms conceal all valuations below the reservation value, and inexperienced consumers always stop at the first firm they visit. In other words, inexperienced consumers search actively in small markets, but not in large ones. Given its connection to the epigraph, we refer to this finding as the \textit{paradox of choice}. It offers a novel perspective on ``choice overload,'' a phenomenon whereby an abundance of options obstructs evaluation of different alternatives, which is typically attributed to bounded rationality or cognitive limitations \citep{IL2000,CBG2015}. In contrast, our analysis shows that such behavior can arise endogenously as a result of firms’ strategic information provision. This paradox also has normative implications: while savvy consumers unambiguously benefit from increased competition, inexperienced consumers may actually fare better in less competitive markets, where a balanced disclosure fosters active search. To the best of our knowledge the paradox of choice has not previously appeared in the literature on rational search.\footnote{We discuss how this finding differs from a number of relevant benchmarks later in the paper.}

Changes in the search cost also affect the positive and normative properties of the equilibrium. In particular, we establish that information provision is non-monotone in the inexperienced consumers' search cost. When the search cost is low, firms respond to a drop in search cost by providing a more informative signal, which aligns with standard intuition. However, when the cost is high, the option to search is inherently less valuable. Firms exploit this vulnerability by distorting information to ensure inexperienced consumers always stop at the first firm they visit. When the search cost drops, search becomes more viable for inexperienced consumers, and firms must distort their signals more to  maintain their grip. The larger distortion has a negative spillover on savvy consumers, who always benefit from greater transparency.

 In the last section of the paper, we  discuss how our results can be extended to primitives beyond those we consider in the main model. Of particular note, we extend our model to incorporate additional cost heterogeneity. Rather than a single cost for the inexperienced type, we allow for a more general cost distribution, showing that our  characterization of equilibrium in a large economy is robust.

Overall, our study highlights the critical role of consumer heterogeneity in shaping firms' incentives to provide information. While conventional wisdom suggests that increasing competition or reducing search frictions should improve transparency and market efficiency, our analysis reveals a more nuanced relationship between search costs, market competition, and disclosure. In markets with a mix of naive and savvy consumers, competition alone is insufficient to ensure full transparency. Furthermore, reducing search frictions does not necessarily improve information provision. Indeed, whether such changes enhance transparency depends critically on the severity of the search friction.

\paragraph{Related Literature.} This paper contributes to the growing literature on strategic information provision by firms. Some of the earlier works focus on various types of advertising and disclosure in monopoly settings (\citealp{lewis1994supplying, che1996customer, ottaviani2001value, anderson2006advertising, johnson2006simple}), with more recent studies exploring the implications of optimal information design under a variety of welfare objectives (e.g., \citealp{bergemann2015limits, roesler2017buyer}). 

There is also a growing interest in examining information disclosure in competitive environments. Some studies adopt a centralized approach, examining optimal information disclosure by a planner, such as a platform, under various social objectives (\citealp{armstrong2022consumer, dogan2022consumer}). Our paper is more closely related to the literature that investigates information disclosure as an equilibrium outcome in decentralized markets. One strand of this literature focuses on the competitive disclosure of information about a single state of the world (e.g., \citealp{kamenica2017competition, gentzkow2017bayesian, li2018bayesian, li2021sequential}), while another examines settings with multiple states, where each sender controls information about a distinct state. We contribute to this latter strand by introducing heterogeneous consumers, thereby integrating features of models without search frictions (e.g., \citealp{ivanov2013information, hwang2019competitive, au2020competitive, armstrong2022consumer}) and models with costly search (e.g., \citealp{board2018competitive, au2023attraction, dogan2022consumer, zhou2022improved}). This approach bridges the gap between these models, offering insights into the role of consumer heterogeneity in shaping information provision and welfare. An overarching insight emerging from the existing literature is that market competition improves information provision, leading to full disclosure in highly competitive markets. Our analysis reveals that competition does not entirely eliminate firms' incentives to withhold information in markets with heterogeneous consumers, and characterizes the endogenous disclosure strategies that arise in equilibrium.

In the literature on consumer search, \cite{zhou2022improved} studies how exogenous increases in information provision (via the convex order) affect search intensity, price competition, and consumer welfare. \cite{dogan2022consumer} study a general information design problem in a competitive environment but focus on the consumer-optimal information structure. Our focus on equilibrium information disclosure also relates to \cite{board2018competitive}. Similar to our setting, \cite{board2018competitive} consider a search model in which firms choose disclosure strategies, but unlike our setting, sellers offer homogeneous products. They characterize the conditions under which full disclosure and the ``monopoly" disclosure arise in equilibrium, while our equilibrium features partial disclosure. Of further note are foundational papers that study pricing in search markets with a mix of savvy consumers, who fully learn the state of the market, and inexperienced consumers, who have positive (or even infinite) search costs \citep{stahl1989oligopolistic,V1980,BM2001}.\footnote{\citet{EW2012} study a model of price-setting in a search market with savvy and inexperienced consumers. Firms can deliberately make the search process more costly for the inexperienced consumers by ``obfuscating'' their price. While related to the type of concealment we analyze, the findings are quite different; for example, in \citet{EW2012} when obfuscation is free for firms, inexperienced consumers stop at the first visited firm. See also \citet{DDZ2017}, who study how the disclosure of common cost information affects price formation in a search market with savvy and inexperienced consumers.} We discuss the link to \citet{stahl1989oligopolistic} in the body and recap in the concluding remarks.

\cite{au2023attraction} also study a search model with horizontally differentiated products and inexperienced consumers. As discussed above, when firms are unable to commit to an information disclosure strategy, such markets exhibit an informational Diamond paradox with no consumer search. Building on this finding, these authors  incorporate an ``attraction motive'': by making a commitment to provide more information ahead of the consumers' search decision, firms incentivize subsequent visits. The equilibrium features randomization over information structures in less competitive markets and full disclosure in large markets, contrasting with our main findings.

In a concurrent paper, \citet{HH2025} analyze strategic disclosure in a search market with a continuum of search costs supported continuously on a weakly positive interval, focusing mainly on the boundary case where the lowest cost in the support is zero. In this environment, the authors characterize a continuum of equilibria in which firms use ``upper-censorship'' strategies, revealing all valuations below a threshold and pooling all valuations above it. They show that a continuum of such equilibria persists as the number of firms grows. They also study comparative statics with respect to the cost distribution, showing that a first order dominance shift toward higher costs reduces equilibrium informativeness. Our results suggest that these findings depend significantly on special features of the environment that \citet{HH2025} study. As these authors note, with the inclusion of an (arbitrarily small) mass of savvy consumers, ``an upper-censorship distribution is no longer an equilibrium'' (Appendix C, pg.1). In contrast, we characterize the unique equilibrium of the disclosure game with binary types (savvy and inexperienced) without restricting attention to any specific class, which delivers different qualitative findings. In our model, firms disclose high values and conceal low ones in large markets, the opposite of \citet{HH2025}.\footnote{In the Supplemental Appendix, we show that this characterization is robust to cost heterogeneity among inexperienced consumers.} We also find that informativeness is non-monotone in search cost, rather than decreasing. Overall, our analyses focus on similar issues but paint very different pictures of the role of cost heterogeneity in search markets: they should therefore be seen as complementary.

\section{Setup}
\label{Setup}

\paragraph{Market.} 
Consider a horizontally differentiated market with $n$ competing firms and a unit mass of consumers, each of whom eventually purchases one product. All parties are risk-neutral, and it is common knowledge that each product's valuation $V_i\in [0,1]$ is independently and identically distributed across consumers and products, following an atomless distribution $F(\cdot)$ with a mean of $\mu$, a finite density $f(\cdot)$, and full support. To highlight the incentives most clearly, we focus on weakly convex $F(\cdot)^{n-1}$.  Neither firms nor consumers know the realizations of these valuations at the beginning of the game---consumers learn about their valuations through a costly search process described below. We focus on a setting in which prices are exogenous and normalized to zero. Thus, firms seek to maximize the probability of sale, while consumers seek to maximize the value of their chosen alternatives, net of search costs.

\paragraph{Information Design.} 
Each firm controls the process by which consumers learn about their valuations for its product, which allows the firm to strategically disclose or conceal information that shapes the distribution of consumers' posterior beliefs. In particular, each firm designs a signal space, $S_i$, and a joint distribution function of valuation and signal, over $[0,1] \times S_i$. Since consumers are risk-neutral, their purchase decisions depend solely on the posterior mean, $E[V_i | s_i]$. Without loss of generality, the firms' information design problem can be recast as choosing a distribution, $G_i(\cdot)$, of the posterior mean, subject to the standard Bayes-Plausibility constraint that $G_i(\cdot)$ is a mean-preserving contraction (MPC) of the prior distribution $F(\cdot)$.

\paragraph{Search Cost Heterogeneity.} Following \cite{stahl1989oligopolistic}, consumers are heterogeneous and privately informed about their search costs. A fraction $\alpha\in [0,1]$ of consumers are inexperienced and face a high search cost $s\in(0,\mu)$, while the remaining fraction, $1-\alpha$, are savvy and search all firms.\footnote{If the savvy consumers' search cost is 0, they play a weakly dominant strategy by visiting all firms.} The two extreme cases, $\alpha=0$ and $\alpha=1$, correspond to a market with costless information acquisition, as in \cite{hwang2019competitive}, and a market with symmetric, strictly positive search costs, as in \cite{au2023attraction}. We revisit these two cases in Section \ref{sec_benchmarks}. Section \ref{sec_hetero}, introduces cost heterogeneity among inexperienced types.

\paragraph{Search Process and Timing.} 
Search unfolds as a multi-stage process, similar to \citet{weitzman1979optimal}. In each stage, a consumer chooses some firm, $i$, to visit. When it is visited, firm $i$ designs an information structure, thereby inducing a distribution of the posterior mean $G_i(\cdot)$. It does so knowing only that it has been visited---it does not observe how many other firms the consumer has already visited or her search cost.\footnote{Similar to \citet{KR2022}, firms do not observe how long a consumer has been searching, and their beliefs are based on the prior distribution of each type and each type's search strategy. Our setting is also equivalent to an environment in which firms commit to their $G_i$ before the consumers begin searching, but $G_i$ is observed by consumers only upon visit (search is undirected). This timing is analogous to papers that study pricing in search markets \citep{stahl1989oligopolistic,EW2012,DDZ2017,zhou2022improved}.} The consumer observes firm $i$'s information design and draws a realization of the posterior mean, $v_i\sim G_i$. If she has not visited all firms, then the consumer decides whether to stop the search or continue. If she stops, then she selects the product with the highest expected valuation among those that have been visited thus far. If she continues, then she incurs the privately known search cost and selects which firm to visit next. Search continues in this way until the consumer either decides to stop, or until all firms have been visited. Once a consumer visits all firms, she chooses the product with the highest realized posterior mean.

\paragraph{Equilibrium.} The analysis focuses on a pure-strategy symmetric Perfect Bayesian equilibrium of this game.\footnote{Given the timing and search process in our model, allowing firms to randomize over distributions of posterior means has no effect on the characterization.} In particular: (1) each firm's posterior mean distribution $G_i(\cdot)$ is identical, and it is optimal given the posterior mean distribution of other firms, the search strategy of each consumer type, and the firm's posterior belief about the consumer's type conditional on being visited; (2) each consumer type's search strategy is optimal given the identical distribution of the posterior mean that each firm is expected to select; (3) a firm's posterior belief about a consumer's type conditional on being visited is derived from Bayes' rule, based on the prior belief and the consumers'  search strategy.

\section{Preliminary Analysis}
\label{Preliminary}

Consider a possible symmetric equilibrium, in which each firm selects the same distribution of posterior means $G(\cdot)$. Given this conjectured strategy, we derive consumers' optimal search strategy, the equilibrium belief, and formulate each firm's information design problem.

\paragraph{Optimal Search Strategy.} While savvy consumers search all firms, the inexperienced consumer's  problem is identical to the classic ``Pandora's Box'' problem \citep{weitzman1979optimal}, except that the distribution of rewards (valuations) inside each box (firm) is endogenous. In particular, before visiting, an inexperienced consumer anticipates that a firm will select the equilibrium posterior belief distribution, $G(\cdot)$, and therefore, the consumer's search decision is based on this conjectured distribution of draws from unvisited firms. We focus on passive beliefs: a consumer's conjecture about the distribution of posteriors at unvisited firms is not affected if the current firm deviates. In other words, consumers always anticipate $G(\cdot)$ from unvisited firms. Exploiting the equivalence  with the \citet{weitzman1979optimal} search, an inexperienced consumer's optimal search strategy follows ``Pandora's Rule.''  In particular, each firm is assigned an identical reservation value $r$ that depends only on the conjectured distribution of the posterior mean $G(\cdot)$ and the search cost,
\begin{equation}
\int_{r}^{1} (v-r) dG=s.
\label{eq_r_j}
\end{equation} 
Because all firms have the same reservation value, consumers are indifferent among all search orders and randomize uniformly over all possible sequences, so that each firm is equally likely to appear at each position. The inexperienced type searches until she draws a posterior mean that exceeds $r$, or until all firms have been visited, in which case she selects the highest realized posterior mean.

\paragraph{Equilibrium Belief.} A firm designs information upon being visited, without directly observing a consumer's type. Because different consumer types follow different search strategies, a firm's belief $\widetilde{\alpha}$ that a consumer is inexperienced conditional on visiting affects its information design problem. In equilibrium, this belief must be derived from Bayes' rule. To calculate it, consider firm $i$ and suppose that the $n-1$ other firms select posterior mean distributions $G(\cdot)$. Firm $i$ is always visited if the consumer is savvy, but is visited with probability 
\begin{equation}
    \eta\equiv\frac{1}{n}(1+G(r)+G(r)^2+...G(r)^{n-1})=\frac{1-G(r)^n}{n(1-G(r))}
    \label{eta}
\end{equation}
if the consumer is inexperienced. To see this, note that because the inexperienced consumer randomizes fairly over the search order, firm $i$ appears in each position with probability $1/n$. At position $k$, firm $i$ is visited if the consumer did not stop at any of the previous $k-1$ firms, i.e. $G(r)^{k-1}$. Accounting for each possible position results in the above summation. By implication, a firm's equilibrium belief that a consumer who visits is inexperienced is given by
\begin{align}\label{alphatilde}
    \widetilde{\alpha}=\frac{\alpha\eta}{\alpha\eta+(1-\alpha)}.
\end{align}
Note that because a firm chooses its information design after being visited, its own information design does not directly affect $\widetilde{\alpha}$.

\begin{remark}The probability of visit by an inexperienced consumer, $\eta$, has an intriguing representation which simplifies key steps of our analysis. In particular,
\begin{equation}\label{altet}
E_G[G(v)^{n-1}|v>r]=\frac{\int_{r}^1 G(v)^{n-1}dG}{1-G(r)}=\frac{G(v)^n}{n(1-G(r))}\Big]_{v=r}^1=\frac{1-G(r)^n}{n(1-G(r)}=\eta.
\end{equation}
 This representation has a surprising connection to auction theory. Imagine that $G(\cdot)$ is the distribution of independent private values in a second-price auction and $r$ is the auction's minimum bid (reserve value). The probability of a particular bidder winning the item given that the bidder participates is exactly the expectation in \eqref{altet}.
\end{remark}

\paragraph{Firm's Decision.} To formulate firm $i$'s information design problem, consider a possible symmetric equilibrium in which all firms select atomless distribution of posteriors $G(\cdot)$, which is associated with reservation value $r$, visit probability $\eta$, and belief $\widetilde{\alpha}$.\footnote{We treat $G(\cdot)$ as atomless. With a type that always searches, this follows from  standard arguments. } As described above, the consumer randomizes the order of search uniformly, because she anticipates that all firms offer the same distribution of posteriors. Thus, $r$, $\widetilde{\alpha}$, and the inexperienced consumer's order of search are not affected if firm $i$ were to deviate from $G(\cdot)$. Therefore, after firm $i$ is visited and its posterior mean is realized, only the realization $v$ affects firm $i$'s expected payoff.

 To formulate the firm's information design problem, we must compute the probability of sale for each possible realized value of the posterior mean, $v$.\footnote{With a unit mass of consumers, total sales are equal to the probability of a sale to each consumer.}  Furthermore, the firm must condition on the consumer's visit, which conveys information both about the consumer's type, and about its rivals' realizations. Thus, the firm's expected payoff is 
 \begin{align*}
u(v)&\equiv\Pr(\text{sale}\mid\text{visit}\cap v)=
\widetilde{\alpha}\Pr(\text{sale}\mid\text{visit}\cap v\cap \text{inexp})+(1-\widetilde{\alpha})\Pr(\text{sale}\mid\text{visit}\cap v\cap\text{savvy})
\end{align*}
Because the savvy consumers visit all firms and buy from the firm with the largest realized value,
\begin{align*}
\Pr(\text{sale}\mid\text{visit}\cap v\cap\text{savvy})=\frac{\Pr(\text{sale} \cap  \text{visit}\mid v\cap\text{savvy})}{\Pr(\text{visit}\mid v\cap\text{savvy})}=\Pr(\text{sale}\mid v\cap\text{savvy})=G(v)^{n-1}.
\end{align*}
Intuitively, the savvy type's visit conveys no information to the firm about its rivals' values, because such a consumer always visits every firm. Thus, in order to sell to a savvy type, the firm must simply have the largest value among all firms (ex ante).

For the naive type, the situation is more subtle. Note first that if $v\geq r$, then 
\begin{equation*}
\Pr(\text{sale}\mid\text{visit}\cap v\cap\text{inexp})=1.
\end{equation*}
This observation is an immediate consequence of the naive consumer's search rule---upon visiting a firm, she stops her search immediately if and only if $v\geq r$. For $v<r$, however, the firm sells to the consumer if and only if it has the highest posterior mean realization: the consumer initially leaves, visits all other firms, and then returns. Thus, if $v<r$, then $\Pr(\text{sale}\cap\text{ visit}\mid v\cap\text{inexp})=G(v)^{n-1}$. In contrast to the savvy consumer, however, the naive consumer does not necessarily visit all firms, and a visit from a naive type conveys information about rivals' values; in particular, $\Pr(\text{visit}|v\cap\text{inexp})=\eta$.\footnote{Recall that the probability of being visited is independent of the firm's realized value, $v$.} Thus, with $v<r$, 
\begin{align*}
\Pr(\text{sale}\mid\text{visit}\cap v\cap\text{inexp})=\frac{\Pr(\text{sale}\cap\text{visit}\mid v\cap\text{inexp})}{\Pr(\text{visit}\mid v\cap\text{inexp})}=\frac{G(v)^{n-1}}{\eta}.
\end{align*} 

Combining these observations, if firm $i$ is visited and its realized posterior mean is $v$, its expected payoff is 
\begin{equation}
u(v)=
\begin{cases}
(\frac{\widetilde{\alpha}}{\eta}+(1-\widetilde{
\alpha}))G(v)^{n-1} & \text{if } v< r\\
\widetilde{\alpha}+(1-\widetilde{\alpha})G(v)^{n-1} & \text{if } v\geq r
\end{cases}
\label{eq_u}
\end{equation}
The key feature of a firm's payoff function $u(\cdot)$ is the upward jump at $r$, reflecting the discrete change in the search behavior of inexperienced consumers.\footnote{The size of the jump at $v=r$ is $\widetilde{\alpha}(1-G(r)^{n-1}/\eta)$. To see immediately that $G(r)^{n-1}/\eta<1$, note that $\eta$ is the average value of a decaying geometric sequence with last term $G(r)^{n-1}$.} As we highlight throughout, it is this discontinuity that creates incentives for firms to garble information.

Building on the previous logic, given a conjectured symmetric equilibrium in which all firms select $G(\cdot)$, the reservation value is $r$, the probability of visit is $\eta$,  and the belief is $\widetilde{\alpha}$, the set of firm $i$'s best responses solves the following maximization problem,
\begin{align}
\underset{\widehat{G}_i \in \text{MPC}(F)}{\max} \ \int_0^1 u(v) d\widehat{G}_i.
\label{eq_optim_G}
\end{align}
By implication, $G(\cdot)$ is an equilibrium if and only if it solves the optimization defined by \eqref{eq_u} and \eqref{eq_optim_G}, where $r$ satisfies \eqref{eq_r_j}, $\eta$ satisfies \eqref{eta}, and $\widetilde{\alpha}$ satisfies \eqref{alphatilde}.
Optimality conditions for such optimization problems are provided by \citet{dworczak2019simple}.

\subsection{Benchmarks}
\label{sec_benchmarks}

To glean some insight into the role of the discontinuity on a firm's disclosure strategy, it is instructive to briefly contrast the extremes of frictionless search ($\alpha=0$) and the  informational Diamond paradox ($\alpha=1$). The former corresponds to the model studied by \cite{hwang2019competitive} while the latter is discussed in \cite{au2023attraction}.   

\paragraph{Frictionless Search.} Consider first the case of only savvy consumers (i.e., $\alpha=0$). Then, the payoff given by eq. (\ref{eq_u}) is continuous and increasing in $v$. \cite{hwang2019competitive} establish the existence and uniqueness of a symmetric equilibrium. Furthermore, they show that when $F(\cdot)^{n-1}$ is convex, the equilibrium must be full disclosure. A related result appears in \citet{ivanov2013information} for disclosure strategies chosen from a family of rotation-ordered distributions. In particular, this author shows that full disclosure is the unique equilibrium when the number of firms is sufficiently large.

\begin{lemma}(Frictionless Search, \cite{hwang2019competitive}).
Let $\alpha=0$. If $F(\cdot)^{n-1}$ is convex over its entire support, then the unique symmetric equilibrium is full disclosure, $G(\cdot)=F(\cdot)$.
\label{lemma_alpha_0}
\end{lemma}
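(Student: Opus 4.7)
The plan is to establish that $G = F$ is an equilibrium and then show it is the unique one.

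For existence, I would specialize the firm's problem \eqref{eq_optim_G} to $\alpha = 0$: the post-visit payoff at realized posterior mean $v$ reduces to $u(v) = G(v)^{n-1}$ (savvy consumers visit every firm and buy the best), and the firm maximizes $\int_0^1 G(v)^{n-1}\, d\hat G(v)$ over $\hat G \in \text{MPC}(F)$. Setting the conjectured equilibrium $G = F$, the integrand $F^{n-1}$ is convex by hypothesis, so the mean-preserving-contraction version of Jensen's inequality gives $\int F^{n-1}\, d\hat G \leq \int F^{n-1}\, dF = 1/n$ for every feasible $\hat G$, with equality at $\hat G = F$. Hence full disclosure is a best response to itself.

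For uniqueness, I would argue that any symmetric equilibrium $G$ must coincide with $F$. First, every symmetric equilibrium delivers payoff $\int G^{n-1}\, dG = 1/n$ per firm (firms split the savvy segment uniformly); and since $F$ is trivially an MPC of itself, the feasible deviation $\hat G = F$ forces $\int G^{n-1}\, dF \leq 1/n$. The core step is to establish the converse bound $\int G^{n-1}\, dF \geq 1/n$, with equality only when $G = F$; combined with the previous inequality this pins down $G = F$. To obtain the converse, I would invoke the duality characterization in \cite{dworczak2019simple}: optimality of $G$ supplies a convex ``price function'' $\varphi$ with $\varphi \geq G^{n-1}$ on $[0,1]$, $\varphi = G^{n-1}$ on $\text{supp}(G)$, and $\int \varphi\, dF = 1/n$. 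Atomlessness of $G$ together with the convexity of $F^{n-1}$ is then used to argue that $\text{supp}(G) = [0,1]$ and that $G^{n-1}$ is itself convex. A second application of the MPC--convex inequality---now to $u = G^{n-1}$ and the MPC pair $(G,F)$---gives $\int G^{n-1}\, dG \leq \int G^{n-1}\, dF$, and chasing the equalities forces $G = F$.

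The main obstacle is the uniqueness step. In particular, ruling out equilibria with $\text{supp}(G) \subsetneq [0,1]$ and handling the possibility that $G^{n-1}$ is only weakly convex both require care: without convexity of $F^{n-1}$, the firm could profitably pool intervals of valuations even against the prior itself, so this hypothesis is indispensable. The detailed fixed-point argument appears in \cite{hwang2019competitive}, which I would follow.
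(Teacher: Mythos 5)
You should first note that the paper does not actually prove this lemma: it is imported from \cite{hwang2019competitive}, and the text surrounding it offers only an intuition --- existence via convexity of $F^{n-1}$, and for uniqueness the observation that any mean-preserving contraction of $F$ creates a convex stretch of the payoff within the contraction's support, where a firm can profitably spread mass. Your existence half coincides exactly with that argument and is correct: with $\alpha=0$ the payoff against $G=F$ is the convex function $F^{n-1}$, so every $\hat G\in\text{MPC}(F)$ yields at most $\int F^{n-1}\,dF=1/n$, and full disclosure is a best response to itself.

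Your uniqueness half, however, takes a different route (payoff accounting plus the single deviation $\hat G=F$ plus duality), and as a self-contained argument it has a genuine gap: the converse bound $\int G^{n-1}\,dF\ge 1/n$ requires that $G^{n-1}$ be convex on all of $[0,1]$ and that the support of $G$ reach the top, and these are precisely the hard structural facts --- you assert them rather than derive them. The deviation to $F$ alone cannot do the work. For instance, with $F$ uniform and $n=2$, take $G$ uniform on $[1/4,3/4]$: then $\int G\,dF=1/2=1/n$ exactly, so your accounting chain is silent, yet $G$ is not an equilibrium --- it is broken only by a deviation that concentrates mass at the top of $G$'s support (exploiting the concave kink where $G^{n-1}$ hits $1$), i.e., by the kind of within-support spread/rearrangement the paper's heuristic and its appendix machinery (Lemmas A1--A2, Claims 1--4 of Proposition A1) are built around. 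Also, "chasing the equalities" needs the MPC constraint $\int_0^z(F-G)\,dv\ge 0$ together with convexity of $F^{n-1}$ (an affine stretch of $G^{n-1}$ is a chord of $F^{n-1}$, hence lies weakly above it, which is incompatible with equal conditional means unless $G=F$); equality in Jensen-type inequalities alone does not force $G=F$, especially in the weakly convex (affine) boundary case. Since you explicitly defer the detailed argument to \cite{hwang2019competitive} --- as the paper itself does --- citing it is acceptable, but the chain you sketch would not close on its own without those structural claims being established by spread-based deviation arguments.
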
   

Obviously, if all other firms fully disclose, then the probability of winning, $F(\cdot)^{n-1}$, is convex. Consequently, a firm benefits from spreading maximally, and its best response is to disclose fully. The argument for uniqueness is more involved. Roughly speaking, any mean-preserving contraction of $F(\cdot)$ generates a payoff function that has a convex part within the support of the contraction, and the firm can benefit by spreading mass within this region.

\paragraph{Informational Diamond Paradox.} The result above contrasts sharply with the scenario where all consumers are inexperienced. When $\alpha=1$, the equilibrium payoff given by eq. (\ref{eq_u}) jumps to $1$ at $r$. If $r$ were an exogenous threshold, the firm's problem would resemble well-known models in the literature  where the optimal disclosure takes a simple form: partial disclosure of low values and all remaining mass concentrated on $r$ \citep{KG2011,dworczak2019simple}.\footnote{If $r<E[v]$, then the optimal disclosure is a degenerate distribution at $E[v]$ since revealing nothing would prompt an immediate purchase. Otherwise, the firm discloses some lower posterior values to maintain Bayes-Plausibility.} When $r$ is endogenous and governed by the firms' disclosure strategies, the firms do better. In particular, the equilibrium information disclosure ensures that all posterior realizations exceed the reserve value, giving rise to an informational Diamond paradox in which firms provide no useful information, and consumers abandon search entirely after the first draw. 

\begin{lemma}(Informational Diamond Paradox, \citet{W2021,au2023attraction}). Let $\alpha=1$. Suppose the naive consumer's search cost is strictly positive. Then, any symmetric equilibrium disclosure $G(\cdot)$ puts zero mass on $v<r^*$, where $r^*=\mu-s$, and consumers visit only one firm.
\label{lemma_alpha_1}
\end{lemma}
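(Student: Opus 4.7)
The plan is to show that any symmetric equilibrium distribution $G$ must place zero mass below the reservation value $r$; once this is established, \eqref{eq_r_j} pins down $r=\mu-s=r^{*}$, and since every posterior mean realization then lies weakly above $r$, inexperienced consumers stop at the first firm they visit. Setting $\widetilde{\alpha}=1$ in \eqref{eq_u} gives $u(v)=G(v)^{n-1}/\eta$ on $[0,r)$ and $u(v)=1$ on $[r,1]$. Since $\eta$ is the arithmetic mean of the geometric sequence $1,G(r),\ldots,G(r)^{n-1}$, and $s>0$ forces $G(r)<1$, we have $\eta>G(r)^{n-1}$, so $u$ exhibits a strict upward jump of size $1-G(r)^{n-1}/\eta>0$ at $r$. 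A direct integration also yields $E_{G}[u]=G(r)^{n}/(n\eta)+1-G(r)=1/(n\eta)$, which equals $1$ if and only if $G(r)=0$.

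The core step is to rule out $G(r)>0$ by exhibiting a profitable deviation. Suppose for contradiction that $G(r)>0$. Atomlessness of $G$ together with $G(r)>0$ and $s>0$ guarantees that $G$ has positive mass on both $[0,r)$ and $(r,1]$, so one can pick small intervals $N_{L}\subset[0,r)$ and $N_{H}\subset(r,1]$ with $G(N_{L}),G(N_{H})>0$. Construct the candidate deviation $\tilde{G}$ by removing fractions $\theta_{L},\theta_{H}\in(0,1]$ of the $G$-mass on $N_{L}$ and $N_{H}$, respectively, and relocating the removed mass to a single atom at $r$. Because $v-r$ takes strictly opposite signs on $N_{L}$ and $N_{H}$, one can choose $\theta_{L},\theta_{H}\in(0,1]$ so that $\theta_{L}\int_{N_{L}}(r-v)\,dG=\theta_{H}\int_{N_{H}}(v-r)\,dG$; this equation is exactly the statement that the conditional mean of the removed mass equals $r$. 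Hence the relocation is a mean-preserving pooling to the conditional mean, so $\tilde{G}$ is MPC of $G$, and since $G\in\mathrm{MPC}(F)$ by Bayes plausibility, $\tilde{G}\in\mathrm{MPC}(F)$ is feasible for the firm.

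The payoff comparison is immediate from the jump in $u$. On $N_{H}$ we have $u=1=u(r)$, so that portion contributes the same amount under $\tilde{G}$ as under $G$. On $N_{L}$, however, $u(v)=G(v)^{n-1}/\eta<1=u(r)$ strictly, so the deviation raises expected payoff by $\theta_{L}\int_{N_{L}}\bigl(1-G(v)^{n-1}/\eta\bigr)dG>0$, contradicting optimality of $G$. Therefore $G(r)=0$, and substituting into \eqref{eq_r_j} gives $\mu-r=s$, i.e., $r=r^{*}$. Every realized posterior mean is then at least $r^{*}$, so inexperienced consumers stop after the first visit.

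The main delicacy lies in setting up $\tilde{G}$: one must verify both that suitable $\theta_{L},\theta_{H}\in(0,1]$ exist and that the resulting distribution lies in $\mathrm{MPC}(F)$. The first follows from the opposite-sign structure of $v-r$ on $N_{L}\cup N_{H}$ together with the positive $G$-mass on each neighborhood; the second reduces to the chain $\tilde{G}\in\mathrm{MPC}(G)\subseteq\mathrm{MPC}(F)$, where $\tilde{G}\in\mathrm{MPC}(G)$ holds because the only modification is a mean-preserving pooling. Once this bookkeeping is in place, the jump of $u$ at $r$ mechanically delivers the strict improvement.
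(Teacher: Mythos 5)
Your core argument is sound and is essentially the same route the paper relies on: the paper does not prove this lemma from scratch (it credits \citet{W2021} and \citet{au2023attraction} and sketches the logic in the text and in the ``Deviations'' paragraph, whose footnote observes that the $\alpha=1$ case is simpler precisely because $u(\cdot)$ is flat at $1$ above $r$). Your construction is that argument made explicit: since $u$ jumps to the constant $1$ at $r$, pooling a bit of below-$r$ mass with a compensating bit of above-$r$ mass at the single point $r$ is a mean-preserving contraction of $G$ (hence of $F$, so feasible), costs nothing on the relocated above-$r$ mass, and strictly gains on the relocated below-$r$ mass; combined with \eqref{eq_r_j} and $s>0$, this forces all mass weakly above $r$ and then $r=\mu-s$.

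The one step that does not hold as written is the appeal to atomlessness of $G$. The paper's justification for treating $G$ as atomless explicitly invokes ``a type that always searches,'' i.e., the savvy consumers, and is therefore unavailable at $\alpha=1$; indeed the equilibria this lemma must cover generically contain atoms (e.g., the no-disclosure point mass at $\mu$). In your proof, atomlessness is what converts the hypothesis $G(r)>0$ into ``positive mass strictly below $r$''; with atoms allowed, $G(r)>0$ could be an atom exactly at $r$, and your argument would then not rule out atomic equilibria with mass below $r$ --- so, as stated, the lemma would remain unproven on a relevant class of distributions. The repair is immediate and costless: take the contradiction hypothesis to be positive mass strictly below $r$ (the literal negation of the conclusion), note that \eqref{eq_r_j} with $s>0$ already forces positive mass strictly above $r$ (your observation that $G(r)<1$), and observe that the only payoff fact you need below $r$ is $u(v)<1$, which holds with or without atoms, since a sale at $v<r$ requires every rival draw to land below $r$, an event of probability strictly less than one; the exact expression $G(v)^{n-1}/\eta$ in \eqref{eq_u} was itself derived for atomless rivals, so it is cleaner not to lean on that formula beyond this inequality. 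With that adjustment, your proof is complete and matches the paper's intended argument.
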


 The informational Diamond paradox can be understood intuitively. Consider a possible symmetric equilibrium when all consumers are inexperienced. If the consumers visit more than one firm, then the posterior distribution must contain values both strictly below and above the reserve value to satisfy the search equation (eq. \ref{eq_r_j}). However, such a disclosure strategy cannot be a firm's best response. Above $r$, the  consumer's reservation value, the firm's payoff is one. From the familiar logic of information design, if the firm allocates mass below the jump at $r$, then it also must shift all mass allocated above $r$ down as far as possible, i.e., to $r$ (see, for example, \citet[Section V.B]{KG2011}, \citet[fig.1]{dworczak2019simple}). Therefore, if there is positive mass strictly below $r$, then there cannot be positive mass strictly above.
 \footnote{This logic also applies with cost heterogeneity among consumers, where the payoff function is 1 above the reservation value of the consumer with lowest cost.}

 \paragraph{Deviations.} It is also helpful to see why the equilibrium departs from these benchmarks. Consider the benchmark with $\alpha=1$, and suppose that the informational Diamond paradox is severe, i.e., signals are uninformative. In this case, all mass is concentrated at $\mu$, and all firms tie in the savvy market. This would obviously allow for a beneficial deviation: introducing an arbitrarily small amount of mass below $\mu$, allows a firm to shift its mass point slightly higher, effectively breaking all ties in its favor. This simple argument suggests that the existence of the savvy consumers tempers the informational Diamond paradox: while it might be possible to have an equilibrium in which all mass is above $r$, competition for savvy consumers ensures that such an equilibrium is informative to some extent.

 \begin{figure}
\begin{minipage}{0.5\textwidth}
\begin{tikzpicture}[scale=0.6]

\draw[color=blue, domain=0:4, line width=1pt] 
    plot (\x, {0.1*\x^2});

\draw[color=blue, domain=4:10, line width=1pt] 
    plot (\x, {0.005*\x^3 + 10 - 0.005*10^3});

\draw[line width=0.5pt, black, dotted] 
    (4, {0.005*4^3 + 10 - 0.005*10^3}) -- (4, 0);

\draw[line width=0.5pt] 
    (0, 0) -- (0, 10.5);  

\draw[line width=0.5pt] 
    (0, 0) -- (10.5, 0);  

\fill (0, 0) node[left] {\footnotesize{$0$}};
\fill (0, 10) node[left] {\footnotesize{$1$}};

\fill (4, 0) node[below] {\footnotesize{$r$}};

\fill (8, 9) node[above] {\footnotesize{$u(\cdot)$}};

\end{tikzpicture}\end{minipage}
\begin{minipage}{0.5\textwidth}
\begin{tikzpicture}[scale=0.6]

\draw[color=blue, domain=0:4, line width=1pt] 
    plot (\x, {0.1*\x^2});

\draw[color=blue, domain=4:10, line width=1pt] 
    plot (\x, {0.005*\x^3 + 10 - 0.005*10^3});

\draw[color=red, line width=1pt] 
    (2, {0.1*2^2}) -- (4, {0.005*4^3 + 10 - 0.005*10^3}); 

\draw[color=red, line width=1pt] 
    (4, {0.005*4^3 + 10 - 0.005*10^3}) -- (8, {0.005*8^3 + 10 - 0.005*10^3});

\draw[line width=0.5pt, black, dotted] 
    (2, {0.1*2^2}) -- (2, 0);

\draw[line width=0.5pt, black, dotted] 
    (4, {0.005*4^3 + 10 - 0.005*10^3}) -- (4, 0);

\draw[line width=0.5pt, black, dotted] 
    (8, {0.005*8^3 + 10 - 0.005*10^3}) -- (8, 0);

    \draw[line width=3pt, red, opacity=0.3] (2,0) -- (8,0);
\draw[->] (2.3,-0.4) -- (3.7,-0.4);
\draw[->] (7.7,-0.4) --(4.3,-0.4);

\draw[line width=0.5pt] 
    (0, 0) -- (0, 10.5);  

\draw[line width=0.5pt] 
    (0, 0) -- (10.5, 0);  

\fill (0, 0) node[left] {\footnotesize{$0$}};
\fill (0, 10) node[left] {\footnotesize{$1$}};

\fill (2, 0) node[below] {\footnotesize{$\underline{v}$}};
\fill (4, 0) node[below] {\footnotesize{$r$}};
\fill (8, 0) node[below] {\footnotesize{$\overline{v}$}};

\fill (3, 5) node[below] {\footnotesize{$h(\cdot)$}};
\fill (8, 9) node[above] {\footnotesize{$u(\cdot)$}};

\end{tikzpicture}
\end{minipage}

\caption{\label{fig_h_v}\textit{Deviating From Full Disclosure}. }
\end{figure}
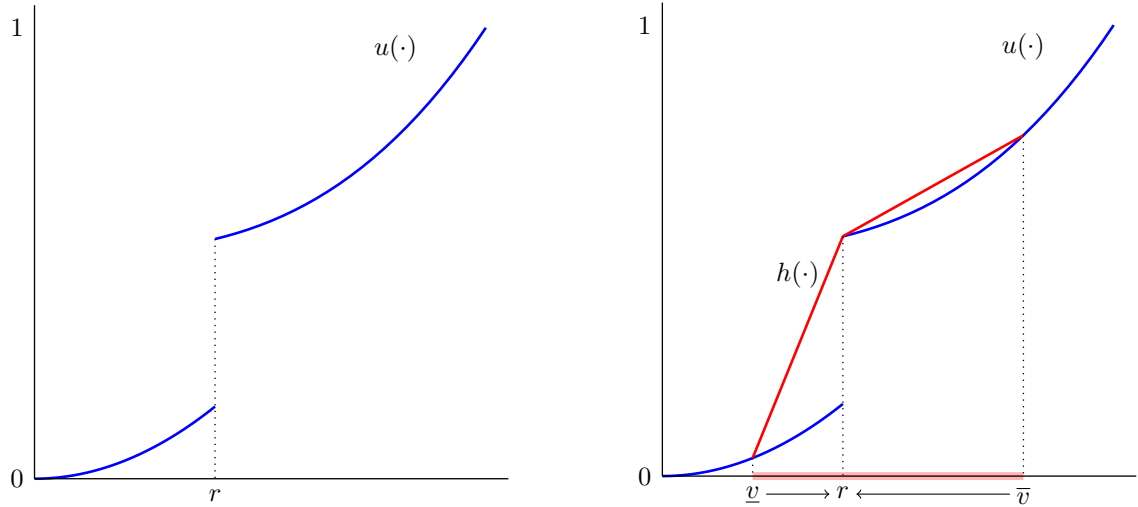

 Next, suppose that all firms fully disclose, as in frictionless search. Thus, the payoff function is strictly increasing and piecewise convex, with a jump up at $r$, as illustrated in the left panel of Figure \ref{fig_h_v}. This figure suggests that a firm could benefit by adjusting its disclosure. In particular, by shifting probability mass from realizations just below the reservation value to to just above (while respecting Bayes-Plausibility) the firm can generate a large increase in the probability of sale to inexperienced consumers, at the cost of a relatively small distortion. This logic suggests that manipulating information near the reservation value might generate a beneficial deviation. 

To see the formal argument, first note that for any $\underline{v}<r$ that is sufficiently close to $r$, there exists $\overline{v}>r$ such that the conditional expectation $E_F[v|\underline{v}<v<\overline{v}]=r$. Next, construct an auxiliary function $h(\cdot)$, which connects points $\{(\underline{v},u(\underline{v}));(r,u(r));(\overline{v},u(\overline{v}))\}$ with line segments, as illustrated in Figure \ref{fig_h_v}. By construction, $h(\cdot)$ lies (weakly) above $u(\cdot)$ (this follows from convexity of $u(\cdot)$). Furthermore, the slope of $h(\cdot)$ to the right of $r$ is finite ($u^{\prime}(\overline{v})<\infty$), but the slope to the left of $r$ is arbitrary large for $\underline{v}$ close to $r$. Therefore, with $\underline{v}$ sufficiently close to $r$, the auxiliary function $h(\cdot)$ is concave. Then, collapsing all mass inside the interval $I\equiv[\underline{v},\overline{v}]$ to its mean, $r$, increases the value $E[h(v)|v\in I]$. Furthermore, because $h(r)=u(r)$, after this deviation $E[h(v)|v\in I]=E[u(v)|v\in I]$. Therefore, this deviation also improves the expected value of $u(\cdot)$ over $I$.\footnote{A similar argument is used by \citet{W2021} and \citet{HH2025} to prove the informational Diamond paradox. These arguments are slightly simpler than ours, because in their settings the payoff $u(\cdot)$ is flat above $r$. In addition, the construction of the auxiliary function $h(\cdot)$ is reminiscent of the multiplier (or ``price function'') in \citet{dworczak2019simple}. Clarifying and elaborating on this connection could be interesting for future work.}

Though we have focused on a deviation from full disclosure, similar logic applies to any discontinuous payoff function. In our setting, the discrete change in the search strategy of the inexperienced type at the reservation value always generates a discontinuity in the payoff function. Therefore, this distortion plays a key role in shaping the equilibrium.

\section{Equilibrium Information Disclosure}
\label{Info_disclosure}

\subsection{Exogenous Reservation Value}\label{sec_exog} As an intermediate step in our analysis, we characterize the equilibrium for exogenous reservation value $r$. Thus, in this benchmark, we relax the restriction that $r$ satisfies \eqref{eq_r_j}, while maintaining all other equilibrium conditions (the firm's payoff $u(\cdot)$ is endogenous).\footnote{This part of our characterization is reminiscent of frictionless search with an exogenous outside option, analyzed in \citet{hwang2019competitive}. Nevertheless, there are crucial differences. In particular, in \citet{hwang2019competitive} the payoff below the reservation value is always 0, while in our model it  is determined endogenously by the equilibrium distribution $G(\cdot)$ (consult (\ref{eq_u})). This endogeneity introduces a number of complications not present in \citet{hwang2019competitive}. Furthermore, our full model (with endogenous reservation value) produces substantially different findings, which we discuss in more detail later.} In the next subsection, we build on these results to characterize the equilibrium of the full model.

We now examine the overall shape of equilibrium disclosure. In particular, we show that the equilibrium distribution must conform to a specific structure, which effectively balances the competing incentives originating in the savvy and inexperienced markets.

\begin{proposition}\label{Gstruc}(Structure of Disclosure).
Suppose a symmetric equilibrium $G$ exists, and let $r$ be the exogenous reservation value. Values $\beta>0$, $v_L\in [0,r)$, and $v_H\in (r,1]$ exist such that (i) $G\in \text{MPC}(F)$, (ii) $G(v_H)=F(v_H)$, and (iii) $G$ takes the following form: 

\begin{equation}
G(v)=
\begin{cases}
F(v) & \text{for } v\leq  v_L\\
F(v_L) & \text{for } v\in (v_L,r)\\
\min\left\{\left(F(v_L)^{n-1}+\beta(v-r)\right),1\right\}^{\frac{1}{n-1}} & \text{for } v\in[r,v_H]\\
F(v) & \text{for } v\in (v_H,1].
\end{cases}
\label{eq_G}
\end{equation}
\label{prop_G_shape}
\end{proposition}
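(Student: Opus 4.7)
My plan is to characterize $G$ through the optimality conditions for maximizing expected values over mean-preserving contractions, developed in \citet{dworczak2019simple}. In the conjectured symmetric equilibrium, each firm's $G$ is a best response against the very payoff function $u(\cdot)$ it induces via \eqref{eq_u}; by \citet{dworczak2019simple}, this requires the existence of a convex ``price function'' $p:[0,1]\to\mathbb{R}$ such that (a) $p(v)\geq u(v)$ everywhere, (b) $p(v)=u(v)$ on $\mathrm{supp}(G)$, and (c) $p$ is affine on each maximal interval on which $G$ is constant (so that $\int p\, dF=\int p\, dG$). I would use these three conditions to pin down each of the four pieces in \eqref{eq_G}.

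First, I would record that $u$ has a single upward jump at $r$, of size $\widetilde{\alpha}(1-G(r)^{n-1}/\eta)>0$, and is otherwise continuous and monotone; moreover, wherever $G=F$ the payoff is piecewise convex, by the maintained assumption that $F^{n-1}$ is convex. Any convex majorant $p$ of $u$ must therefore interpolate linearly across the jump at $r$, which forces $p>u$ strictly on some left-neighborhood $(v_L,r)$ of $r$, with $v_L\in[0,r)$. By condition (b), $\mathrm{supp}(G)$ excludes this neighborhood, so $G$ is flat: $G(v)=F(v_L)$ for $v\in(v_L,r)$. Next, since no mass sits on $(v_L,r)$, the $F$-mass originally there is relocated into a pooling region $[r,v_H]$ on the other side of the jump. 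Condition (c) then requires $u$ to be affine on $[r,v_H]$; using the expression $u(v)=\widetilde{\alpha}+(1-\widetilde{\alpha})G(v)^{n-1}$ for $v\geq r$, this forces $G(v)^{n-1}$ itself to be affine, so
\[
G(v)^{n-1}=F(v_L)^{n-1}+\beta(v-r)
\]
for some $\beta>0$, where the intercept uses the continuity $G(r)=F(v_L)$. The $\min\{\cdot,1\}$ truncation covers the boundary case in which this line would exceed $1$ before $v_H$. On the tails $[0,v_L]$ and $[v_H,1]$, I would verify that $G=F$ is compatible with all three conditions by taking $p=u$ on those intervals: convexity of $F^{n-1}$ makes $p=u$ automatically convex there. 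Continuity of $G$ at $v_H$, together with $G=F$ on $(v_H,1]$, gives $G(v_H)=F(v_H)$, while $\beta,v_L,v_H$ themselves are pinned down by the MPC mean-preservation identity $\int_{v_L}^{v_H}(F-G)\,dv=0$ and the slope-monotonicity of $p$ at the splice points $v_L$ and $v_H$.

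The main obstacle is ruling out richer equilibrium shapes with several gaps, pooling regions not anchored at $r$, or non-affine interior behavior. Here the convexity of $F^{n-1}$ is essential: it guarantees that wherever $u$ is continuous, $p=u$ is already a valid convex price function locally, so truthful disclosure is compatible with the optimality conditions on any such piece. Consequently, the only forced deviation from $G=F$ is the minimal one needed to smooth the single discontinuity of $u$ at $r$, which produces exactly the four-piece structure in \eqref{eq_G}. Checking that this is the unique admissible shape (rather than simply \emph{one} admissible shape) therefore reduces to a careful case analysis of where condition (a) can bind and where (c) forces affineness, both of which are localized around the jump at $r$.
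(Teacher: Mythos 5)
Your strategy---read the four-piece structure directly off the Dworczak--Martini price-function conditions applied to the equilibrium best response---is in the spirit of the machinery the paper eventually uses, but it is not how the paper proves this proposition, and as written it has two genuine gaps. First, you invoke the \emph{necessity} of a convex price function for an arbitrary symmetric equilibrium $G$. Necessity in \citet{dworczak2019simple} requires regularity of the objective, and here the objective is endogenous: $u(\cdot)$ is (a scaling of) $G(\cdot)^{n-1}$ below $r$ and $\widetilde{\alpha}+(1-\widetilde{\alpha})G(\cdot)^{n-1}$ above $r$. For an arbitrary candidate equilibrium you cannot assert that $u$ is regular (finitely many convex or affine pieces with a single jump); the paper verifies regularity only \emph{after} the structure is in hand, and it establishes the structure itself (Proposition \ref{proposition_A1}, Claims 1--8) by explicit profitable mean-preserving spreads and contractions: convexity of $u$ on any strictly increasing region, bounded slope, the gap below $r$, affineness of $u$ just above $r$, and so on. Relatedly, your claim that on the tails ``convexity of $F^{n-1}$ makes $p=u$ automatically convex there'' is circular: convexity of $u$ on a region is governed by $G^{n-1}$, not $F^{n-1}$, and $G=F$ on the tails is precisely what you are trying to prove; the paper gets it from the deviation arguments (Claims 1, 2 and 7), not from the prior.

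Second, your case analysis misses the configuration that is hardest to exclude: an interval ending at $v_L$ on which $G^{n-1}$ is affine and lies strictly \emph{below} $F^{n-1}$ (the $v_\dagger<v_L$ shape allowed by the paper's Proposition \ref{proposition_A1}, which is a live possibility because $F^{n-1}$ is only weakly convex). Such a shape is locally compatible with your conditions (a)--(c), so nothing in your argument rules it out. Eliminating it is exactly the content of the paper's Lemma \ref{lemma_full_dis}, and it is not a local observation: Bayes plausibility forces the slope of the affine piece above $r$ to exceed $\beta^*$, and a comparison of the candidate multiplier with the tangent of $F^{n-1}$ at $v_L$ (using $\eta\geq\widetilde{\eta}$ and $F(v_L)>G(v_L)$) shows the multiplier must kink upward at $v_L$, contradicting the affineness over $(v_\dagger,v_H)$ that (DM4) requires. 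On that point, your statement of condition (c) is also off: what (DM4) forces is affineness of the multiplier over the entire interval on which $G$ contracts $F$, i.e.\ $[v_L,v_H]$, not merely over maximal intervals on which $G$ is constant. Without these ingredients your argument shows at most that the four-piece shape is \emph{compatible} with optimality, not that every symmetric equilibrium must take this form.
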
  

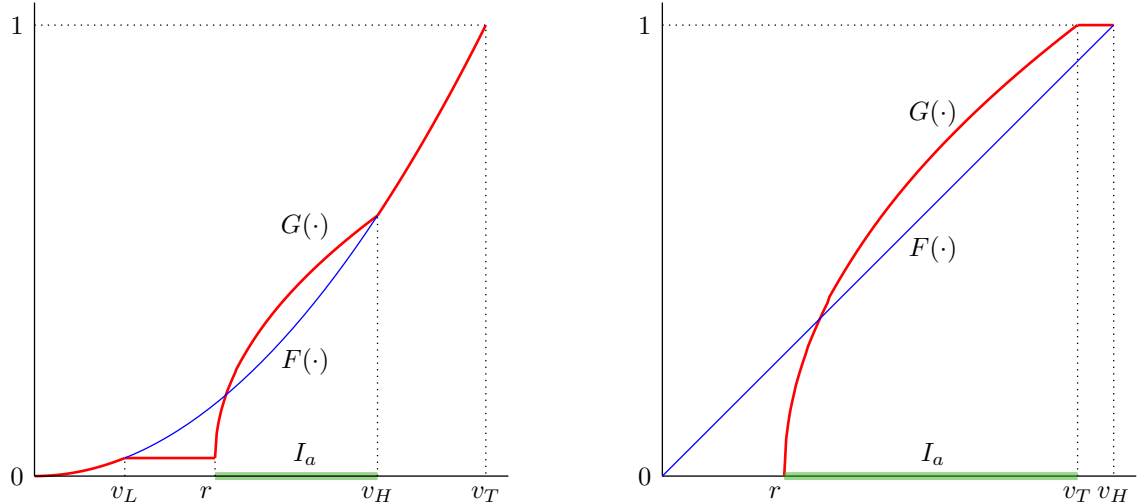
\begin{figure}
\begin{minipage}{0.5\textwidth}
\begin{tikzpicture}[scale=0.6]

\pgfmathsetmacro{\vL}{2}
\pgfmathsetmacro{\r}{4}
\pgfmathsetmacro{\vH}{7.6}

\draw[color=red, domain=0:\vL, line width=1pt] 
    plot (\x, {0.1*\x^2});

\draw[color=red, line width=1pt] 
    (\vL, {0.1*\vL^2}) -- (\r, {0.1*\vL^2});

\draw[color=red, domain=\r:\vH, line width=1pt,samples=100] 
    plot (\x, {0.1*\vL^2 + 0.1*(\vH^2 - \vL^2)*((\x - \r)/(\vH - \r))^0.5});

\draw[color=red, domain=\vH:10, line width=1pt] 
    plot (\x, {0.1*\x^2});

\draw[color=blue, domain=\vL:\vH, line width=0.5pt] 
    plot (\x, {0.1*\x^2});

\draw[line width=3pt, green, opacity=0.5] 
    (\r, 0) -- (\vH, 0);

\draw[line width=0.5pt] (0,0) -- (0,10.5);     
\draw[line width=0.5pt] (0,0) -- (10.5,0);     
\draw[line width=0.5pt, dotted] (0,10) -- (10,10) -- (10,0);  

\draw[line width=0.5pt, dotted] (\r,0) -- (\r, {0.1*\vL^2});
\draw[line width=0.5pt, dotted] (\vL,0) -- (\vL, {0.1*\vL^2});
\draw[line width=0.5pt, dotted] (\vH,0) -- (\vH, {0.1*\vH^2});

\fill (\r-0.2,0) node[below] {\footnotesize{$r$}};
\fill (\vL,0) node[below] {\footnotesize{$v_L$}};
\fill (\vH,0) node[below] {\footnotesize{$v_H$}};
\fill (10,0) node[below] {\footnotesize{$v_T$}};
\fill (0,0) node[left] {\footnotesize{$0$}};
\fill (0,10) node[left] {\footnotesize{$1$}};
\fill (6,0) node[above] {\footnotesize{$I_{a}$}};
\fill (6,5) node[above] {\footnotesize{$G(\cdot)$}};
\fill (6,2) node[above] {\footnotesize{$F(\cdot)$}};

\end{tikzpicture}
\end{minipage}
\begin{minipage}{0.5\textwidth}
\begin{tikzpicture}[scale=0.6]

\pgfmathsetmacro{\vL}{0}
 \pgfmathsetmacro{\r}{2.7}
 \pgfmathsetmacro{\a}{1}

\pgfmathsetmacro{\vH}{10.45}

\draw[line width=0.5pt] (0,0) -- (0,10.5);     
\draw[line width=0.5pt] (0,0) -- (10.5,0);     
\draw[line width=0.5pt, dotted] (0,10) -- (10,10) -- (10,0);  

\draw[line width=0.5pt, dotted] (\r,0) -- (\r, {0.1*\vL^2});
\draw[line width=0.5pt, dotted] (\vL,0) -- (\vL, {0.1*\vL^2});
 \draw[line width=0.5pt, dotted] (9.2,0) -- (9.2,10);

\draw[color=red, domain=\r:9.2, line width=1pt,samples=150] 
    plot (\x, {0.1*\vL^2 + 0.1*(\vH^2 - \vL^2)*((\x - \r)/(\vH - \r))^0.5});

\draw[color=red, line width=1pt, smooth] 
    (9.2,10) -- (10,10);

\draw[color=blue, domain=0:10, line width=0.5pt] 
    plot (\x, {10*(0.1*\x)^\a});

\draw[line width=3pt, green, opacity=0.5] 
    (\r, 0) -- (9.2, 0);

\fill (\r-0.2,0) node[below] {\footnotesize{$r$}};
\fill (9.2,0) node[below] {\footnotesize{$v_T$}};
\fill (10,0) node[below] {\footnotesize{$v_H$}};
\fill (0,0) node[left] {\footnotesize{$0$}};
\fill (0,10) node[left] {\footnotesize{$1$}};
\fill (6,0) node[above] {\footnotesize{$I_{a}$}};
 \fill (6,7.5) node[above] {\footnotesize{$G(\cdot)$}};
 \fill (6,4.5) node[above] {\footnotesize{$F(\cdot)$}};

\end{tikzpicture}
\end{minipage}

\caption{\label{fig_G_star}\textit{Structure of Disclosure.} $G(\cdot)$ red, $F(\cdot)$ blue. Left panel: Disclosure at top and bottom ($v_L>0$, $v_H<1$). Right panel: no disclosure at the top or bottom ($v_L=0$, $v_H=1$). }
\end{figure}

Proposition \ref{Gstruc} reveals that the distortion from full disclosure always manifests in an interval $(v_L,v_H)$ around the reservation value. Furthermore, this distortion may extend to the extremes ($v_L=0$ and $v_H=1$ are possible), but if it does not, then sufficiently high or low valuations are disclosed accurately. Figure \ref{fig_G_star} depicts the two polar cases of disclosure at both extremes and no disclosure of extreme values. 

Examining Figure \ref{fig_G_star} gives a better understanding of the required structure. From \eqref{eq_G}, an interval of valuations below $v_L$ is disclosed accurately to the consumer. Thus, we refer to $v_L$ as the \textit{lower disclosure threshold}. With $v_L=0$, no such interval exists, and there is \textit{no disclosure at the bottom}. The interpretation of the second threshold $v_H$ is more nuanced. Formally, $v_H$, is the second point above the reservation value at which $G(\cdot)$ and $F(\cdot)$ cross, which we refer to as the \textit{contact point}.\footnote{As a consequence of $G\in\text{MPC}(F)$, these distributions must cross twice above $r$.}  When $v_H<1$ (as in the left panel), all values above $v_H$ are disclosed truthfully, and we refer to $v_H$ as the \textit{upper disclosure threshold.} In contrast, when $v_H=1$ (as in the right panel), no values above $r$ are disclosed truthfully, and there is \textit{no disclosure at the top.} In either case, we denote the top of the support of $G(\cdot)$ by $v_T$. Evidently, with disclosure at the top $v_H<v_T=1$ and with no disclosure at the top, $v_T\leq v_H=1$. Next, notice that the equilibrium structure requires that $G(\cdot)^{n-1}$ is affine over interval $I_a\equiv [r,\min\{v_H,v_T\}]$, which we refer to as the \textit{affine interval}. By implication, $G(\cdot)$ is concave over $I_a$. The slope of $G(\cdot)^{n-1}$ over the affine interval is denoted $\beta$.

\begin{figure}
\begin{minipage}{0.5\textwidth}
\begin{tikzpicture}[scale=0.6]

\pgfmathsetmacro{\al}{0.1}
\pgfmathsetmacro{\et}{0.4}
\pgfmathsetmacro{\vL}{2}
\pgfmathsetmacro{\r}{4}
\pgfmathsetmacro{\vH}{7.6}

\draw[line width=0.5pt] (0,0) -- (0,10.5);
\draw[line width=0.5pt] (0,0) -- (10.5,0);
\draw[dotted, line width=0.5pt] (0,10) -- (10,10) -- (10,0);

\draw[dotted, line width=0.5pt] (\vL,0) -- (\vL,{0.1*\vL^2});
\draw[dotted, line width=0.5pt] (\r,0) -- (\r,{0.1*\vL^2});
\draw[dotted, line width=0.5pt] (\vH,0) -- (\vH,{0.1*\vH^2});

\draw[line width=3pt, green, opacity=0.5] (\r,0) -- (\vH,0);
\fill (6,0) node[above] {\footnotesize{$I_{a}$}};

\draw[teal, line width=1pt] (\vL,{0.1*(\al/\et + 1 - \al)*\vL^2}) -- (\r,{0.1*(\al/\et + 1 - \al)*\vL^2});
\draw[teal, line width=1pt, dashed] 
  (\r,{0.1*(\al/\et + 1 - \al)*\vL^2}) -- 
  (\r,{0.1*(\al/\et + 1 - \al)*\vL^2 + 
        (10*\al + 10*(1-\al)*(0.1*\vH)^2 - 0.1*(\al/\et + 1 - \al)*\vL^2)/(\vH - \vL)*(\r - \vL)});

\draw[teal, line width=1pt, domain=0:\vL] 
  plot (\x,{0.1*(\al/\et + 1 - \al)*\x^2});

\draw[teal, line width=1pt, domain=\r:\vH] 
  plot (\x,{0.1*(\al/\et + 1 - \al)*\vL^2 + 
            (10*\al + 10*(1 - \al)*(0.1*\vH)^2 - 0.1*(\al/\et + 1 - \al)*\vL^2)/(\vH - \vL)*(\x - \vL)});

\draw[teal, line width=1pt, domain=\vH:10] 
  plot (\x,{10*\al + 10*(1 - \al)*(0.1*\x)^2});

\draw[decorate, decoration={brace, amplitude=10pt}] 
  (\r,{
    0.1*(\al/\et + 1 - \al)*\vL^2 +
    (10*\al + 10*(1 - \al)*(0.1*\vH)^2 - 0.1*(\al/\et + 1 - \al)*\vL^2)/(\vH - \vL)*(\r - \vL)
  }) -- (\r,{0.1*\vL^2});

\fill (\r,0) node[below] {\footnotesize{$r$}};
\fill (\vL,0) node[below] {\footnotesize{$v_L$}};
\fill (\vH,0) node[below] {\footnotesize{$v_H$}};
\fill (0,0) node[left] {\footnotesize{$0$}};
\fill (0,10) node[left] {\footnotesize{$1$}};
\fill (5,1) node[above] {\footnotesize{$J$}};
 \fill (8.5,8.2) node[above] {\footnotesize{$u(\cdot)$}};
\node[rotate=45] at (5.9,3.7) {\footnotesize{slope $(1-\widetilde{\alpha})\beta$}};

\end{tikzpicture}
\end{minipage}
\begin{minipage}{0.5\textwidth}
\begin{tikzpicture}[scale=0.6]

\pgfmathsetmacro{\vL}{0}
\pgfmathsetmacro{\r}{2.7}
\pgfmathsetmacro{\vH}{11}
\pgfmathsetmacro{\vT}{9.2}
\pgfmathsetmacro{\h}{7}

\draw[line width=0.5pt] (0,0) -- (0,10.5);
\draw[line width=0.5pt] (0,0) -- (10.5,0);
\draw[dotted, line width=0.5pt] (0,10) -- (10,10) -- (10,0);

\draw[dotted, line width=0.5pt] (\vL,0) -- (\vL,{0.1*\vL^2});
\draw[dotted, line width=0.5pt] (9.2,0) -- (9.2,10);

\draw[teal, line width=1.5pt] (0,0) -- (\r,0);
\draw[teal, line width=1pt] 
  [domain=\r:\vT] plot (\x,{\h + (\x - \r)*(10 - \h)/(\vT - \r)});
\draw[teal, line width=1pt, dashed] (\r,0) -- (\r,\h);
\draw[teal, line width=1pt] (\vT,10) -- (10,10);

\draw[line width=3pt, green, opacity=0.5] (\r,0) -- (9.2,0);
\fill (7,0) node[above] {\footnotesize{$I_{a}$}};

\draw[decorate, decoration={brace, amplitude=10pt}] 
  (\r,\h) -- (\r,0) node[midway,xshift=15pt] {\footnotesize{$\widetilde{\alpha}$}};

\fill (\r,0) node[below] {\footnotesize{$r$}};
\fill (9.2,0) node[below] {\footnotesize{$v_T$}};
\fill (10,0) node[below] {\footnotesize{$v_H$}};
\fill (0,0) node[left] {\footnotesize{$0$}};
\fill (0,10) node[left] {\footnotesize{$1$}};
 \fill (5.2,8.2) node[above] {\footnotesize{$u(\cdot)$}};

\node[rotate=25] at (5.5,7.5) {\footnotesize{slope $(1-\widetilde{\alpha})\beta$}};

\end{tikzpicture}
\end{minipage}

\caption{\label{fig_payoff}\textit{Payoff function}. Payoff teal. Left panel: Disclosure at top and bottom ($v_L>0$, $v_H<1$). Right panel: no disclosure at the top or bottom ($v_L=0$, $v_H=1$). In the left panel, the jump $J = \widetilde{\alpha}(1 - F(v_L)^{n-1}/\eta) > 0$.}
\end{figure}
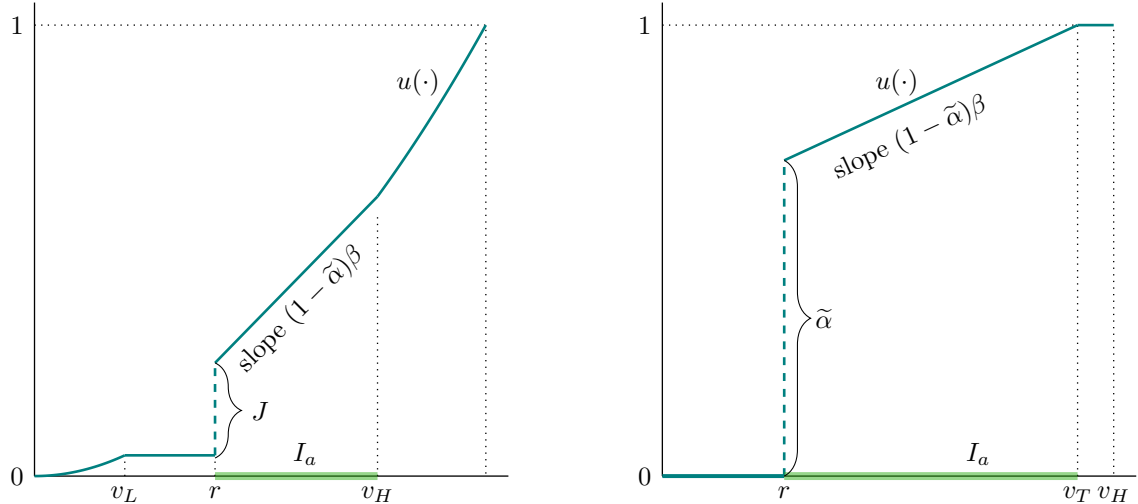

As discussed above, the presence of inexperienced consumers causes firms to depart from full disclosure around the reserve value $r$. Proposition \ref{Gstruc} states that this departure has to take a specific form. As discussed above, an interval of valuations just below the reservation value must be pooled with values above it, which creates an interval $(v_L,r)$ on which the equilibrium distribution (and the endogenous payoff) is flat. In an interval above the reservation value ($I_a$) the (true) valuations from this gap are absorbed, distorting the posterior mean. In equilibrium, $u(\cdot)$ must be affine in this interval, which leaves the firm indifferent among all mean preserving contractions of the prior. In fact, a non-affine payoff cannot be sustained in equilibrium for any region of partial disclosure: strict convexity would incentivize greater disclosure, while strict concavity would incentivize less. An affine $u(\cdot)$, in turn, requires that $G(\cdot)^{n-1}$ is affine on $I_a$, explaining the functional form of $G(\cdot)$ in this interval. These connections between the equilibrium disclosure strategies and the equilibrium payoffs are illustrated in Figure \ref{fig_payoff}, which shows the payoffs associated with the disclosure strategies in Figure \ref{fig_G_star}.

Proposition \ref{Gstruc} establishes three necessary conditions for an equilibrium. The equilibrium also requires optimality — each firm's disclosure strategy must be a best response to its competitors' strategies. The competitors' strategies determine the firm's endogenous payoff $u(\cdot)$, not only through the term $G(\cdot)^{n-1}$, but also through $\eta$ and $\widetilde{\alpha}$ (see \eqref{alphatilde},   \eqref{altet}, and \eqref{eq_u}). Next, we establish the existence and uniqueness of an equilibrium $G(\cdot)$ that satisfies these conditions and describe its key properties.

\begin{proposition}(Exogenous Reservation Value). Consider the game with exogenous reservation value $r$. For each $r$, the equilibrium exists and is unique. In particular, the equilibrium strategy takes the form of \eqref{eq_G}, with unique values $\{v^{eq}_L(r),v^{eq}_H(r),\beta^{eq}(r)\}$. Furthermore, the equilibrium has the following properties:
\begin{itemize}
\item A unique $\underline{r}(n,\alpha)\in(0,\mu)$ exists such that 
\begin{enumerate}
\item[(i)] for $r\leq \underline{r}$, the equilibrium features no disclosure at the bottom ($v^{eq}_L(r)=0$)
\item[(ii)] for $r>\underline{r}$ the equilibrium features disclosure at the bottom ($0<v_L^{eq}(r)<r$) and the equilibrium lower disclosure threshold $v^{eq}_L(\cdot)$ is increasing in $r$.
\end{enumerate}

\item As the exogenous reservation value approaches zero or one, the equilibrium converges to full disclosure in distribution ($G(\cdot) \rightarrow F(\cdot)$ pointwise). 
\end{itemize}
\label{prop_r_exog}
\end{proposition}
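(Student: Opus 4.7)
My plan builds on Proposition \ref{Gstruc}, which pins down any symmetric equilibrium to the parametric form \eqref{eq_G}, leaving only the three parameters $(v_L, v_H, \beta)$ to determine for each $r$. The analysis proceeds in three steps: (I) derive the additional equilibrium conditions and show they uniquely determine $(v_L, v_H, \beta)$; (II) identify the threshold $\underline{r}(n,\alpha)$ and establish the comparative statics of $v_L^{eq}(r)$; (III) analyze the boundary behavior as $r \to 0$ or $r \to 1$.

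For step (I), continuity of $G$ at $v_H$ (Proposition \ref{Gstruc}(ii)) yields $\beta = [F(v_H)^{n-1}-F(v_L)^{n-1}]/(v_H-r)$, eliminating $\beta$. The remaining unknowns $(v_L, v_H)$ are pinned down by two further conditions. The first is Bayes-plausibility, which after integration by parts reduces to $\int_{v_L}^{v_H}[F(v)-G(v)]\,dv = 0$. The second is a Dworczak-Martini optimality condition: the ``price function'' $p$ supporting $G$ must be convex, equal to $u$ on $\mathrm{supp}(G)$, and affine on the pooling interval $(v_L, r)$. Convexity of $p$ at the boundaries of the distortion region yields a tangency (smooth-pasting) equation at $v_L$ when $v_L>0$, or the analogous corner condition at $v_L=0$. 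I would parameterize candidate solutions by $v_L \in [0, r)$, show via the intermediate value theorem that mean preservation uniquely selects $v_H(v_L) \in (r, 1]$ (using monotonicity of the residual in $v_H$, which follows from the convexity of $F(\cdot)^{n-1}$), and then verify that the remaining tangency or corner condition admits a unique solution in $v_L$.

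For step (II), let $T(v_L; r)$ denote the residual in the smooth-pasting condition evaluated at $(v_L, v_H(v_L), \beta(v_L))$ with exogenous reservation value $r$. I would show $T$ is strictly monotone in each argument with opposite signs, using the convexity of $F(\cdot)^{n-1}$ together with the structure of $\widetilde\alpha$ and $\eta$ (themselves determined by $G$ via \eqref{eta} and \eqref{alphatilde}) to sign the partial derivatives. The threshold $\underline{r}$ is then defined implicitly by $T(0; \underline{r}) = 0$. For $r < \underline{r}$, the smooth-pasting equation would require $v_L < 0$, so the corner $v_L = 0$ is the unique equilibrium; for $r > \underline{r}$, a unique interior $v_L^{eq}(r) > 0$ solves $T = 0$, and implicit differentiation yields $dv_L^{eq}/dr > 0$. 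Existence of $\underline{r} \in (0, \mu)$ follows by evaluating $T$ at the boundaries $r \to 0$ and $r \to \mu$ and invoking continuity.

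For step (III), since $\underline{r} > 0$, for small $r$ the equilibrium is in the corner regime with $v_L = 0$; mean preservation combined with continuity at $v_H$ forces $v_H \to 0$ as $r \to 0$, collapsing the distortion region to $\{0\}$. As $r \to 1$ we are in the interior regime, and the combined tangency and mean-preservation conditions force $v_L, v_H \to 1$, collapsing the distortion to $\{1\}$. In both cases $G \to F$ pointwise, by piecewise convergence on the three disjoint regions defining \eqref{eq_G}. The main obstacle is likely to be the uniqueness argument in step (I): because $\widetilde\alpha$ and $\eta$ are themselves endogenous to $(v_L, v_H)$, signing the derivatives of $T$ requires careful bookkeeping, and the convexity of $F(\cdot)^{n-1}$ will be the essential tool throughout.
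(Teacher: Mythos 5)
Your plan follows essentially the paper's route: for fixed $r$, parameterize candidates by $v_L$, pin down $(\beta,v_H)$ from the contact condition $G(v_H)=F(v_H)$ and Bayes-plausibility (this is exactly Lemma \ref{feas}, giving the slope \eqref{eq_BET}), then reduce optimality to conditions on the Dworczak--Martini multiplier \eqref{eq_PHI} at the lower boundary, with the corner condition at $v_L=0$ defining $\underline{r}$, and monotonicity of the resulting one-dimensional residual delivering uniqueness, the threshold structure, and $dv^{eq}_L/dr>0$ by implicit differentiation. So the architecture is right, but two steps as stated would not go through.

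First, the equilibrium condition at an interior $v_L$ is \emph{not} smooth pasting. By Lemma \ref{optdu1}(ii) it is continuity (value matching) of $\phi$ at $v_L$ together with \emph{weak} convexity there; in equilibrium the multiplier generically has a strict convex kink at $v_L$, i.e.\ $(1-\widetilde{\alpha})\beta^*>[F(v_L)^{n-1}]'$, so imposing slope equality as your defining equation would generally select the wrong point or admit no solution. Moreover, because the value-matching residual is concave in $v$ (convexity of $F^{n-1}$), it can have two roots: you must select the lower root and then \emph{verify} the convexity inequality at it---the paper does this by evaluating the affine multiplier and the tangent to $F^{n-1}$ at $\widetilde{\mu}=E_F[v\mid v\in(v_L,v_H)]>r$; your plan omits both the selection and the verification. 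Second, your $r\to 0$ argument (``mean preservation forces $v_H\to 0$'') is valid only when $F^{n-1}$ is strictly convex; in the admissible boundary case where $F^{n-1}$ is affine (e.g.\ uniform with $n=2$), $v_H$ remains at $1$ for all $r$ and convergence $G\to F$ instead comes from $\beta^*\to[F^{n-1}]'$, so the limit argument needs the case split the paper makes. Two smaller omissions: candidates exist only when $E_F[v\mid v>v_L]>r$ (Lemma \ref{feas}), which is what drives $v_L^{eq},v_H^{eq}\to 1$ as $r\to 1$; and signing the partials of your residual $T$ requires $d\beta^*/dr>0$ and $d\beta^*/dv_L<0$ together with monotonicity of $\eta$ in $v_L$---a nontrivial lemma in its own right, not just ``bookkeeping.''
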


In addition to establishing existence and uniqueness of equilibrium in the benchmark model, this proposition links the structure of the equilibrium disclosure to the location of the exogenous reservation value. In particular, it shows that when the exogenous reservation value is low, firms pool all valuations below it with values above, resulting in a distribution with no disclosure at the bottom. Conversely, when the reservation is relatively high, the firm discloses low valuations, and pools intermediate ones. To understand this connection, recall that disclosing a valuation below $r$ is valuable for two reasons: to sell to savvy consumers (if it is highest among all valuations), and to sell to inexperienced consumers if they revisit (all valuations are below the reservation value and the firm's is highest). When the reservation value is low, neither of these reasons is compelling: any value below $r$ is extremely unlikely to be highest among all, and with a low $r$ it is unlikely that all values will be smaller and the consumer will revisit. Thus, firms prefer to pool such values with those above $r$, which ensures that inexperienced consumers stop search immediately. As the reservation value increases, the lower disclosure threshold also increases. Indeed, the probability of a revisit increases, and values below $r$ become more competitive in the inexperienced market segment. Thus, disclosing low valuations accurately is more valuable. Due to the discrete change in the search behavior of the inexperienced consumer, pooling values that are very close to the reservation value is always beneficial.

Proposition \ref{prop_r_exog} further reveals a non-monotonic relationship between the reserve value and disclosure. In particular, as the reserve value approaches the extremes of $0$ and $1$, the equilibrium converges to full disclosure, as in frictionless search. As described above, with $r$ approaching 0, all probability mass below the reservation value is pooled with higher valuations, and there is no disclosure at the bottom. But because there is not much probability mass below $r$ in the prior distribution of values, the distortion caused by this pooling is small. In other words, when $r$ is small, it is virtually costless for the firm to deter inexperienced consumers' search. Thus, the firm introduces a small distortion to capture the inexperienced consumer, and otherwise focuses on attracting savvy consumers by fully disclosing their values. When $r$ is high, on the other hand, there is little mass above the reservation value with which to pool lower valuations. Thus, there is a relatively little scope to manipulate the consumer's posterior mean. In other words, when $r$ is close to 1, it is almost impossible for the firm to deter search, and the equilibrium therefore approaches full information, as in frictionless search. This non-monotonicity suggests that when the reserve value is endogenous, extreme search costs must give rise to full disclosure, while intermediate search costs limit disclosure.
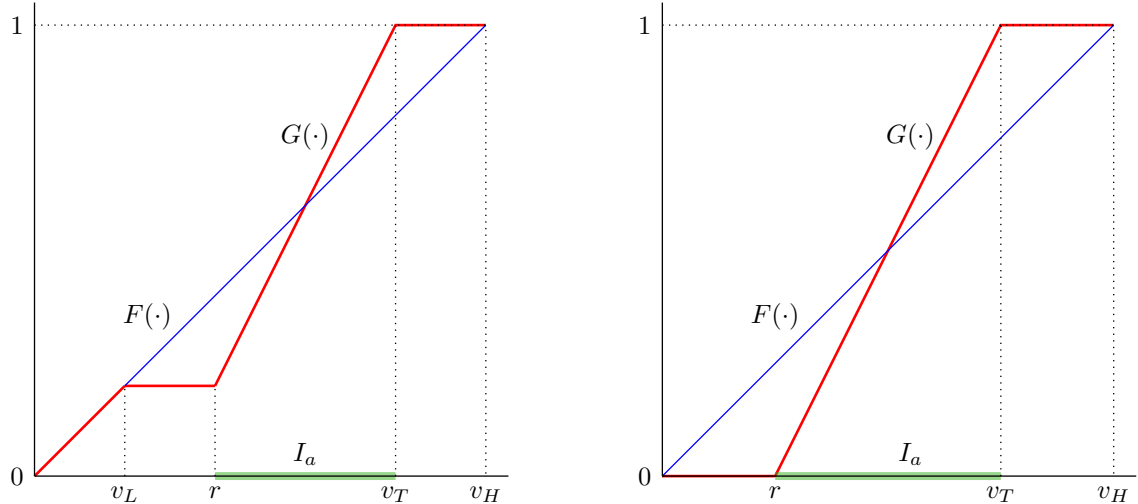
\begin{figure}
\begin{minipage}{0.5\textwidth}
\begin{tikzpicture}[scale=0.6]

\pgfmathsetmacro{\vL}{2}
\pgfmathsetmacro{\r}{4}
\pgfmathsetmacro{\vH}{8}

\draw[color=red, domain=0:\vL, line width=1pt] 
    plot (\x, {\x});

\draw[color=red, line width=1pt] 
    (\vL, {\vL}) -- (\r, {\vL});

\draw[color=red, line width=1pt] 
    (\r, {\vL}) -- (\vH, 10)--(10,10);

\draw[color=blue, domain=\vL:10, line width=0.5pt] 
    plot (\x, {\x});

\draw[line width=3pt, green, opacity=0.5] 
    (\r, 0) -- (\vH, 0);

\draw[line width=0.5pt] (0,0) -- (0,10.5);     
\draw[line width=0.5pt] (0,0) -- (10.5,0);     
\draw[line width=0.5pt, dotted] (0,10) -- (10,10) -- (10,0);  

\draw[line width=0.5pt, dotted] (\r,0) -- (\r, {\vL});
\draw[line width=0.5pt, dotted] (\vL,0) -- (\vL, {\vL});
\draw[line width=0.5pt, dotted] (\vH,0) -- (\vH, {10});

\fill (\r,0) node[below] {\footnotesize{$r$}};
\fill (\vL,0) node[below] {\footnotesize{$v_L$}};
\fill (\vH,0) node[below] {\footnotesize{$v_T$}};
\fill (10,0) node[below] {\footnotesize{$v_H$}};
\fill (0,0) node[left] {\footnotesize{$0$}};
\fill (0,10) node[left] {\footnotesize{$1$}};
\fill (6,0) node[above] {\footnotesize{$I_{a}$}};
  \fill (2.5,3) node[above] {\footnotesize{$F(\cdot)$}};
 \fill (6,7) node[above] {\footnotesize{$G(\cdot)$}};

\end{tikzpicture}
\end{minipage}
\begin{minipage}{0.5\textwidth}
\begin{tikzpicture}[scale=0.6]

\pgfmathsetmacro{\vL}{0}
\pgfmathsetmacro{\r}{2.5}
\pgfmathsetmacro{\vH}{7.5}

\draw[color=red, domain=0:\vL, line width=1pt] 
    plot (\x, {\x});

\draw[color=red, line width=1pt] 
    (\vL, {\vL}) -- (\r, {\vL});

\draw[color=red, line width=1pt] 
    (\r, {\vL}) -- (\vH, 10)--(10,10);

\draw[color=blue, domain=\vL:10, line width=0.5pt] 
    plot (\x, {\x});

\draw[line width=3pt, green, opacity=0.5] 
    (\r, 0) -- (\vH, 0);

\draw[line width=0.5pt] (0,0) -- (0,10.5);     
\draw[line width=0.5pt] (0,0) -- (10.5,0);     
\draw[line width=0.5pt, dotted] (0,10) -- (10,10) -- (10,0);  

\draw[line width=0.5pt, dotted] (\r,0) -- (\r, {\vL});
\draw[line width=0.5pt, dotted] (\vL,0) -- (\vL, {\vL});
\draw[line width=0.5pt, dotted] (\vH,0) -- (\vH, {10});

\fill (\r,0) node[below] {\footnotesize{$r$}};
\fill (\vH,0) node[below] {\footnotesize{$v_T$}};
\fill (10,0) node[below] {\footnotesize{$v_H$}};
\fill (0,0) node[left] {\footnotesize{$0$}};
\fill (0,10) node[left] {\footnotesize{$1$}};
\fill (5.5,0) node[above] {\footnotesize{$I_{a}$}};
 \fill (2.5,3) node[above] {\footnotesize{$F(\cdot)$}};
 \fill (5.5,7) node[above] {\footnotesize{$G(\cdot)$}};

\end{tikzpicture}
\end{minipage}

\caption{\label{uniform1} \textit{Uniform Example}. $G(\cdot)$ red, $F(\cdot)$ blue. Left panel: disclosure at the bottom, $r>\underline{r}$. As $r$ increases, $v_L$ and $v_T$ shift right, $v_T-v_L$ decreases, slope of $G(\cdot)$ in $[r,v_T]$ (i.e., $\beta$) is maintained; $G(\cdot)$ approaches full disclosure as $v_L\rightarrow 1$. Right panel: no disclosure at the bottom, $r<\underline{r}$. As $r$ shrinks, $G(\cdot)$ rotates clockwise around $(0.5,0.5)$, moving ``into'' $F(\cdot)$.}
\end{figure}

\paragraph{Uniform Example.} Suppose that $F(v)=v$ and $n=2$. For a given reserve value $r$, the unique pure-strategy symmetric equilibrium is,

\begin{equation*}\label{G_star_1}
G(v)=
\begin{cases}
\min\{v,v^{eq}_L\} & \text{if } v< r\\
\min\{v_L^{eq}+\beta^{eq}(v-r),1\} & \text{if } v\geq r
\end{cases}
\end{equation*}
where
\begin{equation*}
\qquad v_L^{eq}=\max\{\frac{2r-\alpha}{2-\alpha},0\}\qquad \beta^{eq}=\min\{\frac{1}{1-\alpha},\frac{1}{1-2r}\}
\end{equation*}

Figure \ref{uniform1} illustrates. With $n=2$ and uniform values, there is no disclosure at the top for all parameters $(\alpha,r)$. This arises whenever $F^{n-1}$ is weakly convex (linear), the boundary case of our analysis. Clearly $\underline{r}=\alpha/2$.

\paragraph{Equilibrium Characterization.} For the interested reader, we provide additional details.
 
\noindent\textit{Equilibrium Candidates.} 
The three requirements of Proposition \ref{Gstruc} restrict the mutually admissible values of $v_L$, $v_H$, and $\beta$. For example, a distribution that satisfies conditions (ii) and (iii) of Proposition \ref{Gstruc}, may not be a mean preserving contraction of $F(\cdot)$, thereby failing (i). When a distribution satisfies all three conditions of Proposition \ref{Gstruc}, we say it is an \textit{equilibrium candidate}. The following lemma gives necessary and sufficient conditions for the existence of an equilibrium candidate with lower disclosure threshold $v_L$.

\begin{lemma}\label{feas}(Equilibrium Candidates.) For each $\{v_L,r\}$, a distribution $G(\cdot)$ that satisfies conditions (i)-(iii) of Proposition \ref{Gstruc} exists if and only if $E_F[v|v>v_L]>r$. If it exists, such $G(\cdot)$ is unique, with slope

\begin{align}
\beta^*=\frac{E_F[F(v)^{n-1}|v\in(v_L, v_H)]-F(v_L)^{n-1}}{E_F[v|v\in(v_L,v_H)]-r}.
\label{eq_BET}
 \end{align}
\end{lemma}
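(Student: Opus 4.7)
The plan is to treat conditions (i)-(iii) of Proposition~\ref{prop_G_shape} as a system in the unknowns $(v_H,\beta)$ given $(v_L,r)$, derive the slope formula directly from the equal-mean part of MPC, pin down $v_H$ using the contact condition in (ii), and finally verify the remaining MPC majorization.

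First I would derive $\beta^{*}$. Since $G$ and $F$ coincide outside $(v_L,v_H)$, the MPC condition (i) reduces on this interval to equality of means, equivalently $\int_{v_L}^{v_H}(v-r)\,dG=\int_{v_L}^{v_H}(v-r)\,dF$. The left side concentrates on $(r,v_H)$, where (iii) gives $G(v)^{n-1}=F(v_L)^{n-1}+\beta(v-r)$, so $v-r=(G(v)^{n-1}-F(v_L)^{n-1})/\beta$. Changing the variable of integration to $y=G(v)$, which ranges from $F(v_L)$ to $F(v_H)$ by (ii), the left side becomes
\[
\frac{1}{\beta}\int_{F(v_L)}^{F(v_H)}\bigl(y^{n-1}-F(v_L)^{n-1}\bigr)\,dy = \frac{F(v_H)-F(v_L)}{\beta}\bigl(E_F[F(v)^{n-1}\mid v\in(v_L,v_H)]-F(v_L)^{n-1}\bigr),
\]
where I have used $\frac{F(v_H)^n-F(v_L)^n}{n(F(v_H)-F(v_L))} = E_F[F(v)^{n-1}\mid v\in(v_L,v_H)]$ (itself a change of variable). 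The right side simplifies to $(F(v_H)-F(v_L))\bigl(E_F[v\mid v\in(v_L,v_H)]-r\bigr)$. Cancelling the common factor and solving for $\beta$ delivers \eqref{eq_BET}.

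Next I would pin down $v_H$ and establish the existence condition. The contact condition in (ii) is equivalent to $\beta(v_H-r)=F(v_H)^{n-1}-F(v_L)^{n-1}$ when $v_H<1$ (disclosure at the top), and is automatic when $v_H=1$ (no disclosure at the top, with the affine segment hitting the ceiling at some $v_T\le 1$). Equating this contact slope with $\beta^{*}$ produces a single implicit equation for $v_H$ on $(r,1]$. For necessity of $E_F[v\mid v>v_L]>r$: because all $G$-mass on $(v_L,v_H)$ lies in $[r,v_T]$ and is nondegenerate, its conditional $G$-mean strictly exceeds $r$, and equal means force $E_F[v\mid v\in(v_L,v_H)]>r$, which is bounded above by $E_F[v\mid v>v_L]$. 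For sufficiency and uniqueness I would use the standing convexity of $F^{n-1}$ to argue that the discrepancy between the contact slope and $\beta^{*}$ is monotone in $v_H$, so the implicit equation has at most one interior zero; intermediate value then supplies existence of either an interior solution in $(r,1)$ or the boundary regime $v_H=1$, precisely when $E_F[v\mid v>v_L]>r$.

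Finally I would verify that the candidate $G$ is a genuine MPC of $F$ and not merely equal in mean. Since $G=F$ outside $(v_L,v_H)$, MPC reduces to $\int_{v_L}^{x}(F-G)\,dv\ge 0$ for $x\in[v_L,v_H]$, with equality at the endpoints. On $(v_L,r)$ this is immediate from $G=F(v_L)\le F$. On $(r,v_H)$, the concavity of $G$ (since $G^{n-1}$ is linear) together with the convexity of $F^{n-1}$ produces a single crossing of $G$ and $F$ from $G<F$ to $G>F$, so the running integral is first increasing and then decreasing but cannot drop below zero given its endpoint values of zero. The main obstacle I expect is the implicit equation for $v_H$ in the second step: it admits no closed form in general, and the monotonicity argument grounded in convexity of $F^{n-1}$, together with the case split at the boundary $v_H=1$ where contact becomes degenerate, is the delicate part.
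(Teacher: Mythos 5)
Your derivation of \eqref{eq_BET} (change of variables $y=G(v)$ on the affine segment, using $G(v_H)=F(v_H)$) is correct and is essentially the same computation the paper does by integration by parts, and your necessity argument ($E_F[v\mid v\in(v_L,v_H)]>r$ from the nondegenerate $G$-mass above $r$, then bounding by $E_F[v\mid v>v_L]$) is fine. Where you genuinely diverge from the paper is the parameterization: the paper fixes $\beta$, lets the contact point be the upper root of $D(v,\beta)=F(v_L)^{n-1}+\beta(v-r)-F(v)^{n-1}$, and shows the MPC balance $H^*(v_H,\beta)$ is strictly decreasing in $\beta$ (an envelope argument using $F(v_H)=G(v_H)$), so a root exists iff $\lim_{\beta\to\infty}H^*=(1-F(v_L))\,(r-E_F[v\mid v\geq v_L])<0$ — the ``if and only if'' condition drops out of that limit. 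You instead fix $v_H$ and compare $\beta^*(v_H)$ with the contact slope $\beta^c(v_H)=\frac{F(v_H)^{n-1}-F(v_L)^{n-1}}{v_H-r}$.

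The gap is in your pivotal step: the claim that the discrepancy $\psi(v_H)=\beta^*(v_H)-\beta^c(v_H)$ is ``monotone in $v_H$'' by convexity of $F^{n-1}$ is asserted, not proved, and it is false as a global statement — $\beta^c$ is the slope of a chord anchored at $(r,F(v_L)^{n-1})$, a point strictly below the graph of the convex function $F^{n-1}$, so $\beta^c$ is in general U-shaped (decreasing then increasing), and a direct computation gives $\frac{d\beta^*}{dv_H}\propto \beta^c(v_H)-\beta^*(v_H)$, so $\psi'$ has ambiguous sign away from its zeros. What actually rescues uniqueness is a local argument you do not make: at any solution, equal conditional means rule out the lower root of $D(\cdot,\beta)$ (a lower-root contact would give $G\leq F$ on all of $(v_L,v_H)$, contradicting $\int_{v_L}^{v_H}(F-G)\,dv=0$), hence $\beta^*<[F(v_H)^{n-1}]'$ there, hence $\beta^{c\prime}(v_H)>0$ and $\psi$ is strictly decreasing at every zero — which yields at most one zero and also rules out coexistence of an interior solution with the boundary regime $v_H=1$. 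Similarly, your intermediate-value step asserts the ``precisely when'' rather than deriving it: you need to show that $E_F[v\mid v>v_L]>r$ is what makes the relevant domain nonempty (it guarantees $\beta^*(1)$ is finite and positive, and that $\psi\to+\infty$ as $v_H$ approaches the point where $E_F[v\mid v\in(v_L,v_H)]=r$), so that either $\psi(1)\geq0$ (no disclosure at the top, with $v_T\leq1$) or a unique interior zero exists. With those repairs — and a slightly more careful single-crossing check in the boundary case $v_T<v_H=1$ — your route goes through, but as written the existence-and-uniqueness half of the lemma is not established.
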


  Lemma \ref{feas} provides insight into a link between $r$ and $v_L$ imposed solely by the equilibrium structure characterized in Proposition \ref{Gstruc}. First, note that if $r$ and $v_L$ are too far apart ($r-v_L$ large, $v_L$ low), then the conditional mean in the denominator of \eqref{eq_BET} places too much weight on low realizations: even with $v_H=1$, the denominator is negative. In other words, even an infinite slope ($\beta=\infty$),  does not allocate enough mass above $r$ for the distribution $G(\cdot)$ to match the mean of $F(\cdot)$ in the partial disclosure region.\footnote{This observation immediately implies that for $r>\mu$,  the equilibrium must have disclosure at the bottom ($v_L>0$). Indeed, with $v_L=0$ the conditional expectation in the denominator of \eqref{eq_BET} is weakly less than $\mu$ and cannot exceed $r$. } On the other hand, whenever $E_F[v|v>v_L]>r$, the structural restriction on $G$ leads to a unique candidate disclosure strategy. Indeed, the slope $\beta^*$ and the requirement $G(v_H)=F(v_H)$ uniquely determine the contact point $v_H$.

  The slope $\beta$ takes an intriguing form, with the numerator and the denominator resembling \textit{conditional mean residual life}.\footnote{For a random variable $T$, the conditional mean residual lifetime is $E[T|t\leq T\leq t']-E[T|t\leq T\leq t]=E[T|t\leq T\leq t']-t$.} The denominator measures the horizontal distance between the average location of mass on $[v_L,v_H]$ and $r$. The expectation in the numerator is calculated using a particular \textit{probability weighted mean}, rather than the ordinary mean \citep{GLMW1979}.\footnote{In \citet{GLMW1979}, the probability weighted mean $M(\lambda_1,\lambda_2,\lambda_3)\equiv E_F[v^{\lambda_1}F(v)^{\lambda_2}(1-F(v))^{\lambda_3}]$. In the numerator of \eqref{eq_BET}, the conditional mean residual life is calculated using $M(0,n-1,0)$. Notably, the numerator can also be expressed using the \textit{Stolarsky Mean} \citep{S1975}. In the notation of \citet{S1975}, the numerator can be rewritten $S_{n}[F(v_H),F(v_L)]^{n-1}-S_{n}[F(v_L),F(v_L)]^{n-1}$.} 
    The slope $\beta$ ensures a precise balance between the vertical and horizontal re-distributions of mass associated with the equilibrium structure: it determines how steeply $G(\cdot)^{n-1}$ should accumulate mass in the vertical direction given the available horizontal length to preserve both the conditional mean and the affine shape of $G(\cdot)^{n-1}$.

\noindent\textit{Optimality and Duality.} Having derived conditions that ensure existence of a unique equilibrium candidate for each $v_L$, we turn to optimality. We combine our characterization of the equilibrium structure (Proposition \ref{Gstruc}) with the optimality conditions of \citet{dworczak2019simple}. These authors study the decision problem in \eqref{eq_optim_G} for exogenous payoff function $u(\cdot)$, providing the following characterization.
\begin{theorem}\label{DMO}(\citet{dworczak2019simple}). 
Consider problem \eqref{eq_optim_G} for an exogenous payoff function $u(\cdot)$, which satisfies regularity conditions. $\widehat{G}_i\in\text{MPC}(F)$ is a solution if and only if an auxiliary function $\phi(\cdot)$ exists such that: (DM1) $\phi(\cdot)$ is continuous and weakly convex, (DM2) $\phi(v)\geq u(v)$ for all $v\in[0,1]$, (DM3) $\text{support}(\widehat{G}_i)\subseteq \{v\in [0,1]: u(v)=\phi(v)\}$, (DM4) $\int_0^1 \phi(v)d\widehat{G}_i=\int_0^1 \phi(v)dF$.

\end{theorem}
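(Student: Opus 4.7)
The plan is to view problem \eqref{eq_optim_G} as an infinite-dimensional linear program over Bayes-plausible posterior distributions and exploit strong duality. The constraint $\widehat{G}_i\in\text{MPC}(F)$ is a continuum of linear inequalities, $\int_0^v G(t)\,dt\geq\int_0^v F(t)\,dt$ for every $v\in[0,1]$, with equality at $v=1$. Assigning non-negative Lagrange multipliers (a non-negative measure) to these constraints and integrating by parts yields a Lagrangian of the form $\int(u-\phi)\,dG + \int \phi\,dF$, where $\phi$ is a convex function whose second distributional derivative equals the multiplier measure. Conditions (DM1)--(DM4) are then exactly primal feasibility, dual feasibility, complementary slackness, and the no-duality-gap condition of this LP, so the theorem is the duality theorem for this program.

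For sufficiency I would chain, for arbitrary feasible $G$,
\begin{align*}
\int_0^1 u\,dG \;\leq\; \int_0^1 \phi\,dG \;\leq\; \int_0^1 \phi\,dF \;=\; \int_0^1 \phi\,d\widehat{G}_i \;=\; \int_0^1 u\,d\widehat{G}_i.
\end{align*}
The first inequality is (DM2); the second uses (DM1) together with the fact that $G\in\text{MPC}(F)$ implies $E_G[\phi]\leq E_F[\phi]$ for convex $\phi$ (Strassen's theorem plus Jensen's inequality); the third is (DM4); the last uses (DM3) together with $\phi\geq u$ (so $\phi=u$ on the support of $\widehat{G}_i$ by (DM2)). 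This direction is essentially mechanical once the right $\phi$ is in hand.

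Necessity is the main obstacle: given an optimal $\widehat{G}_i$, one must construct the multiplier $\phi$. My approach would be to decompose $[0,1]$ into \emph{passing} intervals where $\widehat{G}_i$ coincides with $F$ and \emph{pooling} intervals $[a,b]$ where $\widehat{G}_i$ contracts the prior mass to one or more atoms at the corresponding conditional means (Strassen-type decomposition). On passing intervals set $\phi=u$; on each pooling interval $[a,b]$ define $\phi$ as the affine function connecting $(a,u(a))$ and $(b,u(b))$, with appropriate modification at the endpoints to handle atoms. This construction automatically delivers (DM3), and (DM4) follows because $\phi$ is affine on each pooling interval, so $E_F[\phi\mid v\in[a,b]]=\phi(E_F[v\mid v\in[a,b]])$, which is exactly where the pooled mass of $\widehat{G}_i$ resides. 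The delicate step is verifying (DM1) and (DM2) globally: convexity of $\phi$ at the boundaries between pooling and passing regions, and the majorization $\phi\geq u$ in the interior of pooling intervals, both follow from first-order optimality of $\widehat{G}_i$, because any local violation would expose a mean-preserving perturbation that strictly improves the objective, contradicting optimality. The ``regularity conditions'' implicit in the statement (e.g.\ upper semi-continuity and boundedness of $u$) are what ensure this construction is well-defined and that a closed-graph/compactness argument can be used to select a convex $\phi$ from the candidates generated pointwise; this selection argument is where I would expect the main technical care to lie.
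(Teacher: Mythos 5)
This statement is not proved in the paper at all: it is Theorem 1/2 of \citet{dworczak2019simple}, imported verbatim as a tool (the paper only verifies that its endogenous payoffs are ``regular'' so the theorem applies, and then constructs the specific multiplier in \eqref{eq_PHI}). So there is no internal proof to compare against; what you have written has to be judged against the original duality argument of Dworczak and Martini. Your sufficiency direction is correct and is essentially their verification argument: the chain $\int u\,dG\leq\int\phi\,dG\leq\int\phi\,dF=\int\phi\,d\widehat{G}_i=\int u\,d\widehat{G}_i$ uses exactly (DM2), convexity of $\phi$ plus the convex-order property of MPCs, (DM4), and (DM3); no appeal to Strassen is even needed, since $G\in\text{MPC}(F)$ gives $E_G[\phi]\leq E_F[\phi]$ for convex $\phi$ by definition of the order.

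The necessity direction is where your proposal has a genuine gap. First, your Strassen-type decomposition of the optimizer into ``passing'' intervals where $\widehat{G}_i=F$ and ``pooling'' intervals where mass collapses to conditional means is not available for a general element of $\text{MPC}(F)$: a mean-preserving contraction need not have this interval-pooling structure (mass can be contracted in ways that neither preserve $F$ on any interval nor concentrate on atoms at interval means), so the affine-interpolation construction of $\phi$ is not well defined in general. Second, even where the construction makes sense, deducing global (DM1) and (DM2) from ``any local violation would expose a profitable mean-preserving perturbation'' is precisely the hard part of the theorem, not a corollary of it: local perturbation arguments yield local necessary conditions, whereas convexity of $\phi$ and the majorization $\phi\geq u$ are global constraints, and the feasible perturbations are themselves restricted by the MPC constraint. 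Dworczak and Martini do not build $\phi$ from the optimizer's pooling structure; they prove strong duality directly, showing under the regularity conditions that the dual program $\min\{\int\phi\,dF:\phi\text{ convex},\ \phi\geq u\}$ attains its infimum and has value equal to the primal, after which (DM3) and (DM4) drop out as complementary slackness. Your sketch correctly identifies where the technical care lies, but as written it asserts rather than supplies the existence/selection argument for the convex multiplier, which is the substance of the necessity claim.
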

If $(G(\cdot),\phi(\cdot))$ satisfy these optimality conditions then we refer to $\phi(\cdot)$ as a \textit{multiplier that supports} $G(\cdot)$.  It is also immediate that any equilibrium candidate $G(\cdot)$ is regular, and therefore existence of a multiplier is necessary and sufficient for optimality.\footnote{For any such $G(\cdot)$, the payoff is upper-semicontinuous with only one jump at $r$, and the domain can be partitioned into a finite number of intervals on which the payoff is strictly convex or affine (see Figure \ref{fig_payoff}).}

\begin{figure}
\begin{minipage}{0.5\textwidth}
\begin{tikzpicture}[scale=0.6]

\pgfmathsetmacro{\al}{0.1}
\pgfmathsetmacro{\et}{0.4}
\pgfmathsetmacro{\vL}{2}
\pgfmathsetmacro{\r}{4}
\pgfmathsetmacro{\vH}{7.6}

\fill (\r,0) node[below] {\footnotesize{$r$}};
\fill (\vL,0) node[below] {\footnotesize{$v_L$}};
\fill (\vH,0) node[below] {\footnotesize{$v_H$}};

\draw[dotted, line width=0.5pt] (\r,0)--(\r,0.1*\vL^2);
\draw[dotted, line width=0.5pt] (\vL,0)--(\vL,0.1*\vL^2);
\draw[dotted, line width=0.5pt] (\vH,0)--(\vH,0.1*\vH^2);

\draw[line width=1pt, teal] (\vL,{0.1*(\al/\et + 1 - \al)*\vL^2})--(\r,{0.1*(\al/\et + 1 - \al)*\vL^2});
\draw[line width=1pt, teal, dashed] 
  (\r,{0.1*(\al/\et + 1 - \al)*\vL^2}) --
  (\r,{0.1*(\al/\et+1-\al)*\vL^2+(10*\al+10*(1-\al)*(0.1*\vH)^2 - 0.1*(\al/\et+1-\al)*\vL^2)/(\vH-\vL)*(\r-\vL)});

\draw[decorate, decoration={brace,amplitude=10pt}]
  (\r,{0.1*(\al/\et+1-\al)*\vL^2+(10*\al+10*(1-\al)*(0.1*\vH)^2 - 0.1*(\al/\et+1-\al)*\vL^2)/(\vH-\vL)*(\r-\vL)})
  -- (\r,0.1*\vL^2);

\draw[violet, line width=1pt, domain=0:\vL] 
  plot (\x,{0.1*(\al/\et+1-\al)*\x^2});
\draw[violet, line width=1pt, domain=\vL:\vH] 
  plot (\x,{0.1*(\al/\et+1-\al)*\vL^2 + (10*\al + 10*(1-\al)*(0.1*\vH)^2 - 0.1*(\al/\et+1-\al)*\vL^2)/(\vH-\vL)*(\x-\vL)});
\draw[violet, line width=1pt, domain=\vH:10] 
  plot (\x,{10*\al + 10*(1-\al)*(0.1*\x)^2});

\draw[line width=0.5pt] (0,0) -- (0,10.5); 
\draw[line width=0.5pt] (0,0) -- (10.5,0); 
\draw[dotted, line width=0.5pt] (0,10) -- (10,10) -- (10,0); 

\fill (0,0) node[left] {\footnotesize{$0$}};
\fill (0,10) node[left] {\footnotesize{$1$}};
\fill (5,1) node[above] {\footnotesize{$J$}};
 \fill (3.3,0.4) node[above] {\footnotesize{$u(\cdot)$}};
 \fill (8,8.2) node[above] {\footnotesize{$\phi(\cdot)$}};
\fill (6,0) node[above] {\footnotesize{$I_{a}$}};

\draw[green, line width=3pt, opacity=0.5] (\r,0)--(\vH,0);

\end{tikzpicture}
\end{minipage}
\begin{minipage}{0.5\textwidth}
\begin{tikzpicture}[scale=0.6]

\pgfmathsetmacro{\vL}{0}
\pgfmathsetmacro{\r}{2.7}
\pgfmathsetmacro{\vH}{11}
\pgfmathsetmacro{\vT}{9.2}
\pgfmathsetmacro{\h}{7}

\fill (\r,0) node[below] {\footnotesize{$r$}};
\fill (9.2,0) node[below] {\footnotesize{$v_{T}$}};
\fill (10,0) node[below] {\footnotesize{$v_H$}};

\draw[violet, line width=1pt, domain=0:\vT+0.5]
  plot (\x,{\h + (\x-\r)*(10-\h)/(\vT-\r)});

\draw[decorate, decoration={brace,amplitude=10pt}]
  (\r,\h)--(\r,0) node[midway,xshift=15pt]{\footnotesize{$\widetilde{\alpha}$}};

\draw[line width=0.5pt] (0,0) -- (0,10.5); 
\draw[line width=0.5pt] (0,0) -- (10.5,0); 
\draw[dotted, line width=0.5pt] (0,10) -- (10,10) -- (10,0); 
\draw[dotted, line width=0.5pt] (9.2,0) -- (9.2,10); 

\draw[teal, line width=1pt, dashed] (\r,0)--(\r,\h);
\draw[teal, line width=1pt] (\vT,10)--(10,10);
\draw[teal, line width=1.5pt] (\r,0)--(0,0);
\fill (0,0) node[left] {\footnotesize{$0$}};
\fill (0,10) node[left] {\footnotesize{$1$}};
 \fill (1.5,0) node[above] {\footnotesize{$u(\cdot)$}};
\fill (5.5,8.5) node[above] {\footnotesize{$\phi(\cdot)$}};
\fill (7,0) node[above] {\footnotesize{$I_{a}$}};

\draw[green, line width=3pt, opacity=0.5] (\r,0)--(9.2,0);

\end{tikzpicture}
\end{minipage}

\caption{\label{fig_phi}\textit{Multiplier}. $\phi(\cdot)$ in violet, $u(\cdot)$ in teal. 
Left panel: Disclosure at top and bottom ($v_L>0$, $v_H<1$). 
Right panel: no disclosure at the top or bottom ($v_L=0$, $v_H=1$). 
In left panel, multiplier and payoff coincide on $[0,v_L]\cup[r,1]$; 
in right panel, on $I_a=[r,v_T]$. 
In the left panel, the jump $J = \widetilde{\alpha}(1-F(v_L)^{n-1}/\eta)>0$.}
\end{figure}
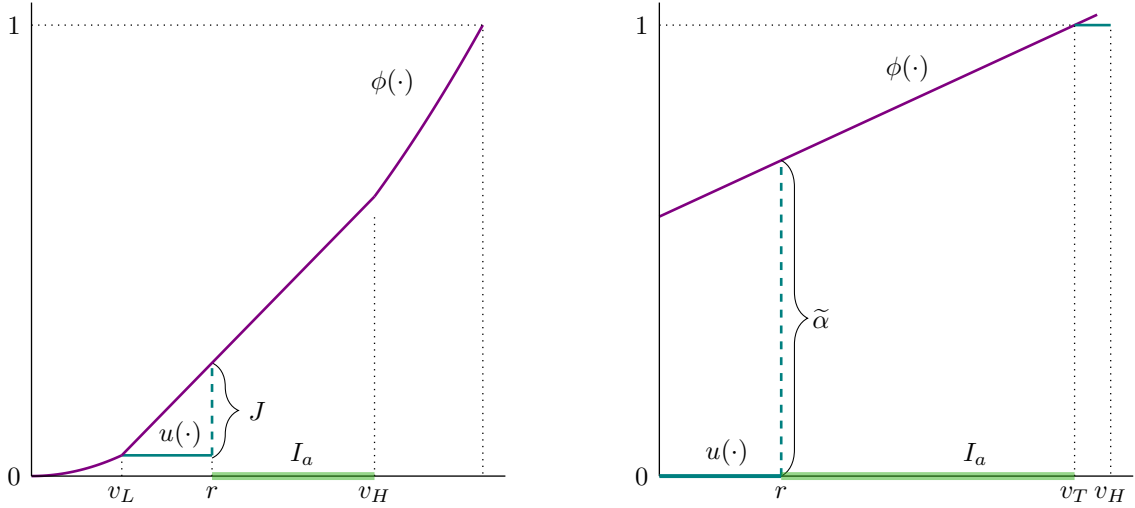

Proposition \ref{Gstruc} imposes a rigid structure on an equilibrium candidate, and the optimality conditions (DM1-4) require that a multiplier supporting it inherit some of its key features. In particular, given a $G(\cdot)$ which satisfies the conditions of Proposition \ref{Gstruc}, let
\begin{align}\label{eq_PHI}
    \phi(v)\equiv
    \begin{cases}
(\frac{\widetilde{\alpha}}{\eta}+(1-\widetilde{
\alpha}))F(v)^{n-1} & \text{if } v< v_L\\
\widetilde{\alpha}+(1-\widetilde{\alpha})(F(v_L)^{n-1}+\beta(v-r)) & \text{if } v\in [v_L,v_H]\\
\widetilde{\alpha}+(1-\widetilde{\alpha})F(v)^{n-1} & \text{if } v\in (v_H,1]\end{cases}
\end{align}
where $\eta$ is from \eqref{eta} and $\widetilde{\alpha}$ from \eqref{alphatilde}.
The optimality conditions require that the multiplier must coincide with the (endogenous) convex payoff up to $v_L$ and past $v_H$ (if there is disclosure at the top). The multiplier must also coincide with the payoff function over interval $I_a$, where both $G(\cdot)^{n-1}$ and the payoff function are affine (all follow from DM3). In fact, we show that the multiplier must be affine (with no kinks) over the entire interval of partial disclosure, $[v_L,v_H]$.\footnote{This follows from convexity and (DM4). In particular, recall that the multiplier must be weakly convex on $[v_L,v_H]$. Furthermore, because $G\in\text{MPC}(F)$ and they coincide outside the interval in question, $E_G[v|v\in(v_L,v_H)]=E_F[v|v\in(v_L,v_H)]$. Moreover, because $\phi(\cdot)$ is convex, and $G(\cdot)\neq F(\cdot)$ on the interval in question, we have $\int_{v_L}^{v_H}\phi(v)dF\geq \int_{v_L}^{v_H}\phi(v)dG$, with equality if and only if $\phi(\cdot)$ is affine. Equality of these integrals is required by (DM-4).} Together, these conditions characterize a unique multiplier that could support a candidate distribution $G(\cdot)$ as an equilibrium. Furthermore, with the structure inherited from $G(\cdot)$, it satisfies nearly all of the remaining optimality conditions. In particular, the next lemma shows that the candidate $\phi(\cdot)$ in \eqref{eq_PHI} can fail to support the equilibrium candidate $G(\cdot)$ in only two ways.

\begin{lemma}(Equilibrium Conditions).\label{optdu1} Consider $G(\cdot)$ that satisfies the conditions of Proposition \ref{Gstruc} and the multiplier $\phi(\cdot)$ in \eqref{eq_PHI}.
 (i) If $G(\cdot)$ has no disclosure at the bottom ($v_L=0$), then $G(\cdot)$ is an equilibrium if and only if $\phi(0)\geq 0$. (ii) If $G(\cdot)$ has disclosure at the bottom ($v_L>0$), then $G(\cdot)$ is an equilibrium if and only $\phi(\cdot)$ is continuous  and convex at  $v_L$.

\end{lemma}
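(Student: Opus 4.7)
The plan is to apply Theorem \ref{DMO} to the pair $(G(\cdot),\phi(\cdot))$ and to show that the structural constraints of Proposition \ref{Gstruc}, together with the explicit construction \eqref{eq_PHI}, render conditions (DM1)--(DM4) automatic everywhere except at $v_L$. Since the excerpt has already established that the candidate multiplier supporting a given $G(\cdot)$ is unique, both directions of the equivalence will then follow from a careful accounting of when (DM1)--(DM4) fail at $v_L$.

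I would first dispatch (DM3) and (DM4). By direct substitution, $\phi$ coincides with $u$ on $[0,v_L]$ and on $(v_H,1]$---where $G=F$ makes the two expressions identical---and on $[r,\min\{v_H,v_T\}]$, where both equal the same affine function; since $G$ places no mass on $(v_L,r)$, the support of $G$ lies in $\{u=\phi\}$, giving (DM3). For (DM4), the only interval where $G$ and $F$ differ is $(v_L,v_H)$, on which $\phi$ is affine; combined with $G\in\text{MPC}(F)$ and $G(v_L)=F(v_L)$, $G(v_H)=F(v_H)$, the preserved conditional mean on $(v_L,v_H)$ yields $\int_{v_L}^{v_H}\phi\,dG=\int_{v_L}^{v_H}\phi\,dF$, so (DM4) follows from equality elsewhere.

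Next I would verify (DM1) and (DM2) everywhere except at $v_L$. Within each of $[0,v_L]$, $[v_L,v_H]$, and $[v_H,1]$, convexity of $\phi$ follows from the assumed convexity of $F^{n-1}$ (the middle piece being affine by construction). Continuity of $\phi$ at $v_H$ is built into the contact-point definition $F(v_L)^{n-1}+\beta(v_H-r)=F(v_H)^{n-1}$. For convexity at $v_H$ (when $v_H<1$), I would invoke the MPC property: as noted in the excerpt, $G$ and $F$ cross twice above $r$, so $G\geq F$ on a left-neighborhood of $v_H$ and $G=F$ to its right, forcing the left derivative of $G^{n-1}-F^{n-1}$ at $v_H$ to be non-positive, and hence the left slope $(1-\widetilde{\alpha})\beta$ of $\phi$ at $v_H$ to be at most its right slope $(1-\widetilde{\alpha})(n-1)F(v_H)^{n-2}f(v_H)$. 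For (DM2), $\phi$ and $u$ already agree on $[0,v_L]\cup[r,1]$; on $(v_L,r)$, $u$ is constant at $(\widetilde{\alpha}/\eta+(1-\widetilde{\alpha}))F(v_L)^{n-1}$ while $\phi$ is affine increasing, so $\phi\geq u$ on $(v_L,r)$ reduces to $\phi(v_L^+)\geq u(v_L)$, which is exactly continuity of $\phi$ at $v_L$.

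The two cases then fall out cleanly. For case (ii), $v_L>0$: continuity of $\phi$ at $v_L$ secures (DM2) on $(v_L,r)$, and together with the interior convexity and convexity at $v_H$, convexity of $\phi$ at $v_L$ completes (DM1), giving sufficiency; necessity follows from uniqueness of the multiplier. For case (i), $v_L=0$: the left piece $[0,v_L]$ collapses, $u\equiv 0$ on $[0,r)$, and (DM2) on this interval becomes $\phi\geq 0$; since $\phi$ is affine increasing on $[0,r]$ with $\phi(r)=\widetilde{\alpha}>0$, this reduces to $\phi(0)\geq 0$. The main obstacle will be the convexity-at-$v_H$ step: I must carefully translate the MPC inequality (a second-order integral comparison) into a one-sided slope comparison at the contact point, leveraging the two-crossing property of $G$ and $F$ to argue that $G-F$ approaches zero from above as $v\to v_H^-$.
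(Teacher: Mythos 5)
Your proposal is correct and takes essentially the same approach as the paper: the published proof obtains the lemma by specializing its appendix reduction of Theorem \ref{DMO} (Lemma \ref{lemma_phi}, combined with Lemma \ref{lemma_full_dis} to rule out the extra lower affine piece and Lemma \ref{beta_properties} for the slope bound at $v_H$) to the structure of Proposition \ref{Gstruc}, which is exactly the (DM1)--(DM4) bookkeeping you carry out directly, concluding that the only possible failures are $\phi(0)<0$ when $v_L=0$ or a kink/discontinuity at $v_L$ when $v_L>0$. The only variations are minor: at $v_H$ the paper cites $\beta^*<[F(v_H)^{n-1}]'$ from its $D$-function analysis whereas you get the (weak) slope comparison from the two-crossing/MPC argument, which suffices; and your reduction of (DM2) on $(v_L,r)$ strictly yields only the one-sided inequality $\phi(v_L^+)\geq\phi(v_L^-)$ rather than continuity itself, but the equality is forced by (DM3) at $v_L$ together with the multiplier-uniqueness argument you invoke for necessity, so nothing is lost.
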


 To see how these failures can occur, consider a candidate distribution with no disclosure at the bottom, $v_L=0$. The associated payoff function is 0 on $[0,r)$ (see right panel of Figure \ref{fig_payoff}). If the slope of the payoff is relatively low on $I_a$, then so is the slope of the multiplier. In this case, it does not cross the horizontal axis, i.e., $\phi(0)\geq 0$, and it is above the payoff on the entire interval $[0,r)$ as required by (DM2) (see right panel of Figure \ref{fig_phi}). However, if the multiplier is too steep, then its horizontal intercept is strictly positive, leading to a violation of (DM-2) at valuations below the intercept. With disclosure at the bottom, a failure can also occur at the lower threshold $v_L$, though it takes on a more complex form: rather than just a requirement on the sign of $\phi(0)$, the optimality conditions must be strengthened to require continuity and convexity at $v_L$. Lemma \ref{optdu1} shows that these are the only possible failures of (DM1-4), while Proposition \ref{prop_r_exog} shows that the equilibrium conditions identified in Lemma \ref{optdu1} always have a unique solution.

\subsection{Endogenous Reservation Value}\label{sec_endog}
Building on the results of the previous section, we characterize the equilibrium with endogenous reservation value and its properties. To do so, we must impose the remaining equilibrium condition \eqref{eq_r_j}, which links the disclosure strategy used by firms and the inexperienced type's reservation value. Thus, the characterization of the previous section links an exogenous $r$ to the equilibrium distribution $G_r(\cdot)$, while the search equation links an exogenous distribution $G(\cdot)$ to the inexperienced type's endogenous reservation value. To complete the characterization, we seek a fixed point where the equilibrium distribution $G_r(\cdot)$ is consistent (via \eqref{eq_r_j}) with the reservation value it induces.

In principle, finding such a fixed point could be complicated, because the endogenous reservation value depends (in general) on the entire distribution $G_r(\cdot)$. However, exploiting the fact that $G_r(\cdot)$ is an equilibrium with exogenous $r$, the search equation simplifies significantly. In particular, exploiting the equilibrium structure (Proposition \ref{Gstruc}), the search equation becomes\footnote{To see the equivalence, rewrite the integral in the search equation, $
    \int_{r}^1(v-r)dG_r=(1-G_r(v_H))E_{G}[v|v\in(v_H,1)]+(G_r(v_H)-G_r(r))E_G[v|v\in(r,v_H)]-(1-G_r(r))r.$
Note three features: (1) $G_r(v)=F(v)$ for $v\geq v_H$, (2) $G_r(r)=F(v_L)$, (3) $E_G[v|v\in(r,v_H)]=E_F[v|v\in(v_L,v_H)]$. To see (3) note that $G_r(\cdot)$ is a mean preserving contraction of $F(\cdot)$, but these coincide outside of $[v_L,v_H]$. Thus, $E_G[v|v\in(v_L,v_H)]=E_F[v|v\in(v_L,v_H)]$. Recognizing that $dG_r=0$ on $(v_L,r)$ yields (3). Substituting (1)-(3) into the previous equation delivers \eqref{eq_search2}.}
\begin{align}\label{eq_search2}
    \int_{v_L}^1(v-r)dF=s.
\end{align}

In this form, the search equation relates the equilibrium values of $v_L$ and $r$ to the (exogenous) prior distribution $F(\cdot)$, rather than the equilibrium distribution $G_r(\cdot)$, allowing us to analyze it in a more straightforward way. The search equation resembles its counterpart in a benchmark with full information, but it has a crucial difference: the lower bound of integration is $v_L$, rather than $r$. An immediate implication is that the search equation \eqref{eq_search2}, coupled with the additional condition $v_L=r$, has a unique solution $v_L=r=r^{fi}$, where $r^{fi}$ is the reservation value in the full information benchmark.\footnote{In other words, $\int_{r}^1(v-r)dF=s\Rightarrow r=r^{fi}$.} Viewing the search equation as an implicit function $r^{se}(\cdot)$ that relates $v_L\in[0,1]$ to $r$, we find that it has a relatively simple shape: it starts at positive value $r^{se}(0)=\mu-s$, increases until it hits the 45\textdegree{}-line at the full information reservation value, $r^{fi}=r^{se}(r^{fi})$, and it decreases thereafter.\footnote{Treat \eqref{eq_search2} as an implicit function $r^{se}(\cdot)$ relating $v_L\in[0,1]$ to $r\in\mathbb{R}$. First, note that setting $v_L=r$, equation \eqref{eq_search2} reduces to the search equation under full information, which has a unique solution $r^{fi}$. Therefore, $r^{se}(\cdot)$ crosses the 45\textdegree{} line at $r^{fi}$. Second, note that at $v_L=0$, the left hand side of \eqref{eq_search2} reduces to $\mu-r^{se}(0)$, and therefore $r^{se}(0)=\mu-s>0$. Finally, note that the left-hand side of \eqref{eq_search2} is decreasing $r$, but it is increasing in $v_L$ if and only if $v_L<r$. Therefore, $r^{se}(\cdot)$ is increasing if it lies above the 45\textdegree{}-line ($r^{se}(v_L)>v_L$), and is decreasing below. Combining these observations, $r^{se}(\cdot)$ starts at a positive value ($r^{se}(0)=\mu-s$), increases until it hits the 45\textdegree{} line at $(r^{fi},r^{fi})$, and decreases thereafter.  }
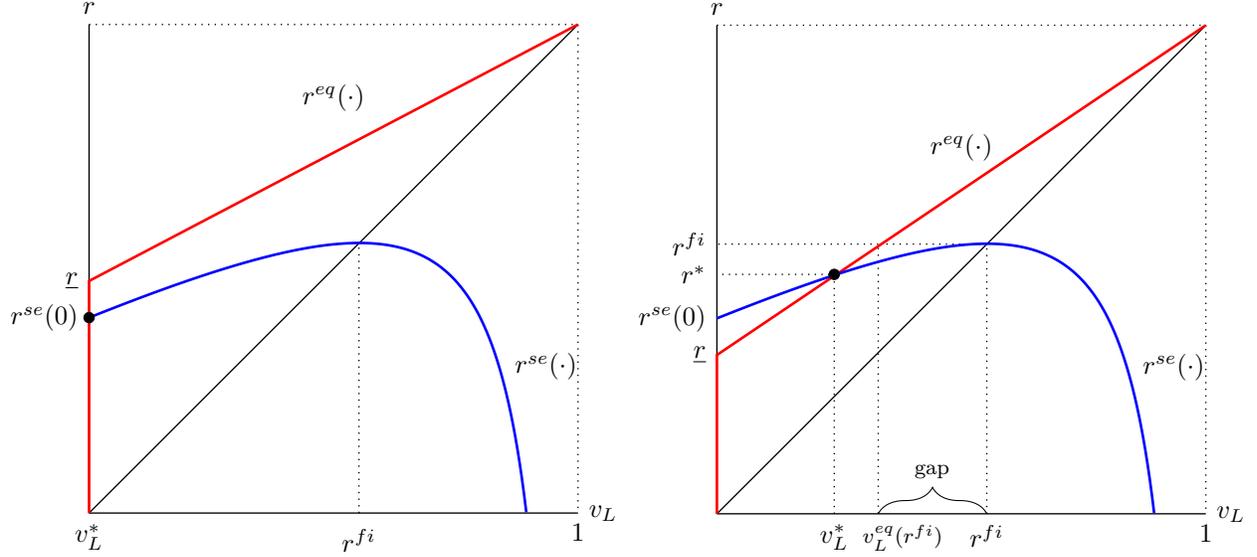
\begin{figure}
\begin{minipage}{0.5\textwidth}
\begin{tikzpicture}[scale=0.65]

\pgfmathsetmacro{\al}{0.95}
\pgfmathsetmacro{\s}{0.1}
 \pgfmathsetmacro{\veq}{10*(1-2*sqrt{\s})}
 \pgfmathsetmacro{\rF}{10*(1-sqrt{2*\s})}
 \pgfmathsetmacro{\root}{10*sqrt{(1-2*\s)}}

\draw[line width=0.5pt] (0,0) -- (0,10);     
\draw[line width=0.5pt] (0,0) -- (10,0);     
\draw[line width=0.5pt, dotted] (0,10) -- (10,10) -- (10,0);  
\draw[line width=0.5pt] (0,0) -- (10,10);     
\draw[line width=0.5pt, dotted] 
(5.52, 0) -- (5.52, 5.52);

\draw[line width=1pt,red] (0,0) -- (0,10*0.5*\al)--(10,10);
\draw[color=blue, domain=0:\root, line width=1pt,samples=200] 
    plot (\x, {10*(1-\s-0.5*(1-(\x/10)^2)-(\x/10)^2)/(1-(\x/10))});

\fill (10,0) node[below] {\footnotesize{$1$}};
\fill (10,0) node[right] {\footnotesize{$v_L$}};
\fill (0,10) node[above] {\footnotesize{$r$}};
\fill (0,10*0.5*\al) node[left] {\footnotesize{$\underline{r}$}};
\fill (5,9) node[below] {\footnotesize{$r^{eq}(\cdot)$}};
\fill (5.52,-0.10) node[below] {\footnotesize{$r^{fi}$}};
\fill (0,0) node[below] {\footnotesize{$v_L^*$}};

\fill (0,5-10*\s) node[left] {\footnotesize{$r^{se}(0)$}};
\fill (10.2,3) node[left] {\footnotesize{$r^{se}(\cdot)$}};

\filldraw[black] (0, 4) circle (3pt);

\end{tikzpicture}
\end{minipage}
\begin{minipage}{0.5\textwidth}
\begin{tikzpicture}[scale=0.65]

\pgfmathsetmacro{\al}{0.65}
\pgfmathsetmacro{\s}{0.1}
 \pgfmathsetmacro{\veq}{10*(1-sqrt{2*\s/(1-\al))})}
 \pgfmathsetmacro{\req}{10*(1-(2-\al)*sqrt{2*\s/(1-\al)/2})}
 \pgfmathsetmacro{\rF}{10*(1-sqrt{2*\s})}
 \pgfmathsetmacro{\root}{10*sqrt{(1-2*\s)}}

\draw[line width=0.5pt] (0,0) -- (0,10);     
\draw[line width=0.5pt] (0,0) -- (10,0);     
\draw[line width=0.5pt, dotted] (0,10) -- (10,10) -- (10,0);  
\draw[line width=0.5pt] (0,0) -- (10,10);     
\draw[line width=0.5pt, dotted] 
(5.52, 0) -- (5.52, 5.52);

 \draw[line width=0.5pt,dotted] (2.4,0) -- (2.4,4.9)--(0,4.9);

 \draw[line width=0.5pt,dotted] (5.52,5.52) -- (3.3,5.52)--(3.3,0);
\draw[line width=0.5pt,dotted] (3.3,5.52)--(0,5.52);

\draw[decorate,decoration={brace,amplitude=10pt}] 
    (3.3,0) -- (5.52,0) ;

\draw[line width=1pt,red] (0,0) -- (0,10*0.5*\al)--(10,10);
\draw[color=blue, domain=0:\root, line width=1pt, samples=200] 
    plot (\x, {10*(1-\s-0.5*(1-(\x/10)^2)-(\x/10)^2)/(1-(\x/10))});

\fill (10,0) node[below] {\footnotesize{$1$}};
\fill (10,0) node[right] {\footnotesize{$v_L$}};
\fill (0,10) node[above] {\footnotesize{$r$}};
\fill (0,10*0.5*\al) node[left] {\footnotesize{$\underline{r}$}};
\fill (5,8) node[below] {\footnotesize{$r^{eq}(\cdot)$}};
\fill (5.52,0) node[below] {\footnotesize{$r^{fi}$}};
\fill (0,5.52) node[left] {\footnotesize{$r^{fi}$}};
\fill (2.4,0) node[below] {\footnotesize{$v_L^*$}};
\fill (0,4.9) node[left] {\footnotesize{$r^*$}};
\fill (0,5-10*\s) node[left] {\footnotesize{$r^{se}(0)$}};
\fill (10.2,3) node[left] {\footnotesize{$r^{se}(\cdot)$}};
\fill (3.8,0) node[below] {\scriptsize{$v_L^{eq}(r^{fi})$}};
\fill (4.4,0.5) node[above] {\scriptsize{gap}};

\filldraw[black] (2.4, 4.9) circle (3pt);

\end{tikzpicture}
\end{minipage}

\caption{\label{fig_endog}\textit{Equilibrium With Endogenous Reservation Value}. Equilibrium $(r^*,v_L^*)$ shown as black dot. Left panel: No disclosure at the bottom. Right panel: Disclosure at the bottom. Both panels drawn for the uniform distribution; $\alpha$ higher in left panel.}
\end{figure}

Our characterization of the equilibrium with exogenous reservation value in Proposition \ref{prop_r_exog}, relates $r$ and $v_L$ via function $v_L^{eq}(r)$ or its inverse correspondence $r^{eq}(v_L)$. Meanwhile, the search equation relates $r$ and $v_L$ via the function $r^{se}(\cdot)$, whose properties are described in the preceding paragraph. The following proposition shows that these two objects interrelate in a simple manner, characterizing the equilibrium with endogenous reservation value.

\begin{proposition}\label{prop_r_endog}(Endogenous Reservation Value). Consider the game with endogenous reservation value. For each $(n,\alpha,s,F(\cdot))$, an equilibrium exists and is unique. In particular, the equilibrium strategy $G^*(\cdot)$ takes the form of \eqref{eq_G}, with unique values $\{v_L^*,v_H^*,\beta^*,r^*\}$. Furthermore, the equilibrium has the following properties, 
\begin{itemize}
    \item if $\underline{r}(n,\alpha)\geq r^{se}(0)=\mu-s$, then there is no disclosure at the bottom ($v_L^*=0$, $r^*=\mu-s$).
    \item if $\underline{r}(n,\alpha)< r^{se}(0)=\mu-s$, then there is disclosure at the bottom ($v_L^*>0$, $r^*>\mu-s$). 
     \item  the reservation value is strictly smaller than under full info  ($r^*<r^{fi}$).
\end{itemize}
\end{proposition}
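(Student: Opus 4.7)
The plan is to recast the equilibrium conditions as a fixed-point problem in the $(v_L, r)$-plane. Proposition \ref{prop_r_exog} associates to each exogenous $r$ a unique equilibrium lower threshold $v_L^{eq}(r)$; inverting delivers a curve $r^{eq}(\cdot)$. Separately, the simplified search equation \eqref{eq_search2} defines the curve $r^{se}(\cdot)$. Any pair $(v_L^*, r^*)$ at which these coincide yields an equilibrium by invoking Proposition \ref{prop_r_exog} at $r = r^*$, with $v_H^*$ and $\beta^*$ then pinned down via Lemma \ref{feas}; conversely, every equilibrium must correspond to such a pair. The proposition therefore reduces to showing that the two curves intersect exactly once and locating the intersection.

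The preceding discussion pins down the shape of $r^{se}(\cdot)$: it starts at $r^{se}(0) = \mu - s > 0$, strictly increases until it meets the diagonal at $(r^{fi}, r^{fi})$, and strictly decreases afterward. From Proposition \ref{prop_r_exog}, the curve $r^{eq}(\cdot)$ is continuous and strictly increasing for $v_L > 0$, satisfies $r^{eq}(0^+) = \underline{r}(n, \alpha)$, and stays strictly above the diagonal (because $v_L < r$ in any disclosure-at-the-bottom equilibrium). With these properties I would split into the cases $\underline{r} \geq \mu - s$ and $\underline{r} < \mu - s$. In the former, the pair $(v_L^*, r^*) = (0, \mu - s)$ solves both conditions: the search equation with $v_L = 0$ forces $r = \mu - s$, and since $r^* \leq \underline{r}$, Proposition \ref{prop_r_exog}(i) gives $v_L^{eq}(r^*) = 0$. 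In the latter, the continuous function $\Delta(v_L) \equiv r^{eq}(v_L) - r^{se}(v_L)$ satisfies $\Delta(0^+) = \underline{r} - (\mu - s) < 0$ and $\Delta(r^{fi}) = r^{eq}(r^{fi}) - r^{fi} > 0$, so the intermediate value theorem yields a zero $v_L^* \in (0, r^{fi})$, and $r^* = r^{se}(v_L^*) < r^{fi}$ by strict monotonicity of $r^{se}$ on $[0, r^{fi}]$.

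The third bullet ($r^* < r^{fi}$) is immediate in the disclosure-at-the-bottom regime; in the no-disclosure regime it follows from $\mu - s < r^{fi}$, which I would verify by substituting $r = \mu - s$ into $g(r) = \int_r^1 (v - r)\, dF$ and using that $g$ is strictly decreasing with $g(r^{fi}) = s$. The main obstacle is uniqueness. In the case $\underline{r} < \mu - s$, both $r^{eq}$ and $r^{se}$ are increasing on $(0, r^{fi})$, so multiple crossings are a priori possible. I would establish a single-crossing property, namely $\Delta'(v_L^*) > 0$ at any zero of $\Delta$: implicit differentiation of \eqref{eq_search2} gives $dr^{se}/dv_L = (r^{se} - v_L) f(v_L) / (1 - F(v_L))$, while the slope of $r^{eq}$ follows from the formula for $\beta^*$ in \eqref{eq_BET} combined with the optimality characterization of Lemma \ref{optdu1}. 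In the case $\underline{r} \geq \mu - s$, uniqueness of the boundary solution additionally requires ruling out interior crossings, which I would handle by combining $r^{eq}(v_L) > \underline{r} \geq r^{se}(0)$ with the shape of $r^{se}$ to argue $r^{eq} > r^{se}$ throughout $(0, 1)$.
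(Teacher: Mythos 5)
Your overall architecture is exactly the paper's: reduce the problem to a crossing of the curve $r^{eq}(\cdot)$ (the inverse of $v_L^{eq}(\cdot)$ from Proposition \ref{prop_r_exog}) with the search-equation curve $r^{se}(\cdot)$ from \eqref{eq_search2}, get existence by intermediate value in the two cases, and get uniqueness from a single-crossing (slope-comparison) property at any interior intersection. The existence part and the bullet $r^*<r^{fi}$ are handled essentially as in the paper and are fine. However, there is a genuine gap at the mathematical heart of the uniqueness claim: you assert $\Delta'(v_L^*)>0$ at any zero of $\Delta$ and say the slope of $r^{eq}$ "follows from the formula for $\beta^*$ in \eqref{eq_BET} combined with Lemma \ref{optdu1}," but this is precisely the nontrivial step. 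In the paper it requires totally differentiating the equilibrium condition $Z(v_L,v_L,r^{eq})=0$ (continuity of the multiplier at $v_L$, with $\eta=E_F[F(v)^{n-1}\mid v\geq v_L]$ substituted in), plugging in the comparative statics $\tfrac{d\beta^*}{dr}=\tfrac{\beta^*}{\widetilde{\mu}-r}$ and $\tfrac{d\beta^*}{dv_L}<0$ from Lemma \ref{beta_properties}, and then bounding the resulting expression to obtain
\begin{equation*}
\frac{dr^{eq}}{dv_L}\;\geq\;(r^{eq}-v_L)\frac{f(v_L)}{1-F(v_L)}+\frac{\widetilde{\mu}-r^{eq}}{\widetilde{\mu}-v_L}\Bigl(1-\frac{[F(v_L)^{n-1}]'}{(1-\alpha)\beta^*}\Bigr)+\frac{[F(v_L)^{n-1}]'}{\beta^*},
\end{equation*}
which strictly exceeds $\tfrac{dr^{se}}{dv_L}=(r^{se}-v_L)\tfrac{f(v_L)}{1-F(v_L)}$ at a crossing only because the convexity condition in Lemma \ref{optdu1} gives $(1-\alpha)\beta^*\geq[F(v_L)^{n-1}]'$ and Bayes plausibility gives $\widetilde{\mu}>r^{eq}$. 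None of this chain appears in your plan, so uniqueness is promised rather than proved.

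Second, your proposed uniqueness argument in the no-disclosure case ($\underline{r}\geq\mu-s$) does not work as written: the inequality $r^{eq}(v_L)>\underline{r}\geq r^{se}(0)$ compares $r^{eq}(v_L)$ with $r^{se}(0)$, not with $r^{se}(v_L)$, and since $r^{se}(\cdot)$ rises from $\mu-s$ up to $r^{fi}>\mu-s$ before falling, it can exceed $\underline{r}$ at interior $v_L$; so this comparison cannot rule out interior crossings (which would be additional equilibria with disclosure at the bottom). The correct fix is the one the paper uses: the same single-crossing property applies here, since if $\Delta(0^+)\geq 0$ and $\Delta$ ever turned negative, the first downward crossing would be a zero at which $r^{eq}$ crosses $r^{se}$ from above, contradicting the slope comparison. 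So your plan is repairable, but as it stands the boundary-case argument is wrong and the decisive slope inequality is missing.
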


Proposition \ref{prop_r_exog} is illustrated in Figure \ref{fig_endog}. In each panel, the red curve represents the correspondence $r^{eq}(\cdot)$, while the blue curve represents $r^{se}(\cdot)$. The equilibrium combination $(r^*,v_L^*)$ occurs at the intersection, illustrated by a black dot. Several findings of Proposition \ref{prop_r_exog} can be deduced from this figure. First, notice that whenever $\underline{r}\geq r^{se}(0)$, an equilibrium with no disclosure at the bottom exists ($v_L^*=0,r^*=r^{se}(0)$).  Second, notice that in the right panel, where $\underline{r}<r^{se}(0)$, a crossing occurs at an interior $v_L^*\in(0,r^{fi})$, implying existence of an equilibrium with disclosure at the bottom. To see that this is true in general, consider the equilibrium with exogenous reservation value $r=r^{fi}$. The lower disclosure threshold in such an equilibrium must be strictly smaller than the full information reservation value, $v^{eq}(r^{fi})<r^{fi}$, as illustrated in the right panel of Figure \ref{fig_endog}. Furthermore, $r^{fi}$ is always the peak of $r^{se}(\cdot)$. Therefore,  at $v^{eq}(r^{fi})<r^{fi}$, the blue curve $r^{se}(\cdot)$ must lie below the red curve $r^{eq}(\cdot)$. Of course with $\underline{r}<r^{se}(0)$, the blue curve lies above the red at $v_L=0$. By implication, a crossing must exist at some $v_L\in(0,r^{fi})$. 

We have illustrated graphically that an equilibrium always exists, and that it has the properties described in Proposition \ref{prop_r_endog}. In the proof, we also show that the equilibrium is unique. To do so, we establish that at any interior crossing, $r^{eq}(\cdot)$ must cross through $r^{se}(\cdot)$ from below. This property rules out an interior crossing when $\underline{r}\geq r^{se}(0)$, and it rules out multiple crossings when $\underline{r}<r^{se}(0)$.

One additional feature of our characterization is worth mentioning. Note that the parameters $(n,s,\alpha)$ appear exclusively in one of the two curves that determine the equilibrium with endogenous $r$. Indeed, market competitiveness ($n$) and ease of access to the savvy search technology ($\alpha$) affect $r^{eq}(\cdot)$, but they have no effect on the search equation $r^{se}(\cdot)$. Similarly $s$ only affects $r^{se}(\cdot)$. This separability simplifies subsequent analysis.

\section{Limits of Disclosure}
In this section, we explore the positive and normative implications of changes in the market structure $(n)$ and the search technology ($s,\alpha$). With respect to market structure, we compare a small market, a large finite market, and the infinite limit. We also examine how changes in search cost affect equilibrium informativeness and welfare in the limiting cases $s\rightarrow \{0,1\}$. 

\paragraph{Informativeness and Welfare.} To compare informativeness of equilibrium disclosure strategies, we use the notion of \textit{integral precision} \citep{ganuza2010signal}. 
\begin{definition}(Informativeness). Posterior mean distribution $G_0\in\text{MPC}(F)$ arises from a less informative signal than $G_1\in\text{MPC}(F)$, if and only if $G_0\in \text{MPC}(G_1)$. 
\end{definition}
In other words, the integral precision criterion ranks informativeness of signals according to to the dispersion of the posterior mean, where a less dispersed distribution is interpreted as a mean preserving contraction.

To compare the normative properties of disclosure strategies, we identify the expected surplus of each consumer type. A savvy consumer searches all firms; therefore, the savvy type's expected payoff is the expected value of the maximum order statistic. In contrast, an inexperienced consumer search until she find a value that exceeds the reservation value. The following lemma shows that an inexperienced consumer's surplus can also be represented as an order statistic, but from an adjusted distribution that accounts for the cost and optimal search strategy.

\begin{lemma}\label{surplus}(Consumer Surplus). Suppose the posterior mean $v\sim G$ and $r$ is the inexperienced consumer's optimal reservation value (from \eqref{eq_r_j}). (i) The savvy consumer's expected surplus is $\text{CS}_s=E_G[\max\{v_1,...,v_n\}]$. (ii) Let $\tilde{v}=\min\{v,r\}$. The inexperienced consumer's expected surplus is given by $\text{CS}_i=E_{G}[\max\{\tilde{v}_1,...,\tilde{v}_n\}]$. 
\label{lemma_4}
\end{lemma}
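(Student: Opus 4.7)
The plan is to handle part (i) directly and obtain part (ii) from the classical characterization of the value of a Pandora's box problem \citep{weitzman1979optimal}.

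For part (i), a savvy consumer visits every firm at zero cost and purchases the product with the highest realized posterior mean. Since the $v_i$ are i.i.d.\ draws from $G$, her expected surplus is simply $E_G[\max\{v_1,\ldots,v_n\}]$.

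For part (ii), the inexperienced consumer's decision problem is precisely a Pandora's box problem: $n$ boxes, each with reward distribution $G$ and inspection cost $s$. Under passive beliefs the conjectured distribution at each unsampled firm is $G$, so every box shares the common Weitzman index $r$ defined by \eqref{eq_r_j}. Weitzman's theorem then states that the value of the optimal policy equals $E[\max_i \min\{v_i, r_i\}]$; substituting $r_i = r$ yields $E_G[\max\{\tilde{v}_1,\ldots,\tilde{v}_n\}]$ with $\tilde{v}_i = \min\{v_i, r\}$, as claimed.

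If a self-contained derivation is preferred, I would induct on the number of unsampled firms. Writing $V_k(m)$ for the continuation value with $k$ unsampled boxes and current best $m$, Pandora's rule implies $V_k(m) = m$ for $m \geq r$ and $V_k(m) = -s + E_v[V_{k-1}(\max\{m, v\})]$ for $m < r$. The guess $V_k(m) = E[\max\{m, \min\{v_1, r\}, \ldots, \min\{v_k, r\}\}]$ is verified in the inductive step by splitting the inner expectation at $v = r$ and invoking \eqref{eq_r_j} to make the search cost $s$ cancel precisely; evaluating at the initial state (no firm yet sampled) produces the stated formula.

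The main obstacle, and it is a small one, is the bookkeeping of search costs: the representation records \emph{net} surplus, and it is the definition of $r$ in \eqref{eq_r_j} that makes each potential search cost absorb into a $\min\{v_i, r\}$ term. Once this cancellation is made explicit, the lemma follows mechanically.
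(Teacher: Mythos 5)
Your proof is correct, but it reaches the result by a different route than the paper. The paper computes $\text{CS}_i$ directly: it sums the expected surplus over the events ``stop at the $i$-th visited firm'' plus the event of exhausting all $n$ firms, uses the reservation equation \eqref{eq_r_j} in the form $\Pr(v\geq r)E_G[v\mid v\geq r]-s=\Pr(v\geq r)\,r$ to absorb each search cost, and telescopes the resulting geometric sum to $[1-G(r)^n]r+G(r)^nE_{G_{(n)}}[v\mid v<r]=E_{G_{(n)}}[\min\{v,r\}]$, which equals $E_G[\max\{\tilde v_1,\dots,\tilde v_n\}]$. You instead verify the value function $V_k(m)=E[\max\{m,\min\{v_1,r\},\dots,\min\{v_k,r\}\}]$ by backward induction on the Bellman equation, with the same cancellation $\int_r^1(v-r)\,dG=s$ doing the work in the inductive step; the paper's forward summation exploits the i.i.d.\ symmetric structure and is shorter, while your induction is more robust (it would survive heterogeneous boxes and does not require writing out the stopping-time distribution). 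Two small points to tighten: (a) the compact representation $E[\max_i\min\{v_i,r_i\}]$ is not literally a theorem of \citet{weitzman1979optimal}, which characterizes the optimal policy rather than this payoff identity, so your self-contained induction should carry the weight (or cite the later ``eventual-purchase'' literature); (b) at the initial state you must handle the fact that there is no sampled fallback and that the first visit is itself costly in this model (the paper's $\text{CS}_i=\mu-s$ with no disclosure at the bottom confirms this) — setting the initial $m=0$ and noting $r>0$ and $v\geq 0$ makes your formula deliver exactly $E_G[\max_i\min\{v_i,r\}]$ with the first-visit cost included, as you indicate in your final remark. Part (i) is immediate in both treatments.
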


Increases in informativeness (in our sense) always increase the consumer surplus of the savvy type.\footnote{Since $v^{max}=\max\{v_1,...,v_n\}$ is convex, a mean-preserving spread improves the expected value of $v^{max}$.} In contrast, inexperienced consumers also benefit from greater informativeness, but only if they have a positive probability of searching in equilibrium, which requires disclosure at the bottom ($v^*_L>0)$.\footnote{That inexperienced consumers (weakly) benefit from increases in informativeness is not necessarily apparent from the formula in Lemma \ref{surplus} alone, as it relies on properties of the reservation value. In particular, consider two arbitrary signals ordered by informativeness. Let the posterior mean distributions be $\Phi_0(\cdot)$ and $\Phi_1(\cdot)$, where $\Phi_0\in\text{MPC}(\Phi_1)$.  Keeping the search cost fixed, let $\widetilde{v}_0$ be computed from $\Phi_0(\cdot)$ and $\widetilde{v}_1$ from $\Phi_1(\cdot)$. Using properties of the reservation value it is possible to show $\widetilde{v}_0\in\text{MPC}(\widetilde{v}_1)$ (details available). In other words, a MPC of the posterior mean $\Phi(\cdot)$ also results in an MPC of $\widetilde{v}$.} With no disclosure at the bottom ($v_L^*=0$), the entire distribution $G(\cdot)$ is supported above $r$, and therefore $\widetilde{v}=r=\mu-s$. Thus, $\text{CS}_i=\mu-s$ regardless of the number of firms. Intuitively, with no disclosure at the bottom, inexperienced consumers stop at the first firm they visit, and their expected surplus is simply the value of a single draw.

\subsection{Market Structure}

Much of the existing literature on frictionless search finds that increasing competition among senders eventually leads to full disclosure \citep{ivanov2013information,au2020competitive,hwang2019competitive}. In contrast, the following proposition reveals that with a fraction of inexperienced consumers, a highly competitive market does not approach full disclosure.

\begin{proposition}(Market Structure).
For each $(s,\alpha,F(\cdot))$, a finite $ \underline{n}\geq 2$ exists with the following properties:
\begin{itemize}

       \item (Competition and Disclosure). In a large market, $n\geq \underline{n}$, the unique equilibrium has no disclosure at the bottom ($v_L^*=0$, $r^*=\mu-s$). In a small market, $n<\underline{n}$, the unique equilibrium has disclosure at the bottom ($v_L>0$, $r^*>\mu-s$).
    
    \item (Competition and Informativeness). For $n>\underline{n}$, equilibrium informativeness increases with $n$, but the equilibrium does not approach full disclosure as $n\to \infty$.
    
    \item (Paradox of Choice). In equilibrium, the inexperienced consumer visits more than one firm with positive probability if and only if $n<\underline{n}$.

    \item (The Rich Get Richer). The savvy type's equilibrium surplus $\text{CS}_s$ is increasing in $n$.

    \item (The Poor Get Poorer). The inexperienced type's equilibrium surplus $\text{CS}_i$ is larger with any $n<\underline{n}$, than with any $n'\geq \underline{n}$. For all $n'\geq \underline{n}$, the inexperienced type's equilibrium surplus is $\mu-s$, the expected value of a single visit.
    
\end{itemize}
\label{prop_large_n}
\end{proposition}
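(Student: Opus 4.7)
The proof exploits Proposition \ref{prop_r_endog}: the equilibrium has no disclosure at the bottom ($v_L^*=0$, $r^*=\mu-s$) if and only if $\underline{r}(n,\alpha) \geq r^{se}(0) = \mu - s$. Accordingly I define $\underline{n} \equiv \min\{n \geq 2 : \underline{r}(n,\alpha) \geq \mu - s\}$, then establish finiteness and the threshold structure, and from there derive each remaining property.

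\textbf{Step 1: Finiteness of $\underline{n}$ and threshold structure.} Finiteness reduces to showing the multiplier condition $\phi(0) \geq 0$ from Lemma \ref{optdu1} holds for the ``no disclosure at the bottom'' candidate with $r = \mu - s$ for sufficiently large $n$. In this candidate $G(r^*) = 0$, so $\eta = 1/n$ and $\widetilde{\alpha} = \alpha/(\alpha + n(1-\alpha)) = O(1/n)$. Bayes-plausibility (Lemma \ref{feas}) reduces to $\tfrac{n-1}{n}\, F(v_H)(v_H - r^*) = \int_0^{v_H} F(v)\,dv$, whose $n \to \infty$ limit has a unique solution $v_H^{\infty} \in (r^*, 1)$ by the intermediate value theorem. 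Hence $F(v_H) < 1$ eventually, forcing $\beta = F(v_H)^{n-1}/(v_H - r^*)$ to decay exponentially in $n$. Since $\phi(0) = \widetilde{\alpha} - (1-\widetilde{\alpha})\beta\, r^*$, the exponential decay of $\beta$ dominates the polynomial decay of $\widetilde{\alpha}$, giving $\phi(0) > 0$ eventually. For the threshold structure I argue that $\underline{r}(n,\alpha)$ is monotone nondecreasing in $n$: larger $n$ lowers the value of revealing low valuations (since the chance of being the maximum falls), weakening disclosure incentives and enlarging the range of $r$ that supports ``no disclosure at the bottom.''

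\textbf{Step 2: Structural consequences.} \emph{Competition and Disclosure} and the \emph{Paradox of Choice} follow immediately from Step 1 and Proposition \ref{prop_r_endog}: $v_L^* = 0$ implies $G^*$ places no mass on $[0, r^*]$, so the first posterior-mean draw exceeds $r^*$ and triggers stopping; $v_L^* > 0$ gives $\Pr(v_1 < r^*) = F(v_L^*) > 0$, triggering continuation. For \emph{Competition and Informativeness} with $n \geq \underline{n}$, both $v_L^* = 0$ and $r^* = \mu - s$ are fixed, so only $v_H^*$ (with $\beta^*$) varies. Implicit differentiation of the Bayes-plausibility equation from Step 1 shows $v_H^*$ is strictly decreasing in $n$, and a pointwise comparison of $\int_0^t G^*_n(v)\,dv$ establishes the mean-preserving spread ordering $G^*_{n+1}$ over $G^*_n$. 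Non-convergence to full disclosure is immediate: $v_L^* = 0$ and $r^* = \mu - s > 0$ for all $n \geq \underline{n}$ imply $G^*$ places zero mass on $[0, \mu - s]$ while $F$ places positive mass there.

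\textbf{Step 3: Welfare and main obstacle.} For \emph{Rich Get Richer}, $CS_s(n) = E_{G^*_n}[\max_i v_i]$ increases in $n$ for two reinforcing reasons: adding an iid draw raises the expected maximum under any fixed distribution, and within the regime $n \geq \underline{n}$, the MPS ordering from Step 2 raises it further via convexity of the max; crossing the threshold uses the additional fact that $G^*_n \in \mathrm{MPC}(F)$ for all $n$. For \emph{Poor Get Poorer}, apply Lemma \ref{surplus}(ii): $CS_i = E[\max_i \min(v_i, r^*)]$. When $n \geq \underline{n}$, all $v_i \geq r^*$ almost surely, so $CS_i = r^* = \mu - s$. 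When $n < \underline{n}$, $F(v_L^*) > 0$ makes $\Pr(v_1 < r^*) > 0$, and $\max_i \min(v_i, r^*)$ strictly exceeds $\min(v_1, r^*)$ with positive probability, yielding $CS_i > E[\min(v_1, r^*)] = \mu - s$ (the final equality is the reservation-value identity \eqref{eq_r_j}). The trickiest part is the monotonicity of $\underline{r}(n,\alpha)$ in Step 1: although the asymptotic $\phi(0) > 0$ is clear from the exponential-versus-polynomial rate comparison, ruling out ``flip-flopping'' at finite $n$ requires jointly tracking the opposing effects of $n$ on $\widetilde{\alpha}$ (decreasing) and $\beta$ (also decreasing, but at a rate that changes as the equilibrium crosses between the ``no disclosure at the top'' and ``disclosure at the top'' subregimes). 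The MPS comparison in Step 2 is also delicate because the relevant CDFs differ only on an endogenous interval $[r^*, v_H^*(n)]$ whose right endpoint shifts with $n$.
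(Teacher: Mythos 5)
Your route is the same as the paper's (reduce everything to whether $\underline{r}(n,\alpha)\geq\mu-s$ via Proposition \ref{prop_r_endog}, asymptotics of $\beta^*$ for finiteness, a single-crossing/MPS argument for informativeness, and the order-statistic surplus formulas), but there is a genuine gap exactly at the step you flag as the ``main obstacle'' and then leave unresolved: the monotonicity of $\underline{r}(n,\alpha)$ in $n$. Your Step 1 only establishes that $\phi(0)>0$ for all sufficiently large $n$ at the fixed value $r=\mu-s$ (exponential decay of $\beta$ versus polynomial decay of $\widetilde{\alpha}$), i.e., that the set $\{n:\underline{r}(n,\alpha)\geq\mu-s\}$ contains a tail. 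With $\underline{n}$ defined as the minimum of that set, nothing you prove rules out an intermediate $n>\underline{n}$ with $\underline{r}(n,\alpha)<\mu-s$; in that case the ``large market'' half of Competition and Disclosure, the ``small market'' half, the ``if and only if'' in the Paradox of Choice, and the clean dichotomy in Poor Get Poorer would all fail. The verbal argument you offer (``larger $n$ lowers the value of revealing low valuations'') is a heuristic, not a proof. The paper closes this hole with a concrete computation (Lemma \ref{beta_properties}): $n\beta^*_n(0,r)$ is weakly decreasing in $n$ — using $\partial\beta^*/\partial v_H=0$ so that, when $v_H(n)<1$, $\frac{d(n\beta^*_n)}{dn}=\frac{F^n(v_H)\ln F(v_H)}{\int_0^{v_H}(v-r)f(v)dv}<0$, and the closed form $n\beta^*_n=\frac{1}{\mu-r}$ (constant in $n$) when $v_H=1$ — together with $\frac{d\beta^*}{dr}>0$. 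Implicit differentiation of the defining equation $\alpha=(1-\alpha)\,n\beta^*_n(0,\underline{r})\,\underline{r}$ then gives $\underline{r}$ increasing in $n$, and $n\beta^*_n(0,r)\to 0$ gives $\underline{r}\to\mu$, which is what delivers both finiteness and the threshold structure simultaneously. Your fixed-$r$ exponential-decay observation does not substitute for this monotonicity.

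The remaining steps essentially track the paper and are fine modulo sketchiness: the informativeness comparison is the paper's single-crossing argument (using $v_H^*(n)$ decreasing, eq. \eqref{eq_vH_n}), non-convergence follows because all mass stays above $\mu-s$, and Poor Get Poorer uses the same identity $E_G[\min\{v,r^*\}]=\mu-s$ from \eqref{eq_r_j} that the paper establishes. One caveat on Rich Get Richer: your claim that monotonicity across the threshold and within the small-market regime follows from ``adding an iid draw'' plus $G_n^*\in\text{MPC}(F)$ is not an argument — the small- and large-market distributions are not MPS-comparable (they cross twice), so the extra-draw effect and the distributional shift must be disentangled; note the paper's own proof of this bullet is stated only for $n'>n\geq\underline{n}$, where the MPS ordering applies.
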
  

Proposition \ref{prop_large_n} is a significant departure from the literature on frictionless search, where competitive pressure leads to full disclosure in a large economy.\footnote{As long as search is frictionless, this is also true if consumers have an exogenous reservation value, as in \citet{hwang2019competitive}. As explained in the subsequent discussion and Remark \ref{whyno}, the difference comes down to rates of convergence: with frictionless search (even with an exogenous outside option) the entire payoff function approaches 0 at an exponential rate. In contrast, in our model, the lower part approaches 0 at an exponential rate, while the upper part of the payoff function approaches 0 at a much slower rate, $\text{O}(n^{-1})$. Therefore, the lower part of the payoff becomes negligible, and the relatively slow decay of the upper part of the payoff ensures that the multiplier never crosses through zero, supporting an equilibrium with no disclosure at the bottom.} The difference must be rooted in the search friction of the inexperienced consumers, but it may be initially unclear why the inexperienced consumers have any effect at all in a large market. Indeed, with many firms, it is extremely unlikely that any particular firm will be visited by an inexperienced consumer, and a firm is nearly certain that a consumer who visits is savvy. It is therefore reasonable to expect that the equilibrium might converge to full disclosure, the unique equilibrium when all consumers are savvy. 

This logic is incomplete however, because firms' disclosure incentives are determined by the probability of sale, not just the firm's belief about the consumer's type. Although it is nearly certain that a consumer who visits is savvy, a firm is also nearly certain that if it discloses a value below $r$ verbatim, it will not make a sale. Indeed, with high levels of competition, only sufficiently high valuations have a reasonable chance of capturing a savvy type. If instead, the firm shifts the mass on low valuations above the reservation value, then the firm can make a sale with probability one when it is visited by an inexperienced consumer. Though such a visit is unlikely, at low valuations such a strategy is more likely to result in a sale than a straightforward disclosure. Consequently, firms shift all mass below the reservation value above it, resulting in a distribution with no disclosure at the bottom.

\begin{remark}\label{whyno} For the interested reader, we provide a detailed explanation. In the proof of Proposition \ref{prop_large_n}, we show that $\underline{r}(n)\rightarrow \mu$ as $n\rightarrow\infty$. Together with Proposition \ref{prop_r_endog}, this finding implies that for any $s>0$ there is no disclosure at the bottom in a sufficiently large market ($\underline{r}>\mu-s$). Therefore, we must explain the key point $\underline{r}\rightarrow\mu$, or equivalently, why an equilibrium with exogenous $r<\mu$ must have no disclosure at the bottom when $n$ is sufficiently large. To facilitate this explanation, consider exogenous $r<\mu$ and suppose that all firms select the same disclosure $G(\cdot)$, not necessarily an equilibrium.  As $n\rightarrow\infty$, the probability of an inexperienced consumer's visit $\eta\rightarrow 0$ at polynomial rate $\text{O}(n^{-1})$. Thus, the posterior belief $\widetilde{\alpha}\rightarrow 0$, consistent with the initial intuition. To see why the initial intuition fails, consider the payoff function in \eqref{eq_u}. Substituting for $\widetilde{\alpha}$, we have
\begin{align*}
      u(v)=\frac{G(v)^{n-1}}{\alpha\eta+(1-\alpha)}\text{ if }v<r \qquad\text{and}\qquad
    u(v)=\frac{\alpha\eta+(1-\alpha)G(v)^{n-1}}{\alpha\eta+(1-\alpha)} \text{ if }v\geq r.
\end{align*}
As $n\rightarrow\infty$, both the lower part of the payoff ($v<r$) and the higher part ($v\geq r$) approach 0, but at different rates. In particular, at every $v$ the lower part approaches 0 at exponential rate $\text{O}(k^{n-1})$. Meanwhile, with $\alpha>0$, the upper part approaches 0 at (slower) polynomial rate $\text{O}(n^{-1})$.\footnote{The denominators are identical and approach $1-\alpha>0$. Therefore, the rate of convergence is determined by the numerators. The numerator of the lower part ($G(v)^{n-1}$) approaches 0 at exponential rate. The numerator of the upper part $\alpha\eta+(1-\alpha)G(v)^{n-1}$ approaches 0 at polynomial rate $\text{O}(n^{-1})$. The distinction comes from the term $\alpha\eta$, which accounts for sales to inexperienced consumers. Obviously, when $\alpha=0$ both parts approach 0 at an exponential rate.} In other words, from the perspective of the firm in a large market, the lower part of the payoff is negligible compared to the upper part. Therefore, it is optimal for a firm to best respond to $G(\cdot)$ by shifting as much mass as possible to the higher part (above $r$); because $r<\mu$, shifting all mass above $r$ is feasible.

\end{remark}

Obviously, with $r=\mu-s$ and no disclosure at the bottom, the equilibrium could not possibly approach full disclosure as competition grows. Nevertheless, part of the intuition of frictionless models carries over. Indeed, an increase in the number of firms does result in a more informative equilibrium, but the spread in mass is confined to valuations above $r^*=\mu-s$. In particular, the proof of Proposition \ref{prop_large_n} establishes that $v_H^*(n)$ is decreasing in $n$, approaching a limit $v^\infty_H\in(\mu-s,1)$---thus, the limiting distribution has an interval of disclosure at the top, but it does not approach full disclosure. Intuitively, as competition intensifies, firms focus on highlighting the most attractive features of their products while downplaying, concealing, or manipulating less favorable information in hopes of converting (inexperienced) first time visitors. In other words, a firm expands the interval of high values it discloses, which requires a downward shift in mass (confined to $v\geq r$) to compensate. The implications are evident in the left panel of Figure \ref{fig_large_n}, which illustrates two typical posterior mean distributions, with $n'>n>\underline{n}$. Both distributions have no disclosure at the bottom, and the reservation value is $r^*=\mu-s$. In the more competitive market, the high disclosure threshold is lower, more mass is shifted left, and the posterior mean arises from a more informative signal.

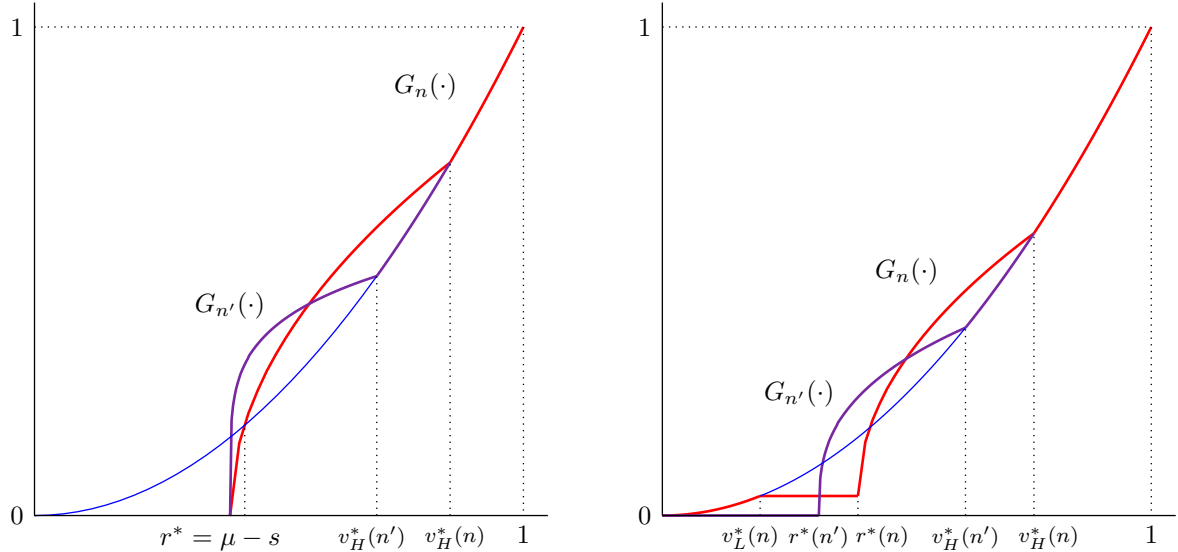
\begin{figure}
\begin{minipage}{0.5\textwidth}
\begin{tikzpicture}[scale=0.65]

\definecolor{violet}{RGB}{120, 40, 160}

\pgfmathsetmacro{\vL}{0}
\pgfmathsetmacro{\r}{4}
\pgfmathsetmacro{\vH}{8.5}
\pgfmathsetmacro{\vHH}{7}

\draw[color=red, domain=\r:\vH, line width=1pt] 
    plot (\x, {0.1*\vL^2 + 0.1*(\vH^2 - \vL^2)*((\x - \r)/(\vH - \r))^0.5});

\draw[color=violet, domain=\r:\vHH, line width=1pt,samples=100] 
    plot (\x, {0.1*\vL^2 + 0.1*(\vHH^2 - \vL^2)*((\x - \r)/(\vHH - \r))^0.2});

\draw[color=red, domain=\vH:10, line width=1pt] 
    plot (\x, {0.1*\x^2});

\draw[color=blue, domain=\vL:\vHH, line width=0.5pt] 
    plot (\x, {0.1*\x^2});
\draw[color=violet, domain=\vHH:\vH, line width=1pt] 
    plot (\x, {0.1*\x^2});

\draw[line width=0.5pt] (0,0) -- (0,10.5);     
\draw[line width=0.5pt] (0,0) -- (10.5,0);     
\draw[line width=0.5pt, dotted] (0,10) -- (10,10) -- (10,0);  

\draw[line width=0.5pt, dotted] (\r,0) -- (\r, {0.1*\vL^2});
\draw[line width=0.5pt, dotted] (\vL,0) -- (\vL, {0.1*\vL^2});
\draw[line width=0.5pt, dotted] (\vH,0) -- (\vH, {0.1*\vH^2});
\draw[line width=0.5pt, dotted] (4.3,0) -- (4.3,1.8);  
\draw[line width=0.5pt, dotted] (\vHH,0) -- (\vHH, {0.1*\vHH^2});

\fill (\r-0.2,0) node[below] {\footnotesize{$r^*=\mu-s$}};
\fill (\vH+0.1,0) node[below] {\scriptsize{$v^*_H(n)$}};
\fill (\vHH-0.1,0) node[below] {\scriptsize{$v^*_H(n')$}};
\fill (10,0) node[below] {\footnotesize{$1$}};
\fill (0,0) node[left] {\footnotesize{$0$}};
\fill (0,10) node[left] {\footnotesize{$1$}};
 \fill (4,3.8) node[above] {\footnotesize{$G_{n'}(\cdot)$}};
 \fill (8,8.3) node[above]{\footnotesize{$G_n(\cdot)$}};
\end{tikzpicture}
\end{minipage}
\begin{minipage}{0.5\textwidth}
\begin{tikzpicture}[scale=0.65]

\definecolor{violet}{RGB}{120, 40, 160}

\pgfmathsetmacro{\vL}{2}
\pgfmathsetmacro{\r}{4}
\pgfmathsetmacro{\vH}{7.6}
\pgfmathsetmacro{\rH}{3.2}
\pgfmathsetmacro{\vHH}{6.2}

\draw[color=red, domain=0:\vL, line width=1pt] 
    plot (\x, {0.1*\x^2});

\draw[color=red, line width=1pt] 
    (\vL, {0.1*\vL^2}) -- (\r, {0.1*\vL^2});

\draw[color=red, domain=\r:\vH, line width=1pt] 
    plot (\x, {0.1*\vL^2 + 0.1*(\vH^2 - \vL^2)*((\x - \r)/(\vH - \r))^0.5});

\draw[color=red, domain=\vH:10, line width=1pt] 
    plot (\x, {0.1*\x^2});

\draw[color=violet, domain=\rH:\vHH, line width=1pt,samples=100] 
    plot (\x, {0.1*(\vHH^2 )*((\x - \rH)/(\vHH - \rH))^0.35});
\draw[color=violet, line width=1pt] 
    (0,0)--(\rH,0);
\draw[color=violet, domain=\vHH:\vH, line width=1pt] 
    plot (\x, {0.1*\x^2});

\draw[color=blue, domain=\vL:\vHH, line width=0.5pt] 
    plot (\x, {0.1*\x^2});

\draw[line width=0.5pt] (0,0) -- (0,10.5);     
\draw[line width=0.5pt] (0,0) -- (10.5,0);     
\draw[line width=0.5pt, dotted] (0,10) -- (10,10) -- (10,0);  

\draw[line width=0.5pt, dotted] (\r,0) -- (\r, {0.1*\vL^2});
\draw[line width=0.5pt, dotted] (\vL,0) -- (\vL, {0.1*\vL^2});
\draw[line width=0.5pt, dotted] (\vH,0) -- (\vH, {0.1*\vH^2});
\draw[line width=0.5pt, dotted] (\vHH,0) -- (\vHH, {0.1*\vHH^2});

\fill (\r+0.5,0) node[below] {\scriptsize{$r^*(n)$}};
\fill (\rH,0) node[below] {\scriptsize{$r^*(n')$}};
\fill (\vL-0.2,0) node[below] {\scriptsize{$v^*_L(n)$}};
\fill (\vH+0.3,0) node[below] {\scriptsize{$v^*_H(n)$}};
\fill (\vHH,0) node[below] {\scriptsize{$v^*_H(n')$}};
\fill (10,0) node[below] {\footnotesize{$1$}};
\fill (0,0) node[left] {\footnotesize{$0$}};
\fill (0,10) node[left] {\footnotesize{$1$}};
\fill (2.8,2) node[above] {\footnotesize{$G_{n'}(\cdot)$}};
\fill (5,4.5) node[above] {\footnotesize{$G_{n}(\cdot)$}};

\end{tikzpicture}
\end{minipage}

\caption{\label{fig_large_n} Left panel: \textit{Competition and Informativeness} ($n'>n>\underline{n}$).  Right panel: \textit{Paradox of Choice} ($n'>\underline{n}>n$).}
\end{figure}

This characterization of the equilibrium has intriguing implications for inexperienced consumers' search behavior. Whenever $\underline{n}>2$, firms' equilibrium disclosure strategies assign positive probability to values below $r$ if the market is small ($n<\underline{n}$), but not if the market is large ($n\geq \underline{n}$). Therefore, in a small market a realized value below $r$ leads to active search, but in a large market, the probability of such a draw is zero, and an inexperienced consumer always stops at the first firm she visits. In other words, an inexperienced consumer actively searches in a small market, but not in a large market. The right panel of Figure \ref{fig_large_n} illustrates two typical equilibrium distributions, with $n'>\underline{n}>n$. The distribution in the smaller market has disclosure at the bottom, which leads to an active search with positive probability, while the larger market does not.

Given its connection to the psychology literature on ``choice overload,''  surveyed in \citet{S2004}, we refer to this finding as the \textit{paradox of choice}. In contrast to the related psychology and marketing literature, in our model the paradox of choice arises from firms' strategic incentives for disclosure, not from cognitive strain, fatigue, or overwhelm among consumers.\footnote{Unlike the explanations based on bounded rationality, in our model, it is neither the number of current alternatives under consideration, nor the number of previously sampled alternatives that determines whether the inexperienced consumer searches. Rather, the mere \textit{existence} of a large number of (unsampled) options causes firms to foreclose search strategically.} The paradox of choice is also distinct from the informational Diamond paradox \citep{au2023attraction}. Indeed, in the informational Diamond paradox, the (inexperienced) consumers never search beyond the first firm, regardless of how many firms are in the market. In contrast, when the paradox of choice applies ($\underline{n}>2$), inexperienced consumers actively search when the number of firms is small, but not when it is large. Furthermore, the paradox of choice also does not arise in a standard \citet{weitzman1979optimal} search model. Indeed, if the distribution of prizes is fixed, then the optimal search strategy depends only on the reservation value derived from the distribution of prizes, not on the number of boxes available. In other words, it is the strategic interaction between firms (boxes) that links the distribution of valuations (prizes) to the degree of market competition, which allows the paradox of choice to emerge. 

Despite a mix of savvy and inexperienced consumers and competition between firms, the paradox of choice also does not emerge in \citet{stahl1989oligopolistic}'s study of pricing in search markets. Indeed, a major finding of this paper is that the support of the equilibrium price distribution is confined below the reservation price, and inexperienced consumers stop at the first firm they visit.\footnote{Because \citet{stahl1989oligopolistic} studies pricing, the ``direction'' is reversed: consumers stop when they draw a price \textit{below} the reservation price.} Though firms play mixed strategies which determine the distribution of prices and surplus offered by each firm, these distributions are not bound by Bayes-Plausibility. In contrast, when designing information structures, the distribution of surplus $G$ at each firm must be Bayes-Plausible. When the reservation value is relatively high (as in small markets),  firms must allocate mass below it to satisfy the Bayes-Plausibility constraint, which leads to active search and the paradox of choice.\footnote{Recall from the discussion of \eqref{eq_BET}, that $v_L$ must be large enough that $E_F[v\mid v\in(v_L,v_H)]>r$. Thus, large $r$ requires $v_L>0$.}

The paradox of choice also has significant implications for welfare. Obviously, savvy consumers strictly benefit whenever the market becomes more competitive. Inexperienced consumers, however, might be strictly harmed. When the paradox of choice is at play ($\underline{n}>2$) inexperienced consumers engage in active search in a small market. Such consumers always have the option not to search past the first firm, but because they choose to do so, active search must make them better off. In other words, in a small market, inexperienced consumers' expected surplus must exceed the expected surplus of  taking a single draw ($\mu-s$). In a large market, however, firms' disclosure strategies preempt inexperienced consumers' search, reducing their expected surplus to the payoff of a single draw. By implication, inexperienced consumers are strictly better off in a small market than a large one.\footnote{Note that this effect is not driven by changes in \textit{informativeness} of the signals in small and large markets. Indeed, as can be seen in the right panel of Figure \ref{fig_large_n}, the corresponding equilibrium distributions cross twice, and they are therefore not comparable by mean preserving spreads. Rather the result is driven by the \textit{type} of information that is concealed in a large market.}

\begin{remark} One may wonder about the parameters that determine whether the paradox of choice arises ($\underline{n}>2$). As we show in the Appendix, $\underline{r}(\cdot)$ is increasing in $n$. Thus, $\underline{n}>2$ if and only if there is disclosure at the bottom when $n=2$ ($\underline{r}(2,\alpha)<\mu-s$). Obviously, when $s$ is sufficiently small, this condition holds. In the uniform example, we found $\underline{r}=\alpha/2$. Therefore, for $s<(1-\alpha)/2$, there is disclosure at the bottom when $n=2$, generating the paradox of choice.

\end{remark}

\paragraph{The Infinite Market.} The limit distribution as $n\rightarrow\infty$ delivers especially sharp findings. In particular, we know that with $n>\underline{n}$ the equilibrium has no disclosure at the bottom. Furthermore, in the proof of Proposition \ref{prop_large_n}, we show that for $n$ sufficiently large, the upper disclosure threshold $v_H^*(n)<1$, and it satisfies 
\begin{align*}
    E_F[v|v<v_H^*(n)]=\frac{1}{n}v_H^*(n)+\frac{n-1}{n}(\mu-s).
\end{align*}
In the limit as $n\rightarrow\infty$, the upper disclosure threshold $v_H^*(n)\rightarrow v_H^\infty$, where $E_F[v| v<v_H^\infty]=\mu-s$. Furthermore, as $n$ increases, the functional form of $G(\cdot)$ on the affine interval $[r,v_H]$ extracts larger roots from values between (0,1), converging pointwise to $F(v_H^\infty)$.\footnote{Specifically, $G(v)=(\beta^*)^{\frac{1}{n-1}}(v-(\mu-s))^{\frac{1}{n-1}} $ for  $v\in[\mu-s,v^*_H(n)]$. Furthermore, the contact condition $G(v_H)=F(v_H)$ requires $\beta^*=F(v^*_H(n))^{n-1}/(v^*_H(n)-(\mu-s))$. Evidently, $(\beta^*)^{\frac{1}{n-1}}\rightarrow F(v_H^\infty)$ and $(v-(\mu-s))^{\frac{1}{n-1}}\rightarrow 1$ as $n\rightarrow\infty$. Therefore the pointwise limit on $[\mu-s,v_H^\infty]$ is $F(v_H^\infty)$. } As illustrated in the left panel of Figure \ref{fig_large_n}, the pointwise limit of the equilibrium distribution is 
\begin{align*}
    G^{\infty}(v)=\begin{cases}
0 & \text{for } v\in[0,\mu-s)\\
F(v_H^\infty) & \text{for } v\in [\mu-s,v_H^{\infty})\\
F(v) & \text{for } v\in [v^\infty_H,1].
\end{cases}
\qquad \text{where}\qquad E_F[v|v<v_H^\infty]=\mu-s.
\end{align*}

This calculation reveals that in the ``infinite market'' firms' disclosures take an extremely stark form: an interval of high valuations is disclosed with no distortion, while all smaller valuations are censored, resulting in a mass point on the inexperienced consumers' reserve value $r^*=\mu-s$. Thus, in the infinite market, even a very small search friction results in a drastic drop in market informativeness, which forecloses search by inexperienced consumers. Meanwhile, with any $s<0$ the unique equilibrium is full disclosure. Thus, even a very small search friction results in a discontinuous drop in informativeness and welfare for the inexperienced consumer. This is similar to the informational Diamond paradox, but with a significant departure: to compete for the savvy consumers, mass must be spread above $r$, from $v_H^\infty$ to 1 in equilibrium. While inexperienced consumers are at the market's mercy, sampling only 1 firm and attaining the smallest possible ex ante payoff ($\mu-s$), savvy consumers sample infinite firms, and attain the highest possible ex post payoff, $1$.

This characterization is qualitatively distinct from \citet{stahl1989oligopolistic}, where the equilibrium price distribution converges to a mass point on the monopoly price as the number of firms approaches infinity. In other words, in the infinite market of \citet{stahl1989oligopolistic}, firms completely neglect the savvy consumers, focusing instead on extracting as much as possible from the inexperienced. In contrast, in our characterization, the firms disclose an interval of high valuations in order to compete for the savvy consumers, even as the market size becomes infinite.

\subsection{Search Technology}

 The link between the search technology,  market informativeness, and the welfare of each market segment is also worthy of analysis. Recall from Proposition \ref{prop_r_exog} that when the exogenous reservation value approaches the extremes ($r\rightarrow \{0,1\}$) the equilibrium approaches full disclosure. Furthermore, from the search equation \eqref{eq_r_j} it is immediate that $r\rightarrow 0$ when $s\rightarrow \mu$, and $r\rightarrow v_T$ as $s\rightarrow 0$. Furthermore, for $r\rightarrow v_T$, the inexperienced consumer always searches, just like the savvy consumer. We therefore might expect that as the search cost approaches 0 or $\mu$, the equilibrium approaches full disclosure. The following proposition confirms this intuition, showing that informativeness is monotone in the convergence to full disclosure. In addition, it shows that changes in search cost have a differential impact on consumers.

 \begin{figure}
\begin{minipage}{0.5\textwidth}
     \begin{tikzpicture}[scale=0.65]

\definecolor{violet}{RGB}{120, 40, 160}

\pgfmathsetmacro{\vL}{0}
\pgfmathsetmacro{\r}{4}
\pgfmathsetmacro{\vH}{8.5}
\pgfmathsetmacro{\rH}{3.2}
\pgfmathsetmacro{\vHH}{6.2}

\draw[color=red, domain=0:\vL, line width=1pt, samples=100] 
    plot (\x, {0.1*\x^2});

\draw[color=red, line width=1pt] 
    (\vL, {0.1*\vL^2}) -- (\r, {0.1*\vL^2});

\draw[color=red, domain=\r:\vH, line width=1pt,samples=100] 
    plot (\x, {0.1*\vL^2 + 0.1*(\vH^2 - \vL^2)*((\x - \r)/(\vH - \r))^0.5});

\draw[color=red, domain=\vH:10, line width=1pt] 
    plot (\x, {0.1*\x^2});

\draw[color=violet, domain=\rH:\vHH, line width=1pt,samples=100] 
    plot (\x, {0.1*(\vHH^2 )*((\x - \rH)/(\vHH - \rH))^0.35});
\draw[color=violet, line width=1pt] 
    (0,0)--(\rH,0);
\draw[color=violet, domain=\vHH:\vH, line width=1pt] 
    plot (\x, {0.1*\x^2});

\draw[color=blue, domain=\vL:\vHH, line width=0.5pt] 
    plot (\x, {0.1*\x^2});

\draw[line width=0.5pt] (0,0) -- (0,10.5);     
\draw[line width=0.5pt] (0,0) -- (10.5,0);     
\draw[line width=0.5pt, dotted] (0,10) -- (10,10) -- (10,0);  

\draw[line width=0.5pt, dotted] (\r,0) -- (\r, {0.1*\vL^2});
\draw[line width=0.5pt, dotted] (\vL,0) -- (\vL, {0.1*\vL^2});
\draw[line width=0.5pt, dotted] (\vH,0) -- (\vH, {0.1*\vH^2});
\draw[line width=0.5pt, dotted] (\vHH,0) -- (\vHH, {0.1*\vHH^2});

\fill (\r+0.5,0) node[below] {\scriptsize{$\mu-s_0$}};
\fill (\rH-0.5,0) node[below] {\scriptsize{$\mu-s_1$}};
\fill (\vH+0.3,0) node[below] {\scriptsize{$v^*_H(s_0)$}};
\fill (\vHH,0) node[below] {\scriptsize{$v^*_H(s_1)$}};
\fill (10,0) node[below] {\footnotesize{$1$}};
\fill (0,0) node[left] {\footnotesize{$0$}};
\fill (0,10) node[left] {\footnotesize{$1$}};
\fill (2.8,2) node[above] {\footnotesize{$G_{1}(\cdot)$}};
\fill (5,4.5) node[above] {\footnotesize{$G_{0}(\cdot)$}};

\end{tikzpicture}
\end{minipage}
\begin{minipage}{0.5\textwidth}
     \begin{tikzpicture}[scale=0.65]

\definecolor{violet}{RGB}{120, 40, 160}

\pgfmathsetmacro{\vL}{0}
\pgfmathsetmacro{\r}{4}
\pgfmathsetmacro{\vH}{8.5}
\pgfmathsetmacro{\rH}{1.5}
\pgfmathsetmacro{\vHH}{2.8}

\draw[color=red, domain=0:\vL, line width=1pt, samples=100] 
    plot (\x, {0.1*\x^2});

\draw[color=red, line width=1pt] 
    (\vL, {0.1*\vL^2}) -- (\r, {0.1*\vL^2});

\draw[color=red, domain=\r:\vH, line width=1pt,samples=100] 
    plot (\x, {0.1*\vL^2 + 0.1*(\vH^2 - \vL^2)*((\x - \r)/(\vH - \r))^0.5});

\draw[color=red, domain=\vH:10, line width=1pt] 
    plot (\x, {0.1*\x^2});

\draw[color=violet, domain=\rH:\vHH, line width=1pt,samples=100] 
    plot (\x, {0.1*(\vHH^2 )*((\x - \rH)/(\vHH - \rH))^0.35});
\draw[color=violet, line width=1pt] 
    (0,0)--(\rH,0);
\draw[color=violet, domain=\vHH:\vH, line width=1pt] 
    plot (\x, {0.1*\x^2});

\draw[color=blue, domain=\vL:\vHH, line width=0.5pt] 
    plot (\x, {0.1*\x^2});

\draw[line width=0.5pt] (0,0) -- (0,10.5);     
\draw[line width=0.5pt] (0,0) -- (10.5,0);     
\draw[line width=0.5pt, dotted] (0,10) -- (10,10) -- (10,0);  

\draw[line width=0.5pt, dotted] (\r,0) -- (\r, {0.1*\vL^2});
\draw[line width=0.5pt, dotted] (\vL,0) -- (\vL, {0.1*\vL^2});
\draw[line width=0.5pt, dotted] (\vH,0) -- (\vH, {0.1*\vH^2});
\draw[line width=0.5pt, dotted] (\vHH,0) -- (\vHH, {0.1*\vHH^2});

\fill (\r+0.5,0) node[below] {\scriptsize{$\mu-s_0$}};
\fill (\rH-0.5,0) node[below] {\scriptsize{$\mu-s_2$}};
\fill (\vH+0.3,0) node[below] {\scriptsize{$v^*_H(s_0)$}};
\fill (\vHH,0) node[below] {\scriptsize{$v^*_H(s_2)$}};
\fill (10,0) node[below] {\footnotesize{$1$}};
\fill (0,0) node[left] {\footnotesize{$0$}};
\fill (0,10) node[left] {\footnotesize{$1$}};
\fill (2.8,1) node[above] {\footnotesize{$G_{2}(\cdot)$}};
\fill (5,4.5) node[above] {\footnotesize{$G_{0}(\cdot)$}};

\end{tikzpicture}
\end{minipage}

\caption{\label{fig_change_s}\textit{Increases in Search Cost in Large Markets.} Both panels have a large market $n>\underline{n}(s_0)$. Left panel: a small increase from $s_0$ to $s_1$. Right panel: a large increase to $s_2$.}
\end{figure}
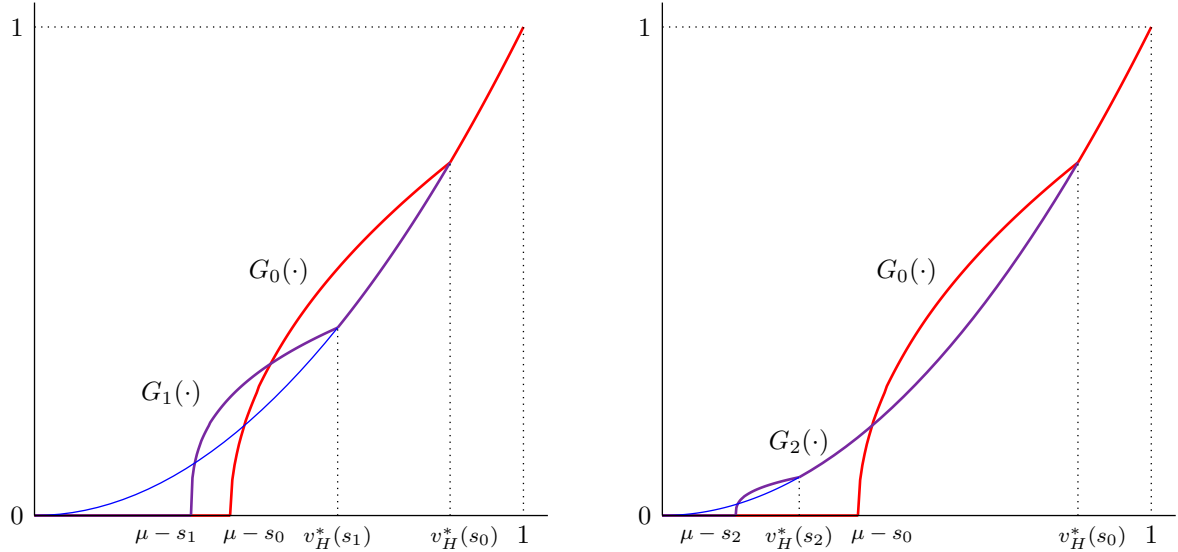

\begin{proposition} (Search Cost).
For each $(n,\alpha,F(\cdot))$, as $s\rightarrow 0$ or $s\rightarrow \mu$, the equilibrium converges to full disclosure in distribution ($G(\cdot)\rightarrow F(\cdot)$ pointwise). Furthermore, two thresholds $0<\underline{s}\leq \overline{s}<\mu$ exist such that 
\begin{itemize}

    \item if $s'<s<\underline{s}$, then the equilibrium with $s'$ cannot be less informative than the equilibrium with $s$. Furthermore, for each $s$, an $\tilde{s}\leq s$ exists such that for all $s''\in(0,\tilde{s})$, equilibrium informativeness is strictly higher, and the surplus of both consumer types is larger with $s''$ than $s$.
    \item if $s>\overline{s}$, then equilibrium informativeness and the surplus of the savvy consumer ($\text{CS}_s$) are increasing in $s$, but the surplus of the inexperienced type ($\text{CS}_i$) is decreasing.
\end{itemize}
\label{com_stat_s}
\end{proposition}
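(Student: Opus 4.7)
The plan is to leverage the separability observed at the end of Section \ref{sec_endog}: the equilibrium fixed point in Figure \ref{fig_endog} depends on $s$ only through the search curve $r^{se}(\cdot)$, while the exogenous-$r$ branch $r^{eq}(\cdot)$ is unaffected. I would combine this with the monotone comparative statics of Proposition \ref{prop_r_exog} and Lemma \ref{surplus}, which translates MPC orderings into welfare comparisons.

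First, for the two convergence claims, I would push $r^*$ toward the extremes of its range. As $s\to\mu$, the full-information value $r^{fi}$, which satisfies $\int_{r^{fi}}^1(v-r^{fi})dF=s$, goes to $0$; since $r^*\leq r^{fi}$ by Proposition \ref{prop_r_endog}, $r^*\to 0$. As $s\to 0$, the curve $r^{se}(\cdot)$ from \eqref{eq_search2} converges pointwise to $E_F[v\mid v\geq v_L]$ and its peak $r^{fi}$ approaches $1$; since $r^{eq}(\cdot)$ is fixed and bounded strictly below $1$ on any compact set of $v_L$ bounded away from $1$, the fixed point of $r^{eq}=r^{se}$ must be pushed to $v_L^*\to 1$, which drives $r^*\to 1$. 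Invoking Proposition \ref{prop_r_exog} at $r^*\to 0$ or $r^*\to 1$ yields pointwise convergence $G^*(\cdot)\to F(\cdot)$.

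Next, for the high-$s$ regime, I would set $\overline{s}\equiv\mu-\underline{r}(n,\alpha)$, so that $s>\overline{s}$ places the equilibrium on the no-disclosure-at-the-bottom branch by Proposition \ref{prop_r_endog}. There $r^*=\mu-s$ is linearly decreasing in $s$, and applying the monotone-informativeness content of Proposition \ref{prop_r_exog} on this branch yields $G_{s'}\in\text{MPC}(G_s)$ whenever $\overline{s}<s'<s$. Because the maximum of $n$ iid draws is a convex functional of the marginal, Lemma \ref{surplus}(i) gives $\text{CS}_s$ increasing in $s$; Lemma \ref{surplus}(ii) gives $\text{CS}_i=\mu-s$ on this branch, which is strictly decreasing.

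The low-$s$ regime, defined by $s<\underline{s}$ with $\underline{s}$ chosen so that the equilibrium lies on the disclosure-at-the-bottom branch, contains the main obstacle. The existence of $\tilde{s}$ delivering strict gains follows readily from convergence: for $s''$ sufficiently small, $G_{s''}$ is close to $F$ in distribution and hence strictly more informative than any fixed $G_s$; Lemma \ref{surplus} then yields strict welfare gains for both types, since for small $s''$ there is disclosure at the bottom and the reservation-value adjustment translates MPC of $v$ into MPC of $\tilde{v}$. The hard part is the weak-monotonicity claim $G_{s'}\in\text{MPC}(G_s)$ for $s'<s<\underline{s}$: both distributions share the structure \eqref{eq_G} but with different $v_L^*$, $v_H^*$, and $\beta^*$. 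My plan is to differentiate \eqref{eq_search2} and \eqref{eq_BET} simultaneously to sign $dv_L^*/ds$, $dv_H^*/ds$, and $d\beta^*/ds$, and then verify the integrated-CDF MPC criterion $\int_0^t G_{s'}(v)dv\leq\int_0^t G_s(v)dv$ for all $t$ (with equality at $t=1$), exploiting that both distributions coincide with $F$ outside their respective pooling windows. The algebraically delicate step is comparing the slopes $\beta^*$ at two different reservation values, which is where the explicit mean-residual-life form of \eqref{eq_BET} becomes indispensable.
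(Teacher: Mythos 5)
Your setup (separability of $r^{se}$ and $r^{eq}$, convergence of $r^*$ to the extremes, $\overline{s}=\mu-\underline{r}(n,\alpha)$, and Lemma \ref{surplus} for welfare) matches the paper, but the core of your plan for the low-cost regime targets the wrong statement, and in fact a false one. The first bullet asserts that for $s'<s<\underline{s}$ the $s'$-equilibrium \emph{cannot} be less informative, i.e.\ $G_{s'}\notin\text{MPC}(G_s)$; you propose to prove $G_{s'}\in\text{MPC}(G_s)$ by verifying $\int_0^t G_{s'}\,dv\le\int_0^t G_s\,dv$ for all $t$. That inequality fails: since $v_L^*$ is decreasing in $s$ (the downward shift of $r^{se}$ you yourself note), $G_s$ is flat at $F(v_L^*(s))$ on $[v_L^*(s),r^*(s))$ while $G_{s'}$ lies weakly above it there (it equals $F$ up to $v_L^*(s')>v_L^*(s)$), so $\int_0^{r^*(s)}\bigl(G_{s'}-G_s\bigr)dv>0$ --- this single computation is essentially the paper's entire proof of the weak claim, and it refutes your target at $t=r^*(s)$. (It is also incompatible with the strict part you accept: if $G_{s''}$ were an MPC of $G_s$ it could not simultaneously be a strict MPS of it.) Even reading your claim as MPS-monotonicity on all of $(0,\underline{s})$, that is stronger than the proposition and not what the paper establishes; the strict ranking is proved only for $s''$ sufficiently small, and it does not follow from ``$G_{s''}$ is close to $F$'' --- pointwise proximity to the maximal element does not yield convex-order dominance over a fixed $G_s$. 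The paper instead uses the monotonicity of $v_L^*,r^*,v_T^*$ for small $s''$ together with the affine structure of $G^{n-1}$ on $I_a$ to show the two CDFs cross exactly once, and only then concludes $G_{s''}\in\text{MPS}(G_s)$. Relatedly, the paper's $\underline{s}$ is the threshold below which there is \emph{no disclosure at the top} ($v_H^*=1$, obtained from the convexity requirements $[F(v_L^*)^{n-1}]'/(1-\alpha)\le\beta^*\le[F(v_H^*)^{n-1}]'$ breaking down as $v_L^*,v_H^*\to1$), not the disclosure-at-the-bottom threshold you use; that structure is what the single-crossing argument leans on.

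Two further gaps: (i) for $s>\overline{s}$ you invoke a ``monotone-informativeness content'' of Proposition \ref{prop_r_exog}, but that proposition contains no informativeness comparison across reservation values; the actual argument requires showing $dv_H^*/ds<0$ (implicit differentiation of $\int_0^{v_H}F\,dv=\tfrac{n-1}{n}F(v_H)(v_H-r)$ with $r=\mu-s$), or, when $v_H^*=1$, the clockwise rotation of the affine $G^{n-1}$ with $v_T^*=(n-1)s+\mu$, before single crossing delivers $G_{s'}\in\text{MPS}(G_s)$ for $s'>s$. (ii) The claim $\text{CS}_i(s'')>\text{CS}_i(s)$ does not follow from the fixed-cost observation that an MPC of $v$ induces an MPC of $\tilde v$, because here the cost itself changes; the paper handles this by computing $E_{\widetilde G_{r^*(s)}}[v]=\mu-s$ from the search equation and sandwiching $\widetilde G_{r^*(s)}$ between an auxiliary distribution $\overline G$ (an MPS of it) and $\widetilde G_{r^*(s'')}$ (which first-order dominates $\overline G$). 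Your convergence arguments for $s\to 0$ and $s\to\mu$, and the savvy-consumer and $\text{CS}_i=\mu-s$ conclusions on the high-cost branch, are fine and essentially the paper's.
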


Broadly speaking, this proposition confirms that the positive and normative effects of changes in search cost depend crucially on the search cost's initial size. As might be expected, when the search cost is initially small, reducing it sufficiently increases equilibrium informativeness and the surplus of both consumer segments. In contrast, when the search cost is substantial, a reduction decreases equilibrium informativeness, thereby reducing the surplus of the savvy segment.\footnote{In general, $(\underline{s},\overline{s})$ is non-empty: two search costs in this region may produce posterior distributions that are not comparable using our notion of informativeness. In special cases (e.g.,  $F(v)=v$ and $n=2$) the middle region doesn't exist, $\tilde{s}=\underline{s}=\overline{s}$. } Conversely, an increase in search cost increases informativeness and the savvy type's surplus. Though it is true more broadly, this effect is easy to see in a large market. Consider some search cost $s_0$ and $n$ large enough that the equilibrium has no disclosure at the bottom, $\underline{r}(n,\alpha)>\mu-s_0$. Obviously, this condition also holds for any larger search cost: for any $s>s_0$ there is also no disclosure at the bottom in equilibrium. Two such equilibria are depicted in Figure \ref{fig_change_s}. The left panel illustrates the effect of a small increase in $s$, from $s_0$ to $s_1$. The resulting distributions cross once in the region of partial disclosure and are ordered by informativeness, where the higher-cost distribution $G_1(\cdot)$ is more informative. For a large increase in search cost, depicted in the right panel, the effect is even more dramatic. The entire interval of partial disclosure has collapsed and shifted toward lower values. Thus, $G_2(\cdot)$ coincides with $F(\cdot)$ over most of its support. These increases in informativeness benefit savvy consumers at the expense of the inexperienced.

\begin{remark} Unlike changes in search cost which lead to full disclosure at both extremes, changes in $\alpha$ are more straightforward. In particular, as $\alpha\rightarrow 1$ the equilibrium approaches no disclosure, a strong form of the informational Diamond paradox, and as $\alpha\rightarrow 0$ the equilibrium approaches full disclosure as in frictionless search.
\end{remark}

\section{Discussion and Conclusion}
\subsection{Additional Cost Heterogeneity}\label{sec_hetero}

In this section, we consider more general forms of cost heterogeneity among inexperienced consumers, showing that our results for large markets extend in a natural way. In particular, we show that an equilibrium with no disclosure at the bottom always exists in a sufficiently large market, even with additional cost heterogeneity. 

Suppose that as in main model, the consumer is savvy with probability $1-\alpha$ and visits all firms (the search cost is weakly negative).\footnote{In the case of 0 search cost, we maintain the assumption that the savvy type plays a weakly undominated strategy, visiting all firms.}  With probability $\alpha$, the consumer is inexperienced. Conditional on being inexperienced, the search cost is distributed according to distribution function $K(\cdot)$. To ease the exposition, we consider either a discrete distribution with finite support $0<s_1<...<s_k<\mu$ or an absolutely continuous distribution, with density $k(\cdot)$, supported on an interval $[s_1,s_k]$.\footnote{With this notation, the smallest cost in the support of $K(\cdot)$ is always denoted $s_1$ and the largest $s_k$.} While no additional assumptions are needed in the finite case, with a continuous distribution we require two additional assumptions.

\begin{assumption} If $K(\cdot)$ is continuous, then (i) the support of $K(\cdot)$ is strictly positive, i.e., $s_1>0$, and (ii) the lowest cost in the support has non-zero density, $k(s_1)>0$.
\end{assumption}
Note that, while we require it to be positive, we allow $s_1$ to be arbitrarily small. In the Supplemental Appendix, we prove the following result.
\begin{proposition}\label{prop_hetero}(Cost Heterogeneity). Suppose that conditional on being inexperienced, the consumer's search cost is distributed according to $K(\cdot)$. A finite $\underline{n}_K$ exists such that for $n>\underline{n}_K$, an equilibrium with no disclosure at the bottom exists.
\end{proposition}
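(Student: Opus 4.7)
The plan is to construct the equilibrium by borrowing the single-cost equilibrium from Proposition \ref{prop_r_endog} and verifying it survives the heterogeneous-cost environment. Specifically, I would take $s=s_1$ (the smallest cost in $\text{supp}\,K$) and let $G^*$ be the resulting single-cost equilibrium distribution; by Proposition \ref{prop_r_endog}, once $n$ is large enough that $\underline{r}(n,\alpha)>\mu-s_1$, this $G^*$ has no disclosure at the bottom, taking the form \eqref{eq_G} with $v_L^*=0$, $r_1^*=\mu-s_1$, slope $\beta^*$, and upper disclosure threshold $v_H^*$. I claim this $G^*$ remains an equilibrium in the heterogeneous-cost model.

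I would first confirm self-consistency and derive the firm's payoff under this conjecture. Because $G^*$ places all mass on $[r_1^*,1]$, the search equation \eqref{eq_r_j} applied to any cost $s\in[s_1,s_k]$ collapses to $\mu-r(s)=s$, so $r(s)=\mu-s\leq r_1^*$ and every inexperienced type stops at the first firm visited. Consequently $\eta_j=1/n$ for every type, and the firm's posterior $\widetilde{\alpha}=\alpha/((1-\alpha)n+\alpha)$ matches the single-cost formula. The payoff $u(v)$ therefore coincides with its single-cost counterpart on $[r_1^*,1]$; below $r_1^*$ it now takes the form $u(v)=\widetilde{\alpha}\,P(v)$, where $P(v)\equiv\Pr_K(r(s)\leq v)$ vanishes for $v<r_k^*=\mu-s_k$ and rises to $1$ at $r_1^*$. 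This is the only substantive change from the single-cost model: the lower part of the payoff is weakly raised from $0$ to $\widetilde{\alpha} P(v)$.

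Optimality then reduces to checking the Dworczak--Martini multiplier $\phi(\cdot)$ from \eqref{eq_PHI}. Since $\phi$ and $u$ still agree on the support of $G^*$ and $G^*\in\text{MPC}(F)$, conditions (DM1), (DM3), and (DM4) are inherited from the single-cost analysis; the sole novel requirement is majorization (DM2) on $[0,r_1^*)$, namely
\[
\widetilde{\alpha}\bigl(1-P(v)\bigr)\;\geq\;(1-\widetilde{\alpha})\beta^*\bigl(r_1^*-v\bigr)\qquad\text{for all }v<r_1^*.
\]
Following Remark \ref{whyno}, since $v_H^*\to v_H^\infty<1$ the slope satisfies $\beta^*=F(v_H^*)^{n-1}/(v_H^*-r_1^*)=\Theta(c^{n-1})$ for some $c<1$, while $\widetilde{\alpha}=\Theta(1/n)$: the right-hand side decays exponentially in $n$ while the left-hand side decays at worst polynomially. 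The discrete case is immediate because $1-P(v)\geq\pi_1>0$ uniformly on $[0,r_1^*)$, so the left-hand side is at least $\widetilde{\alpha}\pi_1=\Theta(1/n)$ and dominates the right-hand side for $n$ large.

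The main obstacle is the continuous case near $v=r_1^*$, where both sides of (DM2) vanish simultaneously and the assumption $k(s_1)>0$ becomes essential. Since $r(s)=\mu-s$ on $[s_1,s_k]$, one has $|r'(s_1)|=1$; writing $\delta=r_1^*-v\in[0,s_k-s_1]$ turns the inequality into $g_n(\delta)\equiv\widetilde{\alpha}\,K(s_1+\delta)-(1-\widetilde{\alpha})\beta^*\delta\geq 0$. Both sides vanish at $\delta=0$, and the derivative comparison there is $\widetilde{\alpha}\,k(s_1)\geq(1-\widetilde{\alpha})\beta^*$, which again holds for large $n$ by the polynomial-versus-exponential rate gap. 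To rule out an interior violation, I would argue by contradiction: a negative interior minimum $g_n(\delta_n^*)<0$ would force $k(s_1+\delta_n^*)=(1-\widetilde{\alpha})\beta^*/\widetilde{\alpha}=\Theta(nc^{n-1})\to 0$; passing to a subsequential limit $\delta_\infty^*$, either $\delta_\infty^*>0$ (so $K(s_1+\delta_\infty^*)$ is bounded away from zero and the $\Theta(1/n)$-versus-$\Theta(c^n)$ gap contradicts negativity) or $\delta_\infty^*=0$ (so the linearization $K(s_1+\delta)\sim k(s_1)\delta$ together with $k(s_1)>0$ contradicts negativity via the same gap). Hence (DM2) holds uniformly for $n$ beyond some finite $\underline{n}_K$, and $G^*$ is an equilibrium with no disclosure at the bottom in the heterogeneous-cost model.
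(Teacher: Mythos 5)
Your proposal is correct and follows essentially the same route as the paper's Supplemental Appendix proof: you take the single-cost equilibrium $G(\cdot\mid s_1,n)$ for large $n$, observe that every inexperienced type then has reservation value $\mu-s\leq \mu-s_1$ and stops at the first visit (so $\eta=1/n$ and $\widetilde{\alpha}$ are unchanged), note that only the payoff below $r_1$ changes, and verify the sole remaining condition (DM2) there via the rate comparison $\widetilde{\alpha}=\Theta(1/n)$ versus $\beta^*(n)=O(c^{n-1})$, invoking $k(s_1)>0$ in the continuous case. The only difference is organizational: the paper packages DM2 as the single uniform inequality $(1-\widetilde{\alpha})\beta^*(n)<\widetilde{\alpha}\,\inf_{v<r_1}K(\mu-v)/(r_1-v)$ with the infimum shown positive by an endpoint limit, whereas you check the endpoint derivative and rule out interior violations by a subsequence contradiction---the same substance in slightly more laborious form.
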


To see the idea, suppose for a moment that there is no heterogeneity among the inexperienced consumers: an inexperienced consumer's cost is $s_1$ with probability 1. For all sufficiently large $n$ ($n\geq \underline{n}$ computed from $s_1$), the unique equilibrium $G(\cdot|s_1,n)$ has no disclosure below the reservation value $r_1=\mu-s_1$, and the inexperienced consumer (with cost $s_1$) stops at the first visited firm. Now suppose that firms continue to provide information according to $G(\cdot|s_1,n)$, but that the inexperienced consumer's search cost is distributed according to $K(\cdot)$. Such cost heterogeneity can affect the probability of sale at each valuation, changing the objective function in the firm's problem \eqref{eq_optim_G}, but this does not necessarily imply that $G(\cdot|s_1,n)$ is no longer an equilibrium. Indeed, if $G(\cdot|s_1,n)$ is a best response for the adjusted payoff function, then it remains an equilibrium. In the proof we show that with disclosure strategy $G(\cdot|s_1,n)$ an inexperienced consumer stops at the first firm regardless of her realized search cost. By implication, cost heterogeneity has no effect on the payoff function above $r_1=\mu-s_1$. Below $r_1$ the payoff function is no longer zero; it is essentially a reflection of the distribution of costs, scaled by $\widetilde{\alpha}$. We show that for sufficiently large $n$, the multiplier supporting $G(\cdot|s_1,n)$ lies strictly above this part of the payoff function, which implies that $G(\cdot|s_1,n)$ is a best response.

Proposition \ref{prop_hetero} suggests that when the market contains both savvy consumers (who always search) and inexperienced consumers (with positive search cost) concealment of low values is a robust equilibrium phenomenon. The crucial feature is that the smallest search cost among inexperienced consumers is strictly positive, though it can be arbitrarily small. In this sense, a cost distribution that starts from 0 can be viewed as a limiting case of our analysis. This is reminiscent of the ``gap/no-gap'' distinction that arises in the literature on Coasian dynamics \citep{FLT1985,GSW1986}.

\subsection{General Distributions}
To deliver the sharpest findings, we focus on $F(\cdot)^{n-1}$ convex: absent search frictions, the firms fully disclose. It is worth pointing out that the central forces in our analysis also apply with more general distributions. Indeed, regardless of the shape of $F(\cdot)$, the inexperienced consumer's reservation value is interior under full disclosure, which generates a discontinuity in a firm's payoff. Thus, even with a general shape, we would expect a gap below the reservation value, and for these values to be absorbed in an interval above. In other words,  partial disclosure surrounding the reservation value is a general feature. Of course, away from the reservation value, we would no longer expect full disclosure. Exploring how such a discontinuity alters the global properties of disclosure for more general distributions could be an interesting avenue for future research, with the potential to reveal novel insights. For example, when frictionless search is fully informative, as in our model, search frictions can only reduce informativeness. In contrast, if the frictionless equilibrium has partial disclosure, introducing a friction, in principle, could increase informativeness.

\subsection{Concluding Remarks}  

We analyzed a novel model incorporating information design and persuasion into a standard search model with cost heterogeneity, based on \citet{stahl1989oligopolistic}. We characterized the unique equilibrium of the disclosure game, and examined how variations in market structure and search cost affect disclosure, search behavior, and welfare.  

We demonstrated three main qualitative takeaways. First, in our model firms always distort information around the inexperienced consumer's reservation value, concealing some values just below it and pooling them with values above. Second, this distortion depends critically on the size of the market. In small markets, the distortion around the reservation value is small, and an interval of low values is disclosed truthfully. In large markets, all valuations below the reservation value are concealed, preventing the market from approaching full disclosure as competition grows. By implication, an inexperienced consumer searches actively in a small market, but not a large one (the paradox of choice). Furthermore, because firms' disclosure strategies foreclose search in large markets, inexperienced consumers may be better off in less competitive markets. Third, changes in search cost have non-monotone effects. As the search cost approaches zero or the prior mean, the equilibrium approaches full disclosure. When the search cost is low, a sufficient decrease results in a more  informative equilibrium. Conversely, when the search cost is high, any decreases in search cost reduces informativeness.

We conclude by briefly contrasting our findings with \citet{stahl1989oligopolistic}, which allows us to highlight important distinctions between competition via prices and disclosure in search markets.  In particular, our model makes the same basic assumptions about consumers, the search process, and timing as \citet{stahl1989oligopolistic}, but with the key departure that firms compete by choosing disclosure strategies rather than prices.
\begin{itemize}
    \item In \citet{stahl1989oligopolistic} inexperienced consumers stop after a single visit \citep[see also][]{EW2012,DDZ2017}; consequently, the number of firms has no bearing on an inexperienced consumer's search behavior. In contrast, our model exhibits the paradox of choice (Proposition \ref{prop_large_n}).  
    
    \item As the number of firms approaches infinity, the equilibrium distribution of prices in \citet{stahl1989oligopolistic} collapses to the monopoly price. Thus, firms concentrate  on extracting maximal surplus from the inexperienced  consumers, while ignoring the savvy consumers entirely. In contrast, the infinite limit of our equilibrium is more balanced---firms disclose an interval of high valuations verbatim to compete for the savvy segment, while censoring all lower valuations to capture the inexperienced one (see pg. 28).
    \item In \citet{stahl1989oligopolistic} increases in search cost shift the upper bound of the support and the probability mass toward the
 monopoly price, and the equilibrium moves closer to the (standard) Diamond paradox. In contrast, as the search cost becomes large in our model, the equilibrium approaches full disclosure, with informativeness increasing monotonically. In other words, if the search cost is relatively large, an increase results in an equilibrium that is less similar to the informational Diamond paradox and more similar to frictionless search (Proposition \ref{com_stat_s}). 
\end{itemize}

\section{Appendix}

\renewcommand{\thetheorem}{A\arabic{theorem}}
\renewcommand{\thelemma}{A\arabic{lemma}}
\renewcommand{\theproposition}{A\arabic{proposition}}
\renewcommand{\theequation}{A\arabic{equation}}

\setcounter{theorem}{0}
\setcounter{lemma}{0}
\setcounter{proposition}{0}
\setcounter{equation}{0}

For brevity, we adopt the following notation in the appendix:

\begin{equation}
\widetilde{\eta}=E_F[F(v)^{n-1}|v\in [v_L,v_H]]=\frac{F(v_H)^{n}-F(v_L)^{n}}{n\left(F(v_H)-F(v_L)\right)},
\label{eta_tilde}
\end{equation}

\begin{equation}
\widetilde{\mu}=E_F[v|v\in [v_L,v_H]]=\frac{\int_{v_L}^{v_H} v dF(v)}{F(v_H)-F(v_L)}.
\label{eta_tilde}
\end{equation}

\noindent We further use MPS and MPC to refer to mean-preserving spread and contraction. Lemmas \ref{lemma_A2} and \ref{lemma_A1} describe properties of $G$ utilized in the remaining proofs.

\begin{lemma}
Consider atomless $F(\cdot)$ and $G(\cdot)$ and let $H(z)=\int_{0}^z F(v)-G(v) dv$. Then, $G\in MPC(F)$ over $[v_1,v_2]$ if and only if $H(z)\geq 0$ for all $z\in [v_1,v_2]$ and $H(v_1)=H(v_2)=0$. Moreover, if $H(z)=0$ for some $z\in (v_1,v_2)$, then $G(z)=F(z)$.\footnote{As a Corollary, if $G(\cdot)$ crosses $F(\cdot)$ exactly once from below on $(v_1,v_2)$ and the two distributions have the same conditional mean on $[v_1,v_2]$, then $F\in MPS(G)$. We use this well-known fact in multiple proofs.}
\label{lemma_A2}
\end{lemma}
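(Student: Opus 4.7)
My plan is to recognize this as a standard Rothschild--Stiglitz characterization of mean-preserving contractions via second-order stochastic dominance, localized to an interval, and to derive the ``moreover'' clause from a simple application of Fermat's theorem to the auxiliary function $H(\cdot)$. Since $G\in\mathrm{MPC}(F)$ over $[v_1,v_2]$ entails that $G$ and $F$ agree outside $[v_1,v_2]$ and have the same mean on $[v_1,v_2]$, I would begin by noting that $H(v_1)=0$ follows from $F\equiv G$ on $[0,v_1]$, and $H(v_2)=0$ from integration by parts applied to the equal-means condition (boundary terms vanish because $F(v_i)=G(v_i)$ for $i=1,2$).

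For the forward direction, I would invoke the convex-function test: $G\in\mathrm{MPC}(F)$ implies $\int\phi\,dG\leq\int\phi\,dF$ for every convex $\phi$. Taking $\phi(v)=(z-v)^+$ (which is convex) and integrating by parts yields $\int(z-v)^+dF=\int_0^z F(v)\,dv$, and likewise for $G$, so that $H(z)=\int_0^z(F-G)\,dv\geq 0$. For the reverse direction, given $H\geq 0$ on $[v_1,v_2]$ with $H(v_1)=H(v_2)=0$, a double integration by parts establishes that for any convex $\phi$,
\[
\int_{v_1}^{v_2}\phi\,d(F-G)=-\int_{v_1}^{v_2}\phi'(v)H'(v)\,dv=\int_{v_1}^{v_2}H(v)\,d\phi'(v)\geq 0,
\]
where the boundary terms in each step vanish because $F(v_i)=G(v_i)$ and $H(v_i)=0$, and the sign in the last step follows from $H\geq 0$ and the monotonicity of $\phi'$. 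This is precisely the convex function test, so $G\in\mathrm{MPC}(F)$ on the interval.

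For the ``moreover'' claim, I would exploit atomlessness: since $F$ and $G$ are continuous, $H$ is continuously differentiable on $(v_1,v_2)$ with $H'(v)=F(v)-G(v)$. If $H(z)=0$ at an interior $z\in(v_1,v_2)$ while $H\geq 0$ on the interval, then $z$ is an interior minimizer, so Fermat's theorem gives $H'(z)=0$, i.e.\ $F(z)=G(z)$. The main (minor) obstacle is being careful with the integration by parts when $\phi$ is merely convex (so $\phi'$ exists only a.e. and $\phi''$ is a Radon measure), but since $H$ is continuous and nonnegative, the Riemann--Stieltjes formulation $\int H\,d\phi'\geq 0$ is unambiguous; alternatively, one may approximate $\phi$ by smooth convex functions and pass to the limit.
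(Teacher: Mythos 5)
Your proof is correct and in substance matches the paper's. The only differences are cosmetic: the paper simply asserts the ``if and only if'' part as the standard second-order-stochastic-dominance characterization of a mean-preserving contraction, whereas you re-derive it via the convex-function test with $\phi(v)=(z-v)^{+}$ and a double integration by parts; and for the ``moreover'' clause your Fermat-type argument (an interior minimum of the $C^{1}$ function $H$ forces $H'(z)=F(z)-G(z)=0$) is the same continuity argument the paper writes out explicitly by showing $H(z-\epsilon)<0$ when $F(z)>G(z)$ and $H(z+\epsilon)<0$ when $G(z)>F(z)$.
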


\noindent \begin{proof}[Proof of Lemma \ref{lemma_A2}]
The first property is simply the definition of second-order stochastic dominance satisfied by any MPC. To establish the second property, note that by the continuity of $F(v)$ and $G(v)$, $F(z)>G(z)$ implies $H(z-\epsilon)=H(z)-\int_{z-\epsilon}^{z} [F(v)-G(v)]dv<0$ for some $\epsilon>0$. Similarly, $G(z)>F(z)$ implies $H(z+\epsilon)=H(z)+\int_{z}^{z+\epsilon} [F(v)-G(v)]dv<0$ for some $\epsilon>0$. Therefore, $H(v)\geq 0$ for all $v\in [v_1,v_2]$ and $H(z)=0$ requires $G(z)=F(z)$.\end{proof}

\bigskip

\begin{lemma}
Given an atomless $G\in MPC(F)$, suppose that $G(v_2)>G(v_1)$ and $H(z)>0$ for all $z\in (v_1,v_2]$. Then, there exists a non-empty sub-interval $[v_l,v_h]\subseteq [v_1,v_2]$ and a point $\widehat{v}_h\in [v_l,v_h)$ such that (i) $G(\cdot)$ is constant on $[v_l,\widehat{v}_h]$ (possibly a single point) and strictly increasing elsewhere on the sub-interval, and (ii) the following distribution

\begin{equation}
\widehat{G}(v)=
\begin{cases}
G(v) & \text{for } v< v_l\\
\int_{v_l}^{v_h}\frac{G(v)}{v_h-v_l}dv & \text{for } v\in [v_l,v_h)\\
G(v) & \text{for } v\geq v_h
\end{cases}
\label{G_widehat}
\end{equation}

\noindent is a MPS of $G(\cdot)$ and a MPC of the prior $F(\cdot)$. 
\label{lemma_A1}
\end{lemma}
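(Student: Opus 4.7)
My plan is to locate a short subinterval $[v_l,v_h]\subset(v_1,v_2]$ on which $G$ is non-constant and exhibits the prescribed flat-then-strictly-increasing shape, then use the strict positivity of $H$ on $(v_1,v_2]$ as a buffer to absorb the small perturbation introduced by flattening, preserving $\widehat{G}\in\mathrm{MPC}(F)$.

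\textbf{Interval selection.} Since $G$ is atomless with $G(v_2)>G(v_1)$, the measure $dG$ has positive mass in $(v_1,v_2]$. I pick a point $v^*\in(v_1,v_2)$ in the support of $dG$. If $G$ is strictly increasing on a right-neighborhood of $v^*$, I set $v_l=\widehat{v}_h=v^*$ (the flat part degenerates to a single point) and $v_h=v^*+\delta$. If instead $G$ has a flat plateau abutting a strictly-increasing region on the right, I let $\widehat{v}_h$ be the right endpoint of that plateau, $v_l\in(v_1,\widehat{v}_h]$ inside the plateau, and $v_h>\widehat{v}_h$ just above. Either way, $[v_l,v_h]\subset(v_1,v_2]$ is compact, has the required shape, and $G(v_h)>G(v_l)$. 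For the piecewise $G$'s arising from Proposition \ref{Gstruc}, this selection is immediate.

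\textbf{Verification.} Define $\mathrm{avg}_G=(v_h-v_l)^{-1}\int_{v_l}^{v_h}G(v)\,dv$ and $\Delta(z)=\int_0^z(\widehat{G}-G)\,dv$. Off $[v_l,v_h]$, $\widehat{G}=G$ and $\Delta\equiv 0$; on the interval, $\Delta'(z)=\mathrm{avg}_G-G(z)$ is positive then negative by monotonicity of $G$, so $\Delta$ is unimodal with $\Delta(v_l)=\Delta(v_h)=0$. Hence $\Delta\geq 0$, which gives $\widehat{G}\in\mathrm{MPS}(G)$. For the MPC-of-$F$ property, write $\widehat{H}(z)=H(z)-\Delta(z)$. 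Off the interval, $\widehat{H}=H\geq 0$. On the compact $[v_l,v_h]\subset(v_1,v_2]$, $H$ attains a positive minimum $c>0$, while $\Delta(z)\leq v_h-v_l$ (since $\mathrm{avg}_G\leq 1$); so choosing $v_h-v_l<c$ during the selection step yields $\widehat{H}\geq 0$ everywhere.

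\textbf{Main obstacle.} The delicate step is the existence of an interval with the precise flat-then-strictly-increasing shape for a general atomless $G$ whose support could, in principle, be topologically complicated. This is straightforward for the $G$'s produced by Proposition \ref{Gstruc} (piecewise constant or piecewise strictly monotone), which is all that is required in the paper's usage; in the general abstract case one leans on the support structure of $dG$ and may need to enlarge the flat part. The remaining analytic content is the small-perturbation argument, which leverages precisely the positivity hypothesis $H>0$ on $(v_1,v_2]$.
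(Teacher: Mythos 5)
Your core verification is essentially the paper's own argument: the paper also establishes the MPS property by noting that replacing $G$ with its average on $[v_l,v_h)$ preserves the conditional mean and pushes mass to the endpoints, and it also secures $\widehat{G}\in\text{MPC}(F)$ by a small-interval buffer, bounding the perturbation by $\int_{v_l}^{v_h}G(v)\,dv\leq v_h-v_l$ against $H(v_l)>0$ in the strictly-increasing case and using a continuity-in-$\epsilon$ argument when a plateau is present. Your single uniform bound $\Delta(z)\leq v_h-v_l$ together with $c=\min H$ on a compact subinterval handles both cases at once, which is slightly cleaner; the only cosmetic repair needed there is to fix $c$ as the minimum of $H$ over a set chosen \emph{before} $v_h$ (e.g., $[v_l,v_2]$), since otherwise $c$ depends on the $v_h$ you are about to choose.

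The one substantive slip is in your ``main obstacle'' paragraph: you defer the existence of the flat-then-strictly-increasing subinterval to ``the $G$'s produced by Proposition \ref{Gstruc}, which is all that is required in the paper's usage.'' That escape is circular. This lemma is a building block in the proof of Proposition \ref{proposition_A1} (Claims 2, 4, 5, 6 and 8), where $G$ is an \emph{arbitrary} symmetric-equilibrium candidate whose structure is precisely what is being derived; at that stage you cannot assume $G$ is piecewise constant or piecewise strictly monotone. So the selection step must be argued for a general atomless $G\in\text{MPC}(F)$ with $G(v_2)>G(v_1)$ and $H>0$ on $(v_1,v_2]$ (the paper itself is terse here, appealing to continuity of $G$ and the existence of an increasing region). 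What actually matters downstream is only that $G(v_h)>G(v_l)$ — which, with $G$ atomless, guarantees $\widehat{G}\neq G$ and hence a strict improvement wherever the payoff is strictly convex — together with properties (ii); so either justify the selection for general $G$ (choosing $v_l$ at or just left of a point of increase and $v_h$ close enough that the buffer bound applies), or note explicitly that the ``strictly increasing elsewhere'' clause is not what the subsequent claims rely on. As written, resting the selection on Proposition \ref{Gstruc} leaves the lemma unproved in the setting where it is actually invoked.
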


\noindent \begin{proof}[Proof of Lemma \ref{lemma_A1}]
First, note that such sub-internal $[v_l,v_h]$ exist since $G(v_2)>G(v_1)$ and the continuity of $G(\cdot)$ guarantees the existence of a strictly increasing region. The constructed sub-interval allows for a gap in $G(\cdot)$ in its lower part. 

$\widehat{G}(\cdot)$ is a MPS of $G(\cdot)$ for all $v_l$ and $v_h$ since by construction $E_{\widehat{G}}[v]=E_{G}[v]$ and $\widehat{G}(\cdot)$ redistributes the mass $G(v_h)-G(v_l)>0$ to the end points $v_l$ and $v_h$. 

To show that $\widehat{G}\in MPC(F)$, let $\widehat{H}(z,v_l,v_h)=\int_0^z (F(v)-\widehat{G}(v)) dv$. By Lemma A1, it suffices to show that there exists $v_h>v_l$ such that $\widehat{H}(z,v_l,v_h)\geq 0$ for all $z$. It is immediate that $\widehat{H}(z,v_l,v_h)=H(z)>0$ for $z<v_l$ and $z\geq v_h$. For $z\in [v_l,v_h)$,

\begin{equation}
\widehat{H}(z,v_l,v_h)=H(v_l)+\int_{v_l}^z F(v)dv-\frac{z-v_l}{v_h-v_l}\int_{v_l}^{v_h} G(v) dv.
\end{equation}

\noindent $\widehat{H}(z,v_l,v_h)$ is continuous in $z$, $v_l$, and $v_h$ since $G(\cdot)$ and $F(\cdot)$ are continuous. We consider two cases:

\noindent 1) $v_l=\widehat{v}_h$: Then, $G(\cdot)$ is strictly increasing on the entire interval $[v_l,v_h]$. Note that $\widehat{H}(v_l,v_l,v_h)=H(v_l)>0$ and $\widehat{H}(z,v_l,v_h)>H(v_l)-\int_{v_l}^{v_h} G(v)dv>0$ for $v_h-v_l$ sufficiently small. Therefore, there exist $v_h>v_l$ such that $\widehat{H}(z,v_l,v_h)>0$ for all $z\in [v_l,v_h]$.

\noindent 2) $v_l<\widehat{v}_h$: Then, $G(\cdot)$ is constant on $[v_l,\widehat{v}_h]$. Moreover, $\int_{v_l}^{\widehat{v}_h}\frac{G(v)}{\widehat{v}_h-v_l}dv=G(v_l)$ and $\widehat{H}(z,v_l,\widehat{v}_h)=H(v_l)-\int_{v_l}^{z} \left(F(v)-G(v_l)\right)dv=H(z)>0$ for $z\in [v_l,\widehat{v}_h]$. Let $v_h=\widehat{v}_h+\epsilon$ for $\epsilon>0$. Since $\widehat{H}(z,v_l,v_h)$ is continuous in $\epsilon$ and $\lim_{\epsilon\to 0} \widehat{H}(z,v_l,v_h)=\widehat{H}(z,v_l,\widehat{v}_h)>0$, there exists a sufficiently small $\epsilon>0$ such that $\widehat{H}(z,v_l,v_h)\geq 0$ for all $z\in [v_l,v_h]$. 

We have established the existence of $v_l\leq\widehat{v}_h<v_h$ such that $\widehat{H}(z,v_l,v_h)\geq 0$, proving that $\widehat{G}\in MPC(F)$.\end{proof}

\bigskip

\begin{proposition}
Suppose that a symmetric equilibrium $G(\cdot)$ exists with reserve value $r\in (0,1)$. Then, there exist $0\leq v_{\dagger}\leq v_L<r<v_H$, $\gamma\leq [F(v_{\dagger})^{n-1}]'$, and $\beta>0$ such that $G(v)$ takes the following form:
\begin{equation}
G(v)=
\begin{cases}
F(v) & \text{for } v\leq \min\{v_{\dagger},v_L\}\\
\left(F^{n-1}(v_{\dagger})+\gamma(v-v_{\dagger})\right)^{\frac{1}{n-1}} & \text{for } v\in(v_{\dagger},v_L)\\
G(v_L) & \text{for } v\in [v_L,r)\\
\min\left\{\left(G(v_L)^{n-1}+\beta(v-r)\right),1\right\}^{\frac{1}{n-1}} & \text{for } v\in[r,v_H]\\
F(v) & \text{for } v\in (v_H,1].
\end{cases}
\label{eq_G_0}
\end{equation}

\label{proposition_A1}
\end{proposition}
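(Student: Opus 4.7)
The plan is to apply the Dworczak--Martini characterization (Theorem \ref{DMO}) and exploit the structural features of the payoff $u(\cdot)$ from \eqref{eq_u}. Given any symmetric equilibrium $G$, compute the induced payoff $u$ and obtain a continuous convex multiplier $\phi$ satisfying $\phi \geq u$ pointwise, $\phi = u$ on $\mathrm{supp}(G)$, and $\int \phi \, dG = \int \phi \, dF$. Call a maximal sub-interval $[a,b]$ with $G(a) = F(a)$, $G(b) = F(b)$, and $G \neq F$ on the interior an \emph{interval of partial disclosure}. Since $G$ and $F$ coincide outside such intervals, the globally MPC property of $G$ restricts to an MPC relation on each such interval (in particular $G$ and $F$ have equal conditional mean there, by Lemma \ref{lemma_A2}). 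Jensen's inequality against the convex $\phi$ then gives $\int_a^b \phi \, dG \leq \int_a^b \phi \, dF$ with equality iff $\phi$ is affine on $[a,b]$; combined with DM4 this forces $\phi$ to be a single affine function on every interval of partial disclosure.

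Next, I would locate $r$ and derive the form of $G$ around it. The upward jump of $u$ at $r$ of size $\widetilde{\alpha}(1 - G(r)^{n-1}/\eta) > 0$, together with continuity of $\phi$ and $\phi \geq u$, implies $\phi(v) > u(v)$ on an interval $[v_L,r)$ immediately below $r$; by DM3 this region lies outside $\mathrm{supp}(G)$, so $G$ is constant there. Thus $r$ lies in the interior of an interval of partial disclosure $[v_\dagger, v_H]$ with $v_\dagger \leq v_L < r < v_H$, on which $\phi$ is affine. On $\mathrm{supp}(G) \cap [r, v_H]$ the identity $\phi(v) = u(v) = \widetilde{\alpha} + (1 - \widetilde{\alpha}) G(v)^{n-1}$ forces $G(v)^{n-1}$ to be affine, and continuity with the flat piece gives $G(v)^{n-1} = G(v_L)^{n-1} + \beta(v - r)$ for some $\beta > 0$. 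The contact point $v_H$ is where $G$ rejoins $F$. Above $v_H$ any further deviation from $F$ would define a new partial disclosure interval on which $\phi$ would have to be affine, but $u = \widetilde{\alpha} + (1 - \widetilde{\alpha}) F^{n-1}$ is convex (by assumption on $F^{n-1}$) and any such $\phi$ would fall below $u$ somewhere on $[v_H,1]$, contradicting DM2; hence $G = F$ on $(v_H, 1]$.

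Finally, I would handle the region below $v_L$ within $[v_\dagger, v_L]$. If $v_\dagger = v_L$, then $G = F$ on $[0, v_L]$ and the intermediate piece is vacuous. Otherwise, $\phi$ is affine on $(v_\dagger, v_L)$ and on $\mathrm{supp}(G)$ there we have $\phi(v) = u(v) = (\widetilde{\alpha}/\eta + 1 - \widetilde{\alpha}) G(v)^{n-1}$ (since $v < r$), forcing $G(v)^{n-1} = F(v_\dagger)^{n-1} + \gamma(v - v_\dagger)$ for some slope $\gamma$, with continuity $G(v_\dagger) = F(v_\dagger)$. The bound $\gamma \leq [F(v_\dagger)^{n-1}]'$ comes directly from the MPC constraint via Lemma \ref{lemma_A2}: with $H(v_\dagger) = 0$, the requirement $H(z) \geq 0$ just above $v_\dagger$ demands $G(v) \leq F(v)$ there, which differentiating the explicit form of $G$ yields $\gamma / \bigl((n-1) F(v_\dagger)^{n-2}\bigr) \leq f(v_\dagger)$, equivalent to $\gamma \leq [F(v_\dagger)^{n-1}]'$. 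The main obstacle is showing the partial disclosure region containing $r$ is a \emph{single} connected interval $[v_\dagger, v_H]$ rather than several pieces separated by regions where $G = F$: this is done by combining the affineness of $\phi$ on each such piece with the global convexity of $\phi$ and the fact that $u$ is strictly convex wherever $G = F$, which rules out extra contact points between the flat piece and $v_H$.
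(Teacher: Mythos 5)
Your route is genuinely different from the paper's: you run the Dworczak--Martini duality directly on an arbitrary symmetric equilibrium, whereas the paper proves this proposition entirely by primal deviation arguments (its Lemmas \ref{lemma_A2} and \ref{lemma_A1} are used to construct explicit profitable mean-preserving spreads/contractions in Claims 1--8), and invokes the multiplier machinery only afterwards, for candidates already known to have the stated shape. That ordering is not cosmetic: the necessity direction of Theorem \ref{DMO} requires the regularity of $u(\cdot)$, and the paper verifies regularity (and atomlessness of $G$, which you use implicitly through continuity of $u$ left of $r$ and the jump formula) only for structured candidates. Your proposal assumes a multiplier exists for an arbitrary equilibrium $G$ without checking either point. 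The core ideas you do use --- $\phi$ affine on the distortion region via (DM4), the flat piece $[v_L,r)$ from the jump, and $G^{n-1}$ affine wherever $\phi=u$ on the support --- are sound (though your per-interval ``equal conditional mean by Lemma \ref{lemma_A2}'' step is backwards; the clean argument is the global identity $\int\phi\,d(F-G)=\int H\,d\phi'$ with $H\ge 0$, forcing $\phi'$ constant on components of $\{H>0\}$).

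The genuine gaps are in the exclusion steps. First, your argument that $G=F$ above $v_H$ is wrong as written: on a hypothetical distortion interval $(a',b')$ above $v_H$ the payoff is $\widetilde\alpha+(1-\widetilde\alpha)G^{n-1}$, not $\widetilde\alpha+(1-\widetilde\alpha)F^{n-1}$, so convexity of $F^{n-1}$ gives no violation of (DM2) there. A correct dual argument must first show $H=0$ at the left endpoint of any such component (using affineness of $\phi$ on components of $\{H>0\}$ together with strict convexity of $u$ where $G=F$), and then note that an affine $G^{n-1}$ joining $F(a')^{n-1}$ to $F(b')^{n-1}$ lies above $F^{n-1}$, driving $H$ negative and violating $G\in\text{MPC}(F)$ --- this is the work the paper's Claims 4, 6 and 7 do by explicit deviations. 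The same omission recurs below $r$: you assert $H(v_\dagger)=0$ to obtain $\gamma\le[F(v_\dagger)^{n-1}]'$, and you assert $G=F$ on $[0,v_L]$ when $v_\dagger=v_L$, but ruling out distortion components and flat pieces strictly below $v_\dagger$ is precisely what needs proof, and your sketch (``rules out extra contact points'') does not supply it. Second, your key device ``$u$ is strictly convex wherever $G=F$'' uses strict convexity of $F^{n-1}$, while the proposition is stated for weakly convex $F^{n-1}$; on affine segments (e.g., the uniform example with $n=2$) this device fails, and the paper needs its separate Claim 8 to pin down the structure there. Until these exclusion arguments and the regularity/atomlessness preliminaries are filled in, the proof is incomplete.
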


\noindent \begin{proof}[Proof of Proposition \ref{proposition_A1}]
We prove the propositional statement by establishing a series of equilibrium properties of $u(\cdot)$ and $G(\cdot)$. With a positive mass of searchers, a standard equilibrium argument establishes that $G(\cdot)$ is continuous on its support.\footnote{An atom at either extreme $v\in\{0,1\}$ is ruled out by the atomless prior distribution. An atom at an interior point $v\in (0,1)$ produces ties with positive probability. This allows for a profitable deviation. By reallocating an arbitrarily small mass from the mass point to values below it, the firm can shift the mass point up by a small amount in a mean-preserving way. In so doing, the firm ensures that all ties break in its favor. A formal proof is available upon request.} The first claim establishes the convexity of $u(\cdot)$ on any strictly increasing region.

\noindent \textbf{\underline{Claim 1}}: If  $u(\cdot)$ ($G(\cdot)^{n-1}$) is strictly increasing on $(v_1,v_2)$, then $u(\cdot)$ ($G(\cdot)^{n-1}$) is (weakly) convex on $(v_1,v_2)$.

\begin{proof}
Suppose, to the contrary, that $u(\cdot)$ ($G(\cdot)^{n-1}$) is strictly increasing and concave on some interval $[v_1,v_2]$. Let $\overline{G}(v)=G(v|v\in [v_1,v_2])$. By Jensen's inequality, $E_{\overline{G}}[u(v)]<u(E_{\overline{G}}[v])$. Then, the following posterior distribution is a MPC of $G(v)$ (and thus of $F(v)$) and constitutes a profitable deviation:

\[
\widetilde{G}(v)=
\begin{cases}
G(v) & \text{for } v< v_1\\
G(v_1) & \text{for } v\in [v_1,E_{\overline{G}}[v])\\
G(v_2) & \text{for } v\in [E_{\overline{G}}[v],v_2)\\
G(v) & \text{for } v\geq v_2,
\end{cases}
\]

\noindent resulting in a payoff improvement of $(u(E_{\overline{G}}[v])-E_{\overline{G}}[u(v)])(G(v_2)-G(v_1))>0$. Therefore, $u(\cdot)$ and $G(\cdot)^{n-1}$ must be (weakly) convex in equilibrium.  
\end{proof}

The second claim establishes that $u(\cdot)$ has a bounded slope on any strictly increasing region.

\noindent \textbf{\underline{Claim 2}}: If  $u(\cdot)$ is strictly increasing and differentiable on $(v_1,v_2)$, then $u'(\cdot)<\infty$.

\begin{proof}
Suppose, to the contrary, that there exists $\tilde{v}\in (v_1,v_2)$ such that $\lim_{v\to \tilde{v}^-} u'(v)=\infty$ and thus $u(\cdot)$ has a convex shape arbitrary close to $\tilde{v}$. 

By eq. (\ref{eq_u}), the slope of $u(\cdot)$ is entirely determined by $G^{n-1}(\cdot)$, implying $\lim_{v\to \tilde{v}^-} G'(v)=\infty$. Since the slope of $F(v)$ is bounded, there exists $\tilde{v}'<\tilde{v}$ such that $G'(z)>F'(z)$ and thus $H''(z)=F'(z)-G'(z)<0$ for $z\in (\tilde{v}',\tilde{v})$. 

We next show that the concavity of $H(z)$ on $(\tilde{v}',\tilde{v})$ implies $H(z)>0$ for $z\in (\tilde{v}',\tilde{v})$. Suppose, to the contrary, that there exists $z_0\in (\tilde{v}',\tilde{v})$ such that $H(z_0)=0$. By Lemma \ref{lemma_A2}, $G(z_0)=F(z_0)$. But then, $G'(v)>F'(v)$ imply $G(v)>F(v)$ for $v>z_0$ and $ H(z_0+\delta)=H(z_0)+\int_{z_0}^{z_0+\delta} (F(v)-G(v))dv<0$, which by Lemma \ref{lemma_A2} contradicts $G\in MPC(F)$. Therefore, $H(z)>0$ for all $z\in (\tilde{v}',\tilde{v})$.
                     
As a final step, since $G(z)$ is strictly increasing and $H(z)>0$ on $(\tilde{v}',\tilde{v})$, by Lemma \ref{lemma_A1}, there exists $\widehat{G}(\cdot)$, given by eq. (\ref{G_widehat}), that is a MPS of $G(\cdot)$ and a MPC of $F(\cdot)$. But since $u(\cdot)$ is strictly convex in the same region, second-order stochastic dominance implies $E_{\widehat{G}}[u(v)]>E_{G}[u(v)]$, establishing a profitable deviation to $\widehat{G}(\cdot)$. Therefore, $u'(\cdot)<\infty$ for any strictly increasing and differentiable region $v\in (v_1,v_2)$.   
\end{proof}

Next, we establish that $G(\cdot)$ features a gap in its support below $r$.

\noindent \textbf{\underline{Claim 3}}: Suppose that $G(r)\in (0,1)$. Then, there exists $0<v_L<r$ such that $G(v_L)=G(r)$. 

\begin{proof}
Contrary to the claim, suppose that there exists $\delta>0$ such that $G(\cdot)$, and thus $u(\cdot)$, is strictly increasing on $[r-\delta,r)$. Let $\epsilon\in (0,\delta)$, $\epsilon_1=(1-t)\epsilon$, and $\epsilon_2=t\epsilon$ for $t\in[0,1]$. Define

\begin{equation}
h(v)=
\begin{cases}
u(v) & \text{for } v<r-\epsilon_1\\
u(r-\epsilon_1)+\frac{u(r)-u(r-\epsilon_1)}{\epsilon_1}(v-r+\epsilon_1) & \text{for } v\in [r-\epsilon_1,r)\\
u(r)+\frac{u(r+\epsilon_2)-u(r)}{\epsilon_2}(v-r) & \text{for } v\in [r,r+\epsilon_2)\\
u(v) & \text{for } v\geq r+\epsilon_2.
\end{cases}
\end{equation}

\noindent Figure \ref{fig_h_v} illustrates $h(\cdot)$. Given atomless $G(\cdot)$, $u(\cdot)$ is continuous on $(0,r)\cup (r,1)$. Moreover, the proof of Proposition \ref{prop_G_shape} establishes that $u(\cdot)$ is convex (Claim 1) and has a bounded slope (Claim 2) on any strictly increasing region. The discontinuity at $r$ and the convexity of $u(\cdot)$ imply that $h(v)>u(v)$ on $v\in (r-\epsilon_1,r)$ and $h(v)\geq u(v)$ on $v\in (r,r+\epsilon_2)$ (as $h(v)$ must lie above $u(v)$'s chord). Furthermore, $u'(\cdot)<\infty$ implies that $h(\cdot)$ is concave on $(r-\epsilon_1,r+\epsilon_2)$ for sufficiently small $\epsilon$ since $\lim_{\epsilon\to 0} \frac{u(r)-u(r-\epsilon(1-t))}{\epsilon(1-t)}=\infty$ while $\frac{u(r+\epsilon t)-u(r)}{\epsilon t}\leq u'(r+\epsilon t)<\infty$.  

Since $G(r)\in (0,1)$, $G(r-(1-t)\epsilon)>0$ and $G(r+t\epsilon)<1$ for small $\epsilon>0$. Let $s(t)=E_G[v|v\in (r-(1-t)\epsilon,r+t\epsilon)]$. Clearly, $s(0)<r<s(1)$ and since $G(\cdot)$ is continuous, by the intermediate value theorem, there exist $\underline{t}\in (0,1)$ such that $s(\underline{t})=r$. Let $\underline{\epsilon}_1=(1-\underline{t})\epsilon$ and $\underline{\epsilon}_2=\underline{t}\epsilon$. Define $\widetilde{G}\in MPC(G)$, as follows:

\[
\widetilde{G}(v)=
\begin{cases}
G(v) & \text{for } v< r-\underline{\epsilon}_1\\
G(r-\underline{\epsilon}_1) & \text{for } v\in [r-\underline{\epsilon}_1,r)\\
G(r+\underline{\epsilon}_2) & \text{for } v\in [r,r+\underline{\epsilon}_2)\\
G(v) & \text{for } v\geq r+\underline{\epsilon}_2.
\end{cases}
\]

By construction, $h(v)=u(v)$ for all $v$ in the support of $\widetilde{G}(\cdot)$. Moreover, since $h(\cdot)$ is concave on $[r-\underline{\epsilon}_1,r+\underline{\epsilon}_2]$, $E_{G} [u(v)]<E_{G} [h(v)]<E_{\widetilde{G}} [h(v)]=E_{\widetilde{G}} [u(v)]$. Therefore, $\widetilde{G}(v)$ constitutes a profitable deviation. This implies that $G(\cdot)$ must have a gap in its support below $r$, i.e., there exists $0<v_L<r$ such that $G(v_L)=G(r)$.  
\end{proof} 

The next claim proves that $(v_L,r)$ is the only possible gap in the support of $G(\cdot)$. This is because, without a payoff discontinuity, a gap in the support of $G(\cdot)$ gives rise to a convex payoff region, which leads to a profitable deviation to increase disclosure. This establishes that any gap in the support of $G(\cdot)$ must be due to a discontinuous jump in the payoff. 

\noindent \textbf{\underline{Claim 4}}: Suppose that there exist $0<v_1<v_2<1$ such that $G(v_1)=G(v_2)=k\in [0,1)$ where $v_1=\inf\{v|G(v)=k\}$ and $v_2=\sup\{v|G(v)=k\}$. Then, $v_2=r$.

\begin{proof}
Suppose, to the contrary, that $G(v_1)=G(v_2)=k\in [0,1)$, but $v_2\neq r$. Since $F(\cdot)$ is strictly increasing and $H(v_1)\geq 0$, we must have $H(z)>0$ for all $z\in (v_1,v_2]$. Furthermore, because $G(\cdot)$ is atomless and $G(v_2)<1$, there exists $v''>v_2$ such that $G(v)$ is strictly increasing in $v$ on $(v_2,v'']$ and $H(z)>0$ for all $z\in (v_1,v'']$. Consider now any sub-interval $[v_l,v_h]\subseteq [v_1,v'']$ with $v_l<v_2$ and $v_h>v_2$. Clearly, $u(\cdot)$ is strictly convex in this interval as it is flat on $[v_l,v_2]$ and by Claim 1 weakly convex above $v_2$. In addition, since $G(v'')>G(v_1)$ and $H(z)>0$ for $z\in [v_1,v'']$, by Lemma \ref{lemma_A1}, there exists $\widehat{G}(\cdot)$, given by eq. (\ref{G_widehat}), such that $\widehat{G}\in MPC(F)$ and $\widehat{G}\in MPS(G)$. Therefore, $E_{\widehat{G}}[u(v)]>E_{G}[u(v)]$, establishing profitable deviation. Therefore, $G(v_1)=G(v_2)=k\in [0,1)$ requires $v_2=r$. \end{proof}

The next claim proves that $u(\cdot)$ is affine on $[r,v_H]$. Intuitively, since $G\in MPC(F)$ on $[v_L,v_H]$, there is room for the firm to increase its disclosure. The affine payoff ensures that such deviation is not profitable.  

\noindent \textbf{\underline{Claim 5}}: If $G(r)\in [0,1)$, there exists $\delta>0$ such that $u(v)$ is affine for $v\in[r,r+\delta]$.

\begin{proof}
By Claim 4, $u(\cdot)$ must be strictly increasing above $r$, and by Claim 1, $u(\cdot)$ is (weakly) convex. Suppose that $u(\cdot)$ is strictly convex on $[r,r+\delta]$ for some $\delta>0$. By Claim 3, $G(r)=G(v_L)$ for some $v_L<r$. Since $F(\cdot)$ is strictly increasing with $H(v_L)\geq 0$, it follows that $H(z)>0$ for $z\in (v_L,r]$. Then, by the continuity of $F(\cdot)$ and $G(\cdot)$, there exists $v_H\in (r,r+\delta]$ such that $G(v_H)>G(r)$ and $H(z)>0$ for all $z\in [r,v_H]$. Then, given the strict convexity of $u(\cdot)$ on $[r,v_H]$, Lemma \ref{lemma_A1} ensures the existence of $\widehat{G}\in MPS(G)$ that constitutes a profitable deviation.  This proves that $u(\cdot)$ must be affine on $[r,v_H]$.
\end{proof}

The next claim establishes that if $F(\cdot)^{n-1}$ is strictly convex, the only other affine region must be immediately to the left of the gap. 

\noindent \textbf{\underline{Claim 6}}: Suppose $F(\cdot)^{n-1}$ is strictly convex and $u(\cdot)$ is affine with slope $\gamma$ on an interval $[v_1,v_2]$, where $v_1=\inf\{v|u'(v)=\gamma\}\neq r$ and $v_2=\sup\{v|u'(v)=\gamma\}$. Then, $v_2=v_L$.

\begin{proof}
To the contrary, suppose that the affine region is above $v_H$ (the upper bound of the affine region starting at $r$) or strictly below $v_L$ (i.e., $v_1\neq r$ and $v_2\neq v_L$). First, we establish that $H(v_1)=0$. This is trivially satisfied for $v_1=0$ since $F(v)$ and $G(v)$ are atomless. For $v_1>0$, suppose that $H(v_1)>0$. By definition, $u(\cdot)$ switches slope at $v_1$ (since $v_1=\inf\{v|u'(v)=\gamma\}$) and by Claim 1, the slope must increase at $v_1$, implying that $u(\cdot)$ is convex around $v_1$. Moreover, by the continuity of $G(\cdot)$, there exists $\epsilon$ such that $H(z)>0$ for $z\in (v_1-\epsilon,v_1+\epsilon)$ and $G(v_1-\epsilon)>G(v_1+\epsilon)$. Then, Lemma \ref{lemma_A1} implies the existence of $\widehat{G}\in MPC(G)$, constituting a profitable deviation. This establishes that $H(v_1)=0$.   

By Lemma \ref{lemma_A2}, $H(v_1)=0$ implies $F(v_1)=G(v_1)$. Therefore, by eq. (\ref{eq_u}), $G(\cdot)$ takes the form $G(v)=[F(v_1)^{n-1}+\gamma(v-v_1)]^{\frac{1}{n-1}}$ for $v\in [v_1,v_2]$. Next, we argue that $v_2\neq v_L$ also gives rise to a profitable deviation. 

First, note that $H(v_1)=0$, $F(v_1)=G(v_1)$ and $H(v_1+\epsilon)\geq 0$ for small $\epsilon>0$ together require $G'(v_1)\leq F'(v_1)$. Furthermore, $F(v_1)=G(v_1)$ and $G'(v_1)\leq F'(v_1)$ imply $[G(v_1)^{n-1}]'=\gamma\leq [F(v_1)^{n-1}]^{\prime}$. Since $F(\cdot)^{n-1}$ has a steeper slope everywhere in the interval (due to the convexity of $F(\cdot)^{n-1}$), $G(v)<F(v)$ for all $v\in [v_1,v_2]$. Therefore, $H(z)>0$ for $z\in (v_1,v_2]$. Moreover, since $v_2\neq v_L$, by Claim 4, $G(\cdot)$ must have mass above $v_2$. Furthermore, by construction, it must change slope at $v_2$, and by Claim 1, the slope must increase. This implies that $u(\cdot)$ is convex in the neighborhood of $v_2$. Analogous to above, there exists $\delta>0$ such that $G(z)$ strictly increases and $H(z)>0$ for all $z\in (v_2-\delta,v_2+\delta)$. Then, Lemma \ref{lemma_A1} again implies the existence of a profitable deviation. Therefore, $v_2=v_L$. 
\end{proof}

We next establish that strict convexity of $u(\cdot)$ on some interval requires full disclosure.

\noindent \textbf{\underline{Claim 7}}: If $u(\cdot)$ is strictly convex on $[v_1,v_2]$, then $G(v)=F(v)$ for all $v\in[v_1,v_2]$.
 
\begin{proof}
This claim follows immediately from the convexity of $u(\cdot)$ and Lemma \ref{lemma_A1} since $G(v)\neq F(v)$ for some $v$ implies the existence of a sub-interval $(\tilde{v}_1,\tilde{v}_2)$ on which $G(\cdot)$ is strictly increasing and $H(\cdot)>0$. Then, by Lemma \ref{lemma_A1}, $\widehat{G}(\cdot)$ is a profitable MPS of $G(\cdot)$. Therefore, $G(v)=F(v)$  for all $v\in [v_1,v_2]$.  
\end{proof}

The final claim establishes that an affine $F^{n-1}(\cdot)$ in a region of strictly increasing payoff requires an affine $G^{n-1}(\cdot)$. Moreover, if $G(\cdot)$ and $F(\cdot)$ diverge in any such affine region below $v_L$, then $G(\cdot)$ must remain affine (cannot switch slope) before $v_L$.   

\noindent \textbf{\underline{Claim 8}}: Suppose that $F^{n-1}(\cdot)$ is affine and $u(\cdot)$ is strictly increasing on $[v_1,v_2]$. Then, $u(\cdot)$ and $G^{n-1}(\cdot)$ are affine. Moreover, if $[v_1,v_2]\subseteq [0,v_L]$, either $G(\cdot)=F(\cdot)$ on the entire interval $[v_1,v_2]$, or $v_2=v_L$ and $G(v)<F(v)$ for all $v\in (v_1,v_L]$.  

\begin{proof}
By Claim 1, $u(\cdot)$ is weakly convex. Suppose that $u(\cdot)$ is strictly convex on some sub-interval. By Claim 7, this implies $F(\cdot)=G(\cdot)$ on that interval. But since $F(\cdot)^{n-1}$ is affine, this equality implies that $u(\cdot)$ is affine, contradicting a strictly convex $u(\cdot)$. Therefore, $u(\cdot)$ and $G^{n-1}(\cdot)$ must be affine in the entire interval $[v_1,v_2]$.  

To establish the second part of the claim, we first rule out $G(\cdot)>F(\cdot)$ for any sub-interval $[\tilde{v}_1,\tilde{v}_2]\subset[v_1,v_2]$. Suppose, to the contrary, that such an interval exists. Then, $H(\tilde{v}_1)>0$- otherwise if $H(\tilde{v}_1)=0$, $H(z)=H(\tilde{v}_1)+\int_{\tilde{v}_1}^{z}(F(v)-G(v))dv<0$ for $z\in (\tilde{v}_1,\tilde{v}_2$), contradicting $G\in MPC(F)$. But $H(\tilde{v}_1)>0$ implies that there exists a region below $\tilde{v}_1$, in which $G(\cdot)<F(\cdot)$ and $H(\cdot)>0$. Since $G(\cdot)$ is continuous and surpasses $F(\cdot)$ above $\tilde{v}_1$, it must strictly increase slope at least once. This, in turn, implies that there exists a region of convex payoff $u(\cdot)$ where $H(\cdot)>0$ and $G(\cdot)$ is strictly increasing. Then, by Lemma \ref{lemma_A1}, there exists a MPS, $\widehat{G}(\cdot)$, that constitutes a profitable deviation. Therefore, $G(v)\leq F(v)$ for all $v\in [v_1,v_2]$.     

Finally, we show that $G(v')<F(v')$ for some $v'\in (v_1,v_2]$ implies $v_2=v_L$ and $G(v)<F(v)$ for all $v\in(v',v_L]$. Analogous to above, note that $H(v')>0$ and since $G(\cdot)$ and $F(\cdot)$ are both affine on $(v_1,v_2]$ and from above, $G(v)\leq F(v)$ for all $v\in[v_1,v_2]$, $H(z)>0$ for all $z\in(v',v_2]$. Therefore, any increase in the slope of $G(\cdot)$ on $(v',v_L)$ would give rise to a profitable mean-preserving spread.  
\end{proof}

Together, Claims 1-8 establish that $G^*(v)$ must satisfy eq. (\ref{eq_G_0}). 
\end{proof}

\begin{lemma}
Suppose that $G\in MPC(F)$ satisfies Proposition \ref{proposition_A1} and 

\begin{align}
\label{eq_PHI_0}
\phi(v)=
\begin{cases}
(\frac{\widetilde{\alpha}}{\eta}+(1-\widetilde{
\alpha}))F(v)^{n-1} & \text{if } v\leq \min\{v_{\dagger},v_L\}\\
(\frac{\widetilde{\alpha}}{\eta}+(1-\widetilde{
\alpha}))\left(F(v_{\dagger})^{n-1}+\gamma(v-v_{\dagger})\right) & \text{if } v\in (v_{\dagger},v_L)\\
\widetilde{\alpha}+(1-\widetilde{\alpha})(G(v_L)^{n-1}+\beta(v-r)) & \text{if } v\in [v_L,v_H]\\
\widetilde{\alpha}+(1-\widetilde{\alpha})F(v)^{n-1} & \text{if } v\in (v_H,1].\end{cases}
\end{align}

\noindent $G(\cdot)$ is an equilibrium if and only if 1A) $\phi(0)\geq 0$; 2A) $\phi(v)$ is continuous and weakly convex at $v_L$ and $v_H$; 3A) $\gamma=[F(v_{\dagger})^{n-1}]'$ for $0<v_{\dagger}<v_L$; 4A) $(1-\alpha)\beta=\gamma$ for $0\leq v_{\dagger}<v_L$.
\label{lemma_phi}
\end{lemma}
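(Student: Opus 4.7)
The plan is to apply Theorem DMO (from \citet{dworczak2019simple}) to the pair $(G,\phi)$ and show that the four DMO conditions (DM1)--(DM4) reduce exactly to conditions 1A--4A. First I would verify that (DM3) holds automatically by construction: on each region where $G$ has mass—below $\min\{v_\dagger,v_L\}$ (where $G=F$), on $(v_\dagger,v_L)$ (where $G^{n-1}$ is affine with slope $\gamma$), on $[r,v_H]$ (where $G^{n-1}$ is affine with slope $\beta$), and above $v_H$ (where $G=F$)—the formula in \eqref{eq_PHI_0} evaluates $\phi$ exactly to the piecewise expression for $u$ in \eqref{eq_u}, using $\widetilde{\alpha}=\alpha\eta/(\alpha\eta+(1-\alpha))$ to simplify the constants. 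The only region where $G$ has no mass is the gap $(v_L,r)$, which is not relevant for (DM3).

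Next I would address (DM4): since $G$ and $F$ coincide outside the partial disclosure region $(v_\dagger,v_H)$ (or $(0,v_H)$), and since $G\in \mathrm{MPC}(F)$ makes these distributions share their conditional mean on this region, $\int\phi\,dG=\int\phi\,dF$ reduces to the requirement that $\phi$ be affine over the entire partial disclosure region. The candidate $\phi$ is affine on $(v_\dagger,v_L)$ (slope $(\frac{\widetilde{\alpha}}{\eta}+(1-\widetilde{\alpha}))\gamma$) and on $[v_L,v_H]$ (slope $(1-\widetilde{\alpha})\beta$). Continuity at $v_L$ is part of 2A; substituting the explicit formulas for $\widetilde{\alpha}$ and $\eta$, equating the two slopes yields $\gamma=(1-\alpha)\beta$, which is precisely condition 4A.

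Then I would handle (DM1) by checking continuity and weak convexity at each transition point. On $[0,\min\{v_\dagger,v_L\}]$ and on $[v_H,1]$, $\phi$ coincides with a positive multiple of $F^{n-1}$, which is convex by the standing assumption. Inside the partial disclosure region $\phi$ is affine by the (DM4) analysis. Continuity and weak convexity at $v_L$ and $v_H$ are exactly condition 2A. The delicate point is the transition at $v_\dagger$ when $0<v_\dagger<v_L$: continuity holds because $G(v_\dagger)=F(v_\dagger)$, but weak convexity requires $\gamma\geq [F(v_\dagger)^{n-1}]'$, while the MPC constraint $H(v)\geq 0$ just to the right of $v_\dagger$ forces $\gamma\leq [F(v_\dagger)^{n-1}]'$ (this uses $H'(v_\dagger)=0$ and the sign of $H''(v_\dagger^+)$). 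Together, these yield the tangency condition $\gamma=[F(v_\dagger)^{n-1}]'$ of 3A. I expect this interplay of convexity-of-$\phi$ with the MPC constraint on $G$ to be the main obstacle: it is what forces the equality (rather than an inequality) in 3A, and getting the algebra right is what translates the general slope condition into the clean form $(1-\alpha)\beta=\gamma$ in 4A.

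Finally I would verify (DM2), that $\phi\geq u$ everywhere. On the support of $G$ they are equal by (DM3). On the gap $(v_L,r)$, $u$ equals its left-limit value at $v_L$ (since $G$ is flat), $\phi$ is affine and strictly increasing, and $\phi(v_L)=u(v_L)$ by continuity (2A), so $\phi>u$ on the gap. The only remaining subregion is $[0,r)$ when $v_L=v_\dagger=0$: here $u\equiv 0$ and $\phi$ is an increasing affine function with $\phi(r^-)=u(r)-\widetilde{\alpha}(1-G(r)^{n-1}/\eta)+\widetilde{\alpha}>0$, so $\phi\geq 0$ on $[0,r)$ iff $\phi(0)\geq 0$, which is condition 1A. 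Assembling all the pieces, I get that $(G,\phi)$ satisfies (DM1)--(DM4) if and only if conditions 1A--4A hold; since this candidate $\phi$ is, up to the choices that 1A--4A pin down, the unique function able to match $u$ on the support and be affine on the partial disclosure region, existence of any supporting multiplier is equivalent to these four conditions.
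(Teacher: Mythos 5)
Your proposal is correct and follows essentially the same route as the paper: invoke the Dworczak--Martini characterization, note (DM3) holds by construction, reduce (DM4) to affinity of $\phi$ on the partial-disclosure region (yielding 4A after substituting $\widetilde{\alpha}$), reduce (DM1) to 2A and to the tangency condition 3A (using $\gamma\leq[F(v_\dagger)^{n-1}]'$, which the paper takes from Proposition \ref{proposition_A1} and you rederive from the MPC constraint), and reduce (DM2) to 1A via the gap $(v_L,r)$. No substantive differences from the paper's argument.
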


\begin{proof}
Given that $G(\cdot)$ satisfies eq. (\ref{eq_G_0}), it is immediate that $u(\cdot)$ given by eq. (\ref{eq_u}) is regular as defined by \cite{dworczak2019simple}. Then, as stated in their Theorem 2, the existence of a multiplier $\phi(\cdot)$ satisfying Theorem 1 is necessary and sufficient for optimality. Next, we show that DM(1)-DM(4) reduce to 1A)-4A). 

\noindent (DM1): this condition requires the continuity and convexity of $\phi(\cdot)$.

\noindent \underline{Continuity}: Given eq. (\ref{eq_PHI_0}), the continuity requirement reduces to 2A). To see this, note that by construction, $\phi(\cdot)$ is clearly continuous everywhere apart from $v_L$ and $v_H$. The continuity at $v_H$ is ensured by $G\in MPC(F)$ since Lemma \ref{lemma_A2} requires $H(v_H)=0\Longleftrightarrow F(v_H)^{n-1}=G(v_H)^{n-1}\Longrightarrow \phi(v_H^-)=\phi(v_H^+)$. Therefore, $\phi(\cdot)$ only needs to satisfy the continuity at $v_L$. 

\noindent \underline{Convexity}: We show that the convexity requirement reduces to 2A) and 3A). The convexity of $\phi(\cdot)$ strictly within each interval in eq. (\ref{eq_PHI_0}) follows from its affine shape on $(v_{\dagger},v_L)$ and $(v_L,v_H)$ and the convexity of $F(v)^{n-1}$ in the remaining regions. 2A) requires the convexity of $\phi(\cdot)$ at $v_L$ and $v_H$. 3A) is necessary and sufficient for the convexity at $v_{\dagger}$. To see this, recall from Proposition \ref{proposition_A1} that $\gamma\leq [F(v_{\dagger})^{n-1}]'$. Therefore, the convexity of $\phi(\cdot)$ at $v_{\dagger}$ requires $\gamma=[F(v_{\dagger})^{n-1}]'$.

\noindent (DM2): $\phi(\cdot)\geq u(\cdot)$ on $[0,1]$ reduces to 1A). To see this, note that $\phi(\cdot)$ coincides with $u(\cdot)$ everywhere else apart from $(v_L,r)$. For $v_L>0$, 2A) ensures that $\phi(v_L^-)=\phi(v_L^+)$ and since $u(\cdot)$ is constant on $(v_L,r)$ while $\phi(\cdot)$ is strictly increasing, it implies $\phi(v)>u(v)$ for all $v\in (v_L,r)$. For $v_L=0$, $u(v)=0$ for $v\in [0,r]$ and since $\phi(\cdot)$ is strictly increasing in the same interval, 1A) is necessary and sufficient for $\phi(\cdot)\geq u(\cdot)$ on the entire interval. 

\noindent (DM3): this condition uniquely determines $\phi(\cdot)$ everywhere else apart from the interval $(v_L,r)$.

\noindent (DM4): this condition reduces to 4A). To see this, note that by construction, $G(\cdot)$ coincides with $F(\cdot)$ on $[0,v_{\dagger}]\cup [v_H,1]$ and $G\in MPC(F)$ on $(v_{\dagger},v_H)$. Then, a strictly convex $\phi(\cdot)$ implies $\int_{v_{\dagger}}^{v_H} \phi(v)dG(v)<\int_{v_{\dagger}}^{v_H} \phi(v)dF(v)$. Therefore, (DM4) is satisfied if and only if $\phi(\cdot)$ is affine on $(v_{\dagger},v_H)$. By Claim 5 in the Proof of Proposition \ref{proposition_A1}, the upper affine region of eq. (\ref{eq_PHI_0}) always exists. This explains the functional form of $\phi(\cdot)$ on $(v_L,r)$. The lower affine region exists if $0\leq v_{\dagger}<v_L$, in which case, $\phi(\cdot)$ must have a constant slope in the entire interval $(v_{\dagger},v_H)$, i.e., $(\frac{\widetilde{\alpha}}{\eta}+(1-\widetilde{
\alpha})\gamma=(1-\widetilde{\alpha})\beta$. Substituting for $\widetilde{\alpha}$ from eq. (\ref{alphatilde}), this equality reduces to $(1-\alpha)\beta=\gamma$ as stated by 4A). 
\end{proof}

\bigskip

The next Lemma establishes key properties of an equilibrium that features full disclosure below $v_L$ (i.e., $v_L\leq v_{\dagger}$).

\begin{lemma}
Suppose that $G(\cdot)\in MPC(F)$ satisfies eq. (\ref{eq_G_0}) and $v_L\leq v_{\dagger}$. Define 

\begin{equation}
D(v,\beta|v_L,r)=F(v_L)^{n-1}+\beta(v-r)-F(v)^{n-1},
\label{eq_vH}
\end{equation}

\begin{equation}
H^*(v,\beta|v_L,r)=\int_{v_L}^{v} F(v)dv-F(v_L)(r-v_L)-\int_{r}^{v} G(v)dv.
\label{eq_H_star_vH}
\end{equation}

\noindent Let $v_T$ denote the upper bound of $G$'s support (i.e., $G(v_T)=1$) and $v_2^D$ denote the highest value that solves $D(v_2^D,\beta)=0$. Then, $v_H$, $v_T$, and $\beta$ are unique and satisfy $v_H=\min\{v_2^D,1\}$, $v_T=\min\{\frac{1-F(v_L)^{n-1}}{\beta}+r,1\}$, and $\beta=\frac{E\left[F(v)^{n-1}|v\in [v_L,v_H]\right]-F(v_L)^{n-1}}{E\left[v|v\in [v_L,v_H]\right]-r}=\beta^*$.

\label{lemma_feas_0}    
\end{lemma}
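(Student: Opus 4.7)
The plan is to derive the three closed-form expressions directly from eq.~(\ref{eq_G_0})---which under the hypothesis $v_L \leq v_\dagger$ simplifies since the lower affine region $(v_\dagger, v_L)$ is empty, so $G$ agrees with $F$ below $v_L$, is flat at $F(v_L)$ on $[v_L, r)$, follows the capped affine-power form on $[r, v_H]$, and agrees with $F$ above $v_H$---combined with the $\text{MPC}(F)$ requirement. I would first handle $v_T$ by noting that in the affine region $G(v)^{n-1}$ attains 1 at $v^{\#} := r + (1 - F(v_L)^{n-1})/\beta$: if $v^{\#} \leq v_H$, the distribution caps inside $[r, v_H]$ and $v_T = v^{\#}$; otherwise $G(v_H) = F(v_H) < 1$ and $G = F$ continues upward, so $v_T = 1$ by strict monotonicity of $F$. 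Both cases unify to $v_T = \min\{v^{\#}, 1\}$.

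For $v_H$, continuity of $G$ at the junction with the upper truthful piece requires $\min\{F(v_L)^{n-1} + \beta(v_H - r),\, 1\}^{1/(n-1)} = F(v_H)$. If the minimum selects the affine term, this reads $D(v_H, \beta) = 0$; since $F^{n-1}$ is weakly convex, $D(\cdot, \beta)$ is weakly concave in $v$ with $D(r, \beta) = F(v_L)^{n-1} - F(r)^{n-1} < 0$, so $D$ has at most two roots in $[r, 1]$, and the MPC constraint selects the upper one as $v_H$ (the lower root is merely where $G$ first crosses $F$ from below within the affine region). If instead the minimum equals 1, then $F(v_H) = 1$ forces $v_H = 1$, equivalently $v_2^D \geq 1$. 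Hence $v_H = \min\{v_2^D,\, 1\}$. To pin down $\beta$, I invoke Lemma~\ref{lemma_A2}: since $G$ and $F$ coincide outside $[v_L, v_H]$, the MPC property reduces to $H(v_H) = 0$, i.e.\ $\int_{v_L}^{v_H} G\, dv = \int_{v_L}^{v_H} F\, dv$. Decomposing the $G$-integral into its flat piece $F(v_L)(r - v_L)$ on $[v_L, r]$ and the affine-power piece on $[r, v_H]$, and evaluating the latter by the substitution $u = F(v_L)^{n-1} + \beta(v-r)$, $du = \beta\, dv$, produces exactly the equation $H^{*}(v_H, \beta \mid v_L, r) = 0$. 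Solving this jointly with the contact identity $\beta(v_H - r) = F(v_H)^{n-1} - F(v_L)^{n-1}$ and simplifying algebraically (the factor $F(v_H) - F(v_L)$ cancels after clearing denominators) shows the resulting value of $\beta$ coincides with $\beta^*$ as given in eq.~(\ref{eq_BET}).

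Uniqueness of $(v_H, v_T, \beta)$ will then follow because the contact/cap equation together with $H^{*}(\cdot) = 0$ constitute two equations in the two unknowns $(v_H, \beta)$ with opposing monotonicities: as $\beta$ grows the upper contact root $v_2^D$ increases, while the $G$-integral contribution to $H^{*}$ decreases (mass concentrates nearer $r$), so the two implicit curves intersect at most once; $v_T$ is then read off. The principal obstacle I anticipate is the case dichotomy between ``disclosure at the top'' ($v_2^D \leq 1$) and ``no disclosure at the top'' ($v_2^D \geq 1$): the integral of $G$ over $[r, v_H]$ admits two algebraic forms depending on whether the cap is reached inside the interval, and the key technical step is verifying that both cases---the second one requiring the integral to be split at $v^{\#}$ with the upper sub-interval contributing $v_H - v^{\#}$---still collapse algebraically to the single expression $\beta = \beta^*$.
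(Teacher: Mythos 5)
Your proposal is correct and follows essentially the paper's own route: reduce the MPC requirement (via the appendix lemma on $H$) to the contact condition $D(v_H,\beta)=0$ plus the integral condition $H^*(v_H,\beta)=0$, use concavity of $D(\cdot,\beta)$ to select the upper root capped at $1$, read off $v_T$ from where the affine piece of $G^{n-1}$ reaches $1$, and recover $\beta^*$ by combining the two conditions, treating the ``disclosure at the top'' and capped cases separately exactly as the paper does. The only step to tighten is uniqueness: the ``two implicit curves with opposing monotonicities'' picture is degenerate, because at the intersection $\partial H^*/\partial v_H=F(v_H)-G(v_H)=0$, so the $H^*=0$ locus is locally vertical in $(\beta,v_H)$; the clean argument (the paper's) is to compose and note $\frac{d}{d\beta}H^*\bigl(v_2^D(\beta),\beta\bigr)=\partial H^*/\partial\beta=-\int_r^{v_H}\partial G/\partial\beta\,dv<0$, where the contact condition kills the $dv_2^D/d\beta$ term.
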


\begin{proof} Given $v_L\leq v_{\dagger}$, eq. (\ref{eq_G_0}) reduces to eq. (\ref{eq_G}). Then, Lemma \ref{lemma_A2} implies that $G\in MPC(F)$ if and only if $H^*(v_H,\beta)=0$ and $F(v_H)=G(v_H)$. We show below that for fixed values of $v_L$ and $r$, these two conditions uniquely determine $v_H$, $v_T$, and $\beta$. Let
\begin{equation}
F(v_L)^{n-1}+\beta(\overline{v}-r)=1\Longleftrightarrow \overline{v}=\frac{1-F(v_L)^{n-1}}{\beta}+r.
\label{eq_v_overline}
\end{equation}

\noindent By eq. (\ref{eq_G}), $D(v,\beta)=G(v)^{n-1}-F(v)^{n-1}$ for $v\in[r,\min\{v_H,\overline{v}\}]$. Furthermore, $D(\cdot,\beta)$ is concave due to the convexity of $F(\cdot)^{n-1}$. Therefore, $D(\cdot, \beta)$ has a unique maximizer $v^m=\underset{v}{argmax} \ D(v, \beta)$. Moreover, $D(v,\cdot)$ increases in $\beta$, and since $D(v,0)=F(v_L)^{n-1}-F(v)^{n-1}<0$ for $v\geq r$, there exists a unique $\underline{\beta}>0$ such that $D(v^m,\underline{\beta})=0$. 

For $\beta\leq \underline{\beta}$, $D(v,\beta)\leq 0\Longleftrightarrow G(v)\leq F(v)$ for all $v\in (r,\min\{v_H,\overline{v}\})$. For $\beta> \underline{\beta}$, $D(\cdot,\beta)$ has exactly two roots $r<v_1^D<v^m<v_2^D$ and since $D(\cdot,\beta)$ is single-peaked,
\begin{eqnarray*}
&D(v,\beta)<0&\Longleftrightarrow G(v)<F(v) \text{ for }  v\in(r,v_1^D),\\ 
&D(v,\beta)>0&\Longleftrightarrow G(v)>F(v) \text{ for } v\in(v_1^D,\min\{v_2^D,v_H,\overline{v}\}). 
\end{eqnarray*}

\noindent Given the above inequalities, it is immediate from eq. (\ref{eq_H_star_vH}) that $H^*(v,\beta)>0$ for $\beta\leq \underline{\beta}$ or $v\in(r, v_1^D]$. Therefore, $H^*(v_H,\beta)=0$ requires $\beta>\underline{\beta}$ and $v_H>v_1^D$. 

Notice that $\overline{v}$ is monotone decreasing in $\beta$, while implicit differentiation of $D(v_2^D,\beta)=0$ yields:
\begin{equation*}
\frac{\partial v_2^D}{\partial \beta}=-\frac{v_2^D-r}{\beta-[F(v_2^D)^{n-1}]'}>0,
\end{equation*}

\noindent where the last inequality follows from the fact that $D(v,\beta)$ decreases with $v$ for $v>v^m$, implying that $\beta<[F(v_2^D)^{n-1}]'$. Notice that for $\overline{\beta}=\frac{1-F(v_L)^{n-1}}{1-r}$, $\overline{v}=v_2^{D}=1$ since $D(\overline{v},\overline{\beta})=0$. More generally, since $\overline{v}$ decreases in $\beta$ while $v_2^D$ increases in $\beta$,

\[
v_2^D\lesseqgtr 1 \lesseqgtr \overline{v}\Longleftrightarrow \beta \lesseqgtr \frac{1-F(v_L)^{n-1}}{1-r}. 
\]

\noindent For $\beta\leq\frac{1-F(v_L)^{n-1}}{1-r}$, $v_H=v_2^D\leq 1\leq \overline{v}$ is the unique $v>v_1^D$ that satisfies $D(v_H,\beta)=0\Longleftrightarrow G(v_H)=F(v_H)$. For $\beta>\frac{1-F(v_L)^{n-1}}{1-r}$, $\overline{v}<1=v_H<v_2^D$ since $D(v,\beta)>0$ for all $v\in (v_1^D,1]$ and by eq. (\ref{eq_G}), $G(v)=1$ for $v\geq \overline{v}$, implying $F(1)=G(1)$. Combining the two cases, $v_H=\min\{v_2^D,1\}$ is the only value that satisfies $F(v_H)=G(v_H)$. It is immediate that $G(v_T)=1 \Longleftrightarrow  v_T=\min\{\overline{v},1\}$.

To show the uniqueness of $\beta$, we establish that $H^*(v_H,\beta)$ is monotone decreasing in $\beta$. To see this, we first note that  

\begin{equation}
\frac{\partial H^*}{\partial v_H}=\left(F(v_H)-G(v_H)\right)=0, 
\label{eq_H_par_vH}
\end{equation}

\noindent since $G(v_H)=F(v_H)$. Similarly, for $\beta>\frac{1-F(v_L)^{n-1}}{1-r}$,  $v_T=\overline{v}$ and $v_H=1$, and eq. (\ref{eq_H_star_vH}) reduces to $H^*(1,\beta)=\int_{v_L}^{1} F(v)dv-F(v_L)(r-v_L)-\int_{r}^{\overline{v}} G(v)dv-(1-\overline{v})$. It follows immediately that $\frac{\partial H^*}{\partial \overline{v}}=-G(\overline{v})+1=0$. Therefore,  

\[
\frac{dH^*}{d\beta}=-\int_{r}^{v_H} \frac{\partial G(v)}{\partial \beta}<0,
\]

\noindent establishing the uniqueness of $\beta$ that solves $H^*(v_H,\beta)=0$. Finally, to establish the functional form of $\beta$, note first that substituting for $G(\cdot)$ from eq. (\ref{eq_G_0}) yields

\begin{eqnarray}
\int_{r}^{v_H} G(v)dv&=&\frac{n-1}{n}\frac{F(v_H)^{n}-F(v_L)^{n}}{\beta}+1-v_T= \notag \\
&=&\frac{F(v_H)^{n}-F(v_L)^{n}}{\beta^*}-(F(v_H)-F(v_L))\frac{\widetilde{\eta}}{\beta}+1-v_T,
\label{eq_G_I_a}
\end{eqnarray}

\noindent where $\widetilde{\eta}$ is defined by eq. (\ref{eta_tilde}). Integrating by parts the remaining terms in eq. (\ref{eq_H_star_vH}) yields

\begin{equation}
\int_{v_L}^{v_H} F(v) dv-F(v_L)(r-v_L)=(v_H-r)F(v_H)-\int_{v_L}^{v_H} (v-r)f(v)dv.
\label{eq_F_H}
\end{equation}

\noindent Recall from above that $v_H\leq 1=v_T$ for $\beta\leq \frac{1-F(v_L)^{n-1}}{1-r}$, and by eq. (\ref{eq_vH}), $v_H-r=\frac{F(v_H)^{n-1}-F(v_L)^{n-1}}{\beta}$. Thus, equating (\ref{eq_G_I_a}) and (\ref{eq_F_H}), and rearranging the terms yields

\begin{eqnarray*}
&&(F(v_H)-F(v_L))\frac{\widetilde{\eta}-F(v_L)^{n-1}}{\beta}=\int_{v_L}^{v_H} (v-r)f(v)dv\Longrightarrow\\
&&\\
&&\beta=\frac{\widetilde{\eta}-F(v_L)^{n-1}}{\int_{v_L}^{v_H} \frac{vf(v)dv}{F(v_H)-F(v_L)}-r}=\frac{E\left[F(v)^{n-1}|v\in [v_L,v_H]\right]-F(v_L)^{n-1}}{E\left[v|v\in [v_L,v_H]\right]-r}=\beta^*.
\end{eqnarray*}

\noindent Analogously, $v_T<1<v_H=1$ for $\beta>\frac{1-F(v_L)^{n-1}}{1-r}$. Substituting for $v_T$ in eq. (\ref{eq_G_I_a}) and equating it to (\ref{eq_F_H})
again yields $\beta^*$.
\end{proof}

\begin{lemma}[Full Disclosure below $v_L$] There is no symmetric equilibrium $G(\cdot)$ that satisfies $v_{\dagger}<v_L$ and $G(v)\neq F(v)$ for some $v\in (v_{\dagger},v_L]$. Therefore, if $v_L>0$, $G=F$ for all $v\leq v_L$.
\label{lemma_full_dis}
\end{lemma}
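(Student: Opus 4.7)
The plan is to argue by contradiction. Suppose there is a symmetric equilibrium $G(\cdot)$ with $v_{\dagger}<v_L$ and $G(v)\neq F(v)$ for some $v\in(v_\dagger,v_L]$. By Lemma \ref{lemma_phi}, conditions 3A and 4A must hold, so $\gamma=[F(v_\dagger)^{n-1}]'$ and $(1-\alpha)\beta=\gamma$. The strategy is to establish strict separation of $G$ from $F$ throughout $(v_\dagger,r]$ and then construct a profitable deviation by combining the resulting gap with the upward discontinuity of $u(\cdot)$ at $r$.

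First I will show that $G(v_L)<F(v_L)$ and that $H(z)=\int_0^z(F(v)-G(v))dv>0$ for all $z\in(v_\dagger,r]$. Since $F^{n-1}$ is weakly convex and, by 3A, the affine function $F(v_\dagger)^{n-1}+\gamma(v-v_\dagger)$ is its tangent line at $v_\dagger$, we have $G^{n-1}(v)\leq F^{n-1}(v)$ on $(v_\dagger,v_L]$. Combined with the hypothesis $G\neq F$ at some point in $(v_\dagger,v_L]$, this forces $G(v_L)<F(v_L)$ and hence $H(v_L)>0$. Because $G$ is constant on $(v_L,r)$ while $F$ is strictly increasing, $H$ continues to increase on $(v_L,r]$, giving $H(z)>0$ on $(v_\dagger,r]$.

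Second, I will apply Lemma \ref{lemma_A1} to produce a deviation $\widehat{G}\in\text{MPS}(G)\cap\text{MPC}(F)$ that concentrates mass at the endpoints of an interval $[v_l,v_h]$ with $v_l\in(v_\dagger,v_L)$ and $v_h\in(r,v_H]$, so that the spreading interval straddles the jump in $u$ at $r$. This is permissible because $G$ is strictly increasing on some subinterval of $(v_\dagger,v_L)$ and $H$ is strictly positive throughout $(v_\dagger,r]$.

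Third, I will verify that $\int u\,d\widehat{G}>\int u\,dG$. The payoff $u$ features an upward jump of size $J=\widetilde{\alpha}(1-G(v_L)^{n-1}/\eta)>0$ at $r$. The expected payoff under $\widehat{G}$ exceeds that under $G$ because the endpoint mass at $v_h>r$ lies on the high branch of $u$, while the compensating mass at $v_l$ lies on the low branch but at a point where $u$ is already supported by an affine envelope. Critically, the gap in $G$ on $(v_L,r)$ means that relative to the mean-preserving spread $\widehat{G}$, the original $G$ effectively ``wastes'' probability mass in a region of constant payoff; the strict positivity of $H$ on $(v_L,r)$ measures exactly this waste and converts it into a strict improvement.

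The main obstacle is that conditions 2A, 3A, and 4A together force the slopes of $u$ on the two affine pieces surrounding the jump to match — this is precisely why the naive Claim 3–style auxiliary envelope $h(\cdot)$ connecting $(v_l,u(v_l))$ to $(r,u(r))$ to $(v_h,u(v_h))$ comes out affine rather than strictly concave for any $v_l\in(v_\dagger,v_L)$. The strict improvement therefore cannot be read off from local slope comparisons alone; it must be extracted from the global discrepancy $H(z)>0$ on $(v_\dagger,r]$ induced by the strict convexity of $F^{n-1}$ somewhere in $(v_\dagger,v_L)$. The careful bookkeeping that ties this discrepancy to the positivity of $\int u\,d\widehat{G}-\int u\,dG$ will be the most delicate step of the proof.
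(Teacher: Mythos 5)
Your approach has a genuine gap, and it sits exactly where you flag the ``most delicate step.'' Under your contradiction hypothesis the candidate is an equilibrium, so by Lemma \ref{lemma_phi} \emph{all} of 1A)--4A) hold, not just 3A) and 4A); in particular continuity of $\phi$ at $v_L$ (condition 2A) pins down the jump as $J=(1-\widetilde{\alpha})\beta\,(r-v_L)$, with the common slope $m=(1-\widetilde\alpha)\beta=(\frac{\widetilde\alpha}{\eta}+1-\widetilde\alpha)\gamma$ on both affine branches. This makes your straddling spread exactly payoff-neutral: writing $u=\phi-D$, where $\phi$ is the affine multiplier on $[v_\dagger,v_H]$ and $D\geq 0$ is supported only on the gap $(v_L,r)$, both the original $G$ and the deviation $\widehat G$ from Lemma \ref{lemma_A1} (which piles mass at the endpoints $v_l,v_h$) place zero mass on $(v_L,r)$, so $\int D\,d\widehat G=\int D\,dG=0$, while $\int\phi\,d\widehat G=\int\phi\,dG$ because $\phi$ is affine over the spread interval and the spread is mean-preserving. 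Hence $\int u\,d\widehat G-\int u\,dG=0$: the positivity of $H$ on $(v_\dagger,r]$ does not ``convert into a strict improvement'' through this deviation, and no bookkeeping will make it do so. (Structurally, the route is also self-defeating: if the conditions of Lemma \ref{lemma_phi} all held, $G$ would be a best response and no profitable deviation could exist; the content of the lemma you are proving is precisely that those conditions are mutually inconsistent with $G\in\text{MPC}(F)$ when $v_\dagger<v_L$.)

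The missing ingredient is quantitative and is what the paper's proof supplies. Because $G<F$ on $(v_\dagger,v_L]$, the mean-preserving-contraction identity $H(v_H,\beta)=0$ forces the slope of $G^{n-1}$ on the affine interval to satisfy $\widehat\beta>\beta^*$, where $\beta^*$ is the value from Lemma \ref{lemma_feas_0}/\eqref{eq_BET}. Evaluating the two rescaled affine pieces of the multiplier at the conditional mean $\widetilde\mu=E_F[v\mid v\in[v_L,v_H]]>r$, and bounding the lower piece by the tangent line to $F^{n-1}$ at $v_L$ (so that $\widehat\phi_l(\widetilde\mu)\leq G(v_L)^{n-1}-F(v_L)^{n-1}+\widetilde\eta$) while $\widehat\beta>\beta^*$ and $\eta\geq\widetilde\eta$ give $\widehat\phi_u(\widetilde\mu)>G(v_L)^{n-1}-F(v_L)^{n-1}+\widetilde\eta$, one finds that the multiplier must kink strictly upward at $v_L$, contradicting the affineness on $(v_\dagger,v_H)$ required by DM4/condition 4A. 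Your proposal never compares $\widehat\beta$ with $\beta^*$ or exploits the tangent-line/conditional-mean bound, and without some step of this kind the contradiction cannot be reached; a purely local or purely ``spread across the jump'' argument is neutralized by the very equilibrium conditions you invoke.
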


\begin{proof}
Consider an equilibrium with $v_{\dagger}<v_L$ and $G(v)\neq F(v)$ for some $v\in (v_{\dagger},v_L)$. Then, by Claims 6 and 8 of Proposition \ref{proposition_A1}, it follows that $G(v)<F(v)$ for all $v\in (v_{\dagger},v_L]$. By Lemma \ref{lemma_phi}, an equilibrium with $v_{\dagger}<v_L$, requires $(1-\alpha)\beta=\gamma\leq [F(v_{\dagger})^{n-1}]'$ to ensure that $\phi(\cdot)$ is affine on the entire interval $(v_{\dagger},v_L]$. We show that $G\in MPC(F)$ precludes such an affine structure as it requires $\phi(\cdot)$ to increase its slope at $v_L$. First, note that for $v_{\dagger}<v_L$, $H(v_H,\beta)$ can be represented as:

\begin{equation}
H(v_H,\beta)=\int_{v_{\dagger}}^{v_L} (F(v)-G(v))dv+(r-v_L)(F(v_L)-G(v_L))+H^*(v_H,\beta)=0,
\label{eq_H_vH_2}
\end{equation}

\noindent where $H^*(v_H,\beta)$ is defined in eq. (\ref{eq_H_star_vH}). Analogous to the proof of Lemma \ref{lemma_feas_0}, it can be shown that $H(v_H,\beta)$ is monotone decreasing in $\beta$. Furthermore, since $G(v)<F(v)$ for all $v\in (v_\dagger,v_L]$, $H(v_H,\beta)>H^*(v_H,\beta^*)=0$, where $\beta^*$ is defined in Lemma \ref{lemma_feas_0}. Therefore, $\widehat{\beta}$ that solves $H(v_H,\widehat{\beta})=0$ must satisfy $\widehat{\beta}>\beta^*$. 

Given $\phi(\cdot)$ defined in eq. (\ref{eq_PHI_0}), let $\widehat{\phi}(\cdot)$ denote the rescaled $\phi(\cdot)$ that satisfies $\widehat{\phi}(\cdot)=\left(\alpha\eta+(1-\alpha)\right)\phi(\cdot)$. Then, by Lemma \ref{lemma_phi} and eq. (\ref{alphatilde}), $\widehat{\phi}(\cdot)$ in the lower and upper affine regions is: 
\begin{eqnarray}
\widehat{\phi}_l(v)&=&F(v_{\dagger})^{n-1}+\gamma(v-v_{\dagger}),  \label{eq_phi_l}\\
\widehat{\phi}_u(v)&=&\alpha\eta+(1-\alpha)\left(G(v_L)^{n-1}+\beta(v-r)\right). \label{eq_phi_u}
\end{eqnarray}

\noindent Condition 2A) of Lemma \ref{lemma_phi} requires $\widehat{\phi}_l(v_L)=\widehat{\phi}_u(v_L)$. Since both functions are affine, to show that $\phi_u(\cdot)$ is steeper than $\phi_l(\cdot)$ above $v_L$, it is sufficient to show that the former is strictly larger than the latter for some $v>v_L$. Given $\widehat{\beta}>\beta^*$, for any $v>r$, 
\begin{equation*}
\widehat{\phi}_u(v)>\alpha\eta+(1-\alpha)[G(v_L)^{n-1}+\beta^*(v-r)].
\end{equation*}

\noindent Consider $\widetilde{\mu}=E_F[v|v\in [v_L,v_H]]>r$, where the inequality follows from $G\in MPC(F)$.\footnote{Substituting for $H^*(v_H,\beta)$ from eq. (\ref{eq_H_star_vH}) into eq. (\ref{eq_H_vH_2}) and accounting for eq. (\ref{eq_F_H}) obtains $\int_{v_L}^{v_H}(v-r)f(v)dv=\int_{v_{\dagger}}^{v_L} (F(v)-G(v))dv+(r-v_L)(F(v_L)-G(v_L))+\int_{r}^{v_H} (F(v_H)-G(v))dv>0$ since $F(v)>G(v)$ for $v\in(v_{\dagger},v_L]$ and $F(v_H)=G(v_H)$. This implies that $E_F[v|v\in [v_L,v_H]]>r$.} Substituting for $\beta^*=\frac{\widetilde{\eta}-F(v_L)^{n-1}}{\widetilde{\mu}-r}$ from Lemma \ref{lemma_feas_0} in the above inequality obtains:
\begin{equation}
\widehat{\phi}_u(\widetilde{\mu})>\alpha\eta+(1-\alpha)[G(v_L)^{n-1}+\widetilde{\eta}-F(v_L)^{n-1}]>G(v_L)^{n-1}-F(v_L)^{n-1}+\widetilde{\eta},
\label{phi_h_bound}
\end{equation}

\noindent where the last inequality follows from $\eta\geq \widetilde{\eta}$ and $F(v_L)>G(v_L)$. To compare the value of $\widehat{\phi}_u(\widetilde{\mu})$ to $\widehat{\phi}_l(\widetilde{\mu})$, note that by eq. (\ref{eq_G_0}), $\widehat{\phi}_l(v_L)=G(v_L)^{n-1}$, and thus for $v>v_L$, $\widehat{\phi}_l(v)=G^{n-1}(v_L)+\gamma(v-v_L)$. Furthermore, let $T_{v_L}(v)=F(v_L)^{n-1}+[F(v_L)^{n-1}]'(v-v_L)$ denote the tangent line at $F(v_L)^{n-1}$. Since $F(\cdot)^{n-1}$ is weakly convex, $T_{v_L}(\cdot)\leq F(\cdot)^{n-1}$. Furthermore, recall from Proposition \ref{proposition_A1} that $\gamma\leq [F(v_{\dagger})^{n-1}]'\leq [F(v_{L})^{n-1}]'$, where the last inequality follows from the weak convexity of $F(\cdot)^{n-1}$. Therefore, for $v>v_L$, $\widehat{\phi}_l(v)\leq G(v_L)^{n-1}-F(v_L)^{n-1}+T_{v_L}(v)$. Combining the above inequalities yields

\[
\widehat{\phi}_l(v)\leq G(v_L)^{n-1}-F(v_L)^{n-1}+T_{v_L}(v)\leq G(v_L)^{n-1}-F(v_L)^{n-1}+F(v)^{n-1}.
\]

\noindent Due to its affine shape, $E_F[\widehat{\phi}_l(v)|v\in [v_L,v_H]]=\widehat{\phi}_l(\widetilde{\mu})$, while by definition, $E_F[F(v)^{n-1}|v\in[v_L,v_H]]=\widetilde{\eta}$. Therefore, from the above inequality,

\begin{equation}
\widehat{\phi}_l(\widetilde{\mu})\leq G(v_L)^{n-1}-F(v_L)^{n-1}+\widetilde{\eta}.
\label{phi_l_bound}
\end{equation}

\noindent Comparing (\ref{phi_h_bound}) and (\ref{phi_l_bound}), it is immediate that $\widehat{\phi}_u(\widetilde{\mu})>\widehat{\phi}_l(\widetilde{\mu})$, implying that $\widehat{\phi}(v)$ is convex at $v_L$. This, in turn, contradicts the affine shape of $\phi(\cdot)$ on $(v_{\dagger},v_H)$, establishing that there is no symmetric equilibrium $G(\cdot)$ that satisfies $v_{\dagger}<v_L$.   
\end{proof}

\begin{lemma}[Properties of $\beta^*$]
$\beta^*$ given by Lemma \ref{lemma_feas_0} and eq. (\ref{eq_BET}) has the following properties:

\begin{itemize}

\item[1)] If $v_H<1$, $\beta^*<[F(v_H)^{n-1}]^{\prime}$ and $\lim_{v_L\to r} v_H=r$.

\item[2)] If $r<\mu$, $\beta^*<\infty$. Moreover, for $v_L=0$,
\begin{itemize}
\item[a)] there exists $\widehat{n}\geq 2$ such that for $n> \widehat{n}$, $v_H(n)<1$ and satisfies:

\begin{equation}
E_F[v|v<v_H(n)]=\frac{1}{n}v_H(n)+\frac{n-1}{n}r.
\label{eq_v_H_n}
\end{equation}

Moreover, $v_H(n)$ is strictly decreases in $n$ and $\lim_{n\to \infty} v_H\in (r,1)$.

\item[b)] $\beta_n^*(0,r)$ and $n\beta_n^*(0,r)$ decrease in $n$ and $\lim_{n\to \infty} \beta_n^*(0,r)=\lim_{n\to \infty} n\beta_n^*(0,r)=0$. 
\end{itemize}
\item[3)] If $r\geq \mu$, there is a unique $\underline{v}_L\geq 0$ that solves $E[v|v\geq \underline{v}_L]=r$. Moreover,  $\lim_{v_L\to \underline{v}_L^+} v_H=1$ and $\lim_{v_L\to \underline{v}_L^+} \beta^*=\infty$.

\item[4)] $\frac{d \beta^*}{dr}=\frac{\beta^*}{\widetilde{\mu}-r}>0$ and $\frac{d \beta^*}{dv_L}=-\frac{\beta^*(r-v_L)}{\widetilde{\mu}-r}\frac{f(v_L)}{F(v_H)-F(v_L)}-\frac{[F(v_L)^{n-1}]'}{\widetilde{\mu}-r}<0$.
\end{itemize}

\label{beta_properties}
\end{lemma}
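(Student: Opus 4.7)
The plan is to reduce the four parts to manipulations of the two implicit conditions identifying $(v_H,\beta^*)$ established in Lemma \ref{lemma_feas_0}: the contact equation $F(v_L)^{n-1}+\beta^*(v_H-r)=F(v_H)^{n-1}$ and the MPC-balance $H^*(v_H,\beta^*)=0$. For Part 1, when $v_H<1$ the value $v_H$ equals the larger root $v_2^D$ of the concave map $v\mapsto D(v,\beta^*)=F(v_L)^{n-1}+\beta^*(v-r)-F(v)^{n-1}$; since $D$ crosses zero from above at $v_2^D$, $\partial_v D(v_H,\beta^*)=\beta^*-[F(v_H)^{n-1}]^\prime<0$. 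The limit $\lim_{v_L\to r}v_H=r$ follows from MPC-balance: the deficit area $\int_{v_L}^{r}(F(v)-F(v_L))\,dv$ vanishes, forcing the area on $(r,v_H)$ to vanish as well; since by construction $G\neq F$ on the interior of $(r,v_H)$ whenever $v_H>r$, the interval itself must collapse.

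For Part 2, I first note that $r<\mu$ guarantees feasibility (Lemma \ref{feas}), so at any equilibrium candidate $\widetilde{\mu}>r$ and hence $\beta^*<\infty$. For $v_L=0$, the telescoping identity gives $\widetilde{\eta}=F(v_H)^{n-1}/n$; combined with the contact equation, this reduces $\beta^*$'s defining ratio to the displayed condition $\widetilde{\mu}(v_H)=(v_H+(n-1)r)/n$. Setting $g(v_H,n)=n\widetilde{\mu}(v_H)-v_H-(n-1)r$, I would check that $g(r,n)<0$ and $g(1,n)=n\mu-1-(n-1)r>0$ precisely when $n>\widehat{n}:=\max\{2,\lceil(1-r)/(\mu-r)\rceil\}$; the intermediate value theorem then yields an interior root $v_H(n)\in(r,1)$. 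Letting $n\to\infty$ in the displayed equation produces a limit $v_H^\infty$ solving $E_F[v|v<v_H^\infty]=r$, interior because $E_F[v|v<r]<r<\mu$.

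For the monotonicity in Part 2(b), the key observation is an elegant telescoping: implicit differentiation of $g(v_H(n),n)=0$ together with $\beta_n^*=F(v_H(n))^{n-1}/(v_H(n)-r)$ yields the clean identities $\frac{d}{dn}\log\beta_n^*=\log F(v_H(n))-\frac{1}{n}$ and $\frac{d}{dn}\log(n\beta_n^*)=\log F(v_H(n))$, both strictly negative since $F(v_H)<1$. Exponential decay then delivers $\beta_n^*\to 0$ and $n\beta_n^*\to 0$. For Part 3, the map $\varphi(v_L)=E_F[v|v\geq v_L]$ is strictly increasing from $\varphi(0)=\mu\leq r$ to $\varphi(1)=1>r$, identifying the unique $\underline{v}_L$. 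At $v_L\to\underline{v}_L^+$, the requirement $\widetilde{\mu}>r$ fails for any $v_H<1$ since $\widetilde{\mu}<\varphi(\underline{v}_L)=r$; hence $v_H\to 1$ and $\widetilde{\mu}-r\to 0^+$, while the numerator $\widetilde{\eta}-F(v_L)^{n-1}$ remains bounded below, so $\beta^*\to\infty$.

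For Part 4, I would implicitly differentiate $\beta^*=(\widetilde{\eta}-F(v_L)^{n-1})/(\widetilde{\mu}-r)$, treating $v_H$ as a function of $(v_L,r)$ through the contact equation. The two crucial identities are $\widetilde{\eta}-F(v_L)^{n-1}=\beta^*(\widetilde{\mu}-r)$ (by definition) and $F(v_H)^{n-1}-\widetilde{\eta}=\beta^*(v_H-\widetilde{\mu})$ (from the contact equation). Standard derivatives of conditional expectations give $\widetilde{\eta}^\prime(v_H)=f(v_H)(F(v_H)^{n-1}-\widetilde{\eta})/(F(v_H)-F(v_L))$ and an analogous formula for $\widetilde{\mu}^\prime(v_H)$, so their ratio equals $\beta^*$. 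This forces the indirect contributions through $v_H(r)$ in the log-derivative of $\beta^*$ to cancel, leaving only the explicit $1/(\widetilde{\mu}-r)$ term; this delivers the $r$-derivative. The $v_L$-derivative is parallel, but $\beta^*$ also depends on $v_L$ directly, which contributes two additional terms: one from $[F(v_L)^{n-1}]^\prime$ in the numerator and one from the moving conditioning lower bound in $\widetilde{\mu}$. The main obstacle is the careful bookkeeping in this second case, since several direct and indirect pieces must be tracked simultaneously; the two identities above are the lever that makes everything telescope into the stated formula.
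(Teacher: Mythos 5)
Most of your proposal is correct and, in substance, the paper's own argument: Part 1 is identical; Part 3 is an equivalent direct bound (the paper instead uses continuity of $H^*$ and $D$, but your squeeze $r<\widetilde{\mu}\leq E_F[v\mid v\geq v_L]\to r$ works, provided you phrase the first step as ``any $v_H$ bounded away from $1$''); and in Parts 2(b) and 4 your ``clean'' cancellations via $\widetilde{\eta}'(v_H)=\beta^*\widetilde{\mu}'(v_H)$ are exactly the paper's envelope condition $\partial\beta^*/\partial v_H=0$ written in logarithmic form — I checked that your $v_L$-bookkeeping in Part 4 does telescope to the stated formula, and that $\frac{d}{dn}\log(n\beta^*_n)=\log F(v_H(n))$ matches the paper's computation.

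There are, however, two genuine gaps in Part 2(a). First, the claim that $v_H(n)$ is \emph{strictly decreasing} in $n$ is part of the statement (and is used later, e.g.\ in the informativeness comparison of Proposition \ref{prop_large_n}), but nothing in your proposal proves it: your 2(b) trick is designed precisely to avoid computing $dv_H/dn$. It does follow from implicit differentiation of your $g(v_H,n)=n\,E_F[v\mid v<v_H]-v_H-(n-1)r=0$: at the root, $\partial g/\partial n=\widetilde{\mu}(v_H)-r>0$ and $\partial g/\partial v_H=[F(v_H)^{n-1}]'/\beta^*-1>0$ by Part 1, so $dv_H/dn<0$ — but you must carry this out; it is also what licenses ``letting $n\to\infty$ in the displayed equation'' (otherwise you need a subsequence argument together with strict monotonicity of $x\mapsto E_F[v\mid v<x]$). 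Second, your IVT step only shows that the \emph{necessary} condition $E_F[v\mid v<v_H]=\frac{1}{n}v_H+\frac{n-1}{n}r$ has a root in $(r,1)$ for $n>\widehat{n}$; since that equation was derived under the hypothesis $v_H<1$, the existence of a root does not by itself exclude the corner candidate $v_H=1$ delivered by Lemma \ref{lemma_feas_0}. You need to close the loop, either by verifying that the root together with $\beta=F(v_H)^{n-1}/(v_H-r)$ satisfies $H^*(v_H,\beta)=0$ and then invoking the uniqueness in Lemma \ref{lemma_feas_0}, or by the paper's shortcut: at $v_L=0$, $v_H=1$ forces $\beta^*=\frac{1}{n(\mu-r)}$, while $v_H=1$ requires $\beta^*\geq\frac{1}{1-r}$, which fails once $n>(1-r)/(\mu-r)$ — the same threshold as your $\widehat{n}$. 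Finally, a small omission in 2(b): the monotonicity claim also covers $n\leq\widehat{n}$, where $v_H=1$, $\beta^*_n=\frac{1}{n(\mu-r)}$ is decreasing and $n\beta^*_n$ is constant; your identities apply only on $n>\widehat{n}$.
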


\begin{proof}[Proof of Lemma \ref{beta_properties}] We establish the listed properties in sequence.

\noindent \textbf{\it 1)} By Lemma \ref{lemma_feas_0}, $v_H=v_2^D>v^m$ and since $D(v,\beta)$ given by eq. (\ref{eq_vH}) decreases in $v$ for $v>v^m$, $\beta<[F(v_H)^{n-1}]^{\prime}$. The second property, $\lim_{v_L\to r} v_H=r$, follows from $\lim_{v_H\to r} D(v_H,\beta|r,r)=0$ and $H^*(r,\beta|r,r)=0$ for $\beta<\infty$.   

\bigskip

\noindent \textbf{\it 2)} Recall from the proof of Lemma \ref{lemma_feas_0} that $H^*(v_H,\beta)$ decreases in $\beta$ and by eq. (\ref{eq_H_infty}) $H^*(v_H,\infty)=(1-F(v_L))(r-E_F[v|v>v_L])$. Therefore, for $r<\mu$, $H^*(v_H,\infty)<(1-F(v_L))(r-\mu)<0$ and $\beta^*(v_L,r)<\infty$. For $v_L=0$, eq. (\ref{eq_BET}) reduces to:
\begin{equation}
\beta^*_n(0,r)=\frac{F^{n}(v_H)}{n\int_{0}^{v_H}(v-r)f(v)dv}.
\label{eq_beta_thresh}
\end{equation}

\noindent Furthermore, if $v_H<1$, by eq. (\ref{eq_vH}), $\beta^*$ also satisfies:

\begin{equation}
\beta^*_n(0,r)=\frac{F^{n-1}(v_H)}{v_H-r}.
\label{eq_beta_thresh_2}
\end{equation}

\noindent Therefore, for $v_H<1$, equating (\ref{eq_beta_thresh}) and (\ref{eq_beta_thresh_2}) yields:

\begin{eqnarray}
\int_0^{v_H} (v-r)f(v)dv&=&\frac{F(v_H)}{n}(v_H-r)\Longleftrightarrow \label{eq_beta_equiv}\\
\int_{0}^{v_H} F(v) dv&=&\frac{n-1}{n} F(v_H)(v_H-r).\label{eq_beta_equiv_2}
\end{eqnarray}

\noindent For a), recall from the proof of Lemma \ref{lemma_feas_0} that $v_H=1\Longleftrightarrow \beta^*\geq\frac{1}{1-r}$ and by eq. (\ref{eq_beta_thresh}), 
\begin{equation}
\beta^*(0,r|v_H=1)=\frac{1}{n(\mu-r)}\geq \frac{1}{1-r}
\label{beta_0_1}
\end{equation}

\noindent Since $\beta^*(0,\underline{r}|v_H=1)$ strictly decreases in $n$ with $\lim_{n\to \infty} \beta^*(0,\underline{r}|v_H=1)=0$, there exists $\widehat{n}\geq 2$ such that $\beta^*(0,\underline{r}|v_H=1)<\frac{1}{1-\underline{r}}$ for $n>\widehat{n}$. This implies that $v_H(n)<1$ for $n> \widehat{n}$. Then for $n>\widehat{n}$, rearranging terms in eq. (\ref{eq_beta_equiv}) yields

\begin{eqnarray*}
\frac{\int_0^{v_H(n)} v f(v)dv}{F(v_H(n))}=\frac{1}{n} v_H(n)+\frac{n-1}{n}r
\Longleftrightarrow E_{F}[v|v<v_H(n)]&=&\frac{1}{n} v_H(n)+\frac{n-1}{n}r,
\end{eqnarray*}

\noindent establishing eq. (\ref{eq_v_H_n}). To show that $v_H(n)$ decreases in $n$ for $n\geq \widehat{n}$, we implicitly differentiate eq. (\ref{eq_beta_equiv_2}) to obtain:

\begin{eqnarray*}
&&\left(F(v_H)-\frac{n-1}{n}f(v_H)(v_H-r)-\frac{n-1}{n}F(v_H)\right)\frac{dv_H}{dn}-\frac{F(v_H)(v_H-r)}{n^2}=0\\
&&\Longleftrightarrow\frac{F(v_H)}{n}\left(1-\frac{(n-1)f(v_H)F(v_H)^{n-2}}{\beta}\right)\frac{dv_H}{dn}-\frac{F(v_H)(v_H-r)}{n^2}=0,
\end{eqnarray*}

\noindent where the second line substitutes for $(v_H-r)$ from eq. (\ref{eq_beta_thresh_2}). Note that $(n-1)f(v_H)F(v_H)^{n-2}=[F(v_H)^{n-1}]'$ and by $1)$, $\beta^*<[F(v_H)^{n-1}]'$. Therefore,

\begin{equation}
\frac{dv_H}{dn}=\frac{(v_H-r)}{n\left(1-\frac{[F(v_H)^{n-1}]'}{\beta^*}\right)}<0.
\label{eq_vH_n}
\end{equation}

\noindent To complete a), note that by eq. (\ref{eq_v_H_n}), $\lim_{n\to \infty} E_{F}[v|v<v_H(n)]=r<\mu$, implying $\lim_{n\to \infty} v_H(n)\in(r,1)$. 

To establish b), we first note that for $v_H(n)<1$, implicitly differentiating $H^*(v_H,\beta^*|v_L,r)$ given by eq. (\ref{eq_H_star_vH}) and accounting for $\frac{\partial H^*}{\partial v_H}=0$ from eq. (\ref{eq_H_par_vH}) yields

\begin{equation}
\frac{\partial \beta^*}{\partial v_H}=-\frac{\partial H^*/\partial v_H}{\partial H^*/\partial \beta}=0.
\label{eq_beta_par_vH}
\end{equation}

\noindent For $n\leq \widehat{n}$, it follows immediately from eq. (\ref{beta_0_1}) that $\beta^*_n(0,r|v_H=1)$ is strictly decreasing in $n$ and $n\beta^*_n(0,r|v_H=1)$ is constant. For $n>\widehat{n}$, $v_H(n)<1$ and from eqs. (\ref{eq_beta_thresh}) and (\ref{eq_beta_par_vH}),

\begin{eqnarray*}
\frac{d\left(n\beta_n^*(0,r)\right)}{dn}=\frac{\partial }{\partial n}\left(\frac{F^n(v_H)}{\int_{0}^{v_H}(v-r)f(v)dv}\right)+n\frac{\partial \beta_n^*(0,r)}{\partial v_H}\frac{d v_H}{d n}=\frac{F^{n}(v_H)\ln(F(v_H))}{\int_{0}^{v_H}(v-r)f(v)dv}<0,
\end{eqnarray*}

\noindent where the last inequality follows from $F(v_H(n))<1$ for $n<\widehat{n}$. Since $n\beta_n^*(0,r)$ decreases in $n$, it follows immediately that $\beta_n^*(0,r)$ decreases in $n$ as well.  
Finally, to establish that both $\beta_n^*(0,r)$ and $n\beta_n^*(0,r)$ converge to zero in the limit economy, note that by eq. (\ref{eq_beta_thresh_2}), 

\begin{equation}
\lim_{n\to \infty} n\beta_n^*(0,r)=\lim_{n\to \infty}\frac{nF^{n-1}(v_H(n))}{v_H(n)-r}
\end{equation}

\noindent From a), $\lim_{n\to\infty} v_H(n)\in (r,1)$, implying that $\lim_{n\to \infty} nF^{n-1}(v_H(n))=0$, while the dominator is strictly positive in the limit. Therefore, $\lim_{n\to \infty} n\beta_n^*(0,r)=\lim_{n\to \infty} \beta_n^*(0,r)=0$, completing the proof.

\bigskip

\noindent \textbf{\it 3)} For $r\geq \mu$, note that $E_F[v|v\geq \underline{v}_L]\in [\mu,1]$ strictly increases in $\underline{v}_L$, establishing the existence of a unique threshold. Then, from eq. (\ref{eq_H_infty}), $H^*(1,\infty|\underline{v}_L,r)=0$, which by the continuity of $D(\cdot)$ and $H^*(\cdot)$ in $v_L$ and $\beta$ implies $\lim_{v_L\to \underline{v}_L^+} v_H=1$ and $\lim_{v_L\to \underline{v}_L^+} \beta^*=\infty$. 

\bigskip

\noindent \textbf{\it 4)} Note that $\beta^*$ given by eq. (\ref{eq_BET}) can be rewritten as follows: 

\[
\beta^*=\frac{\int_{v_L}^{v_H} \left(F(v)^{n-1}-F(v_L)^{n-1}\right)f(v)dv}{\int_{v_L}^{v_H} (v-r)f(v)dv}.
\]

Differentiating w.r.t. $r$  yields:

\[
\frac{d \beta^*}{d r}=\frac{\int_{v_L}^{v_H} \left(F(v)^{n-1}-F(v_L)^{n-1}\right)f(v)dv}{\left(\int_{v_L}^{v_H} \left(v-r\right)f(v)dv\right)^2}(F(v_H)-F(v_L))=\frac{\beta^*}{\widetilde{\mu}-r}>0.
\]

To derive $\frac{d \beta^*}{d v_L}$, recall from eq. (\ref{eq_beta_par_vH}) that $\frac{\partial \beta^*}{\partial v_H}=0$ and thus $\frac{d \beta^*}{d v_L}=\frac{\partial \beta^*}{\partial v_L}$. Differentiating the above expression for $\beta^*$ w.r.t. $v_L$ yields:

\begin{eqnarray*}
\frac{d \beta^*}{d v_L}&=&-\frac{\int_{v_L}^{v_H} \left(F(v)^{n-1}-F(v_L)^{n-1}\right)f(v)dv}{\left(\int_{v_L}^{v_H} \left(v-r\right)f(v)dv\right)^2}(r-v_L)f(v_L)-\frac{[F(v_L)^{n-1}]'(F(v_H)-F(v_L))}{\int_{v_L}^{v_H} \left(v-r\right)f(v)dv}=\\
&=&-\frac{\beta^*(r-v_L)}{\widetilde{\mu}-r}\frac{f(v_L)}{F(v_H)-F(v_L)}-\frac{[F(v_L)^{n-1}]'}{\widetilde{\mu}-r}<0.
\end{eqnarray*}\end{proof}

\begin{proof}[Proof of Proposition \ref{prop_G_shape}]
The proof follows from Proposition \ref{proposition_A1} that establishes the general shape of $G(\cdot)$ and Lemma \ref{lemma_full_dis} that proves that for $v_L>0$, the equilibrium must feature full disclosure below $v_L$.
\end{proof}

\begin{proof}[Proof of Lemma \ref{feas}]
Lemma \ref{feas} follows from Lemma \ref{lemma_feas_0}. From the proof of Lemma \ref{lemma_feas_0}, $H^*(v,\underline{\beta})>0$ for all $v\in (r,\min\{v_H,\overline{v}\})$ and $H^*(v_H,\beta)$ is continuously decreasing in $\beta$. Therefore, $\beta$ satisfying $H^*(v_H,\beta)=0$ exists if and only if $\lim_{\beta\to \infty} H^*(v_H,\beta)=H^*(v_H,\infty)< 0$. By eq. (\ref{eq_G}), $\lim_{\beta\to \infty} G^*(v)=1$ and thus by eq. (\ref{eq_vH}), $\lim_{\beta\to \infty} v_H=1$. Therefore, 

\begin{eqnarray}
H^*(v_H,\infty)&=&\int_{v_L}^1 F(v) dv-F(v_L)(r-v_L)-(1-r)= r(1-F(v_L))-\int_{v_L}^1 vf(v)dv=\notag \\
&=&(1-F(v_L))(r-E_F[v|v\geq v_L]). \label{eq_H_infty}
\end{eqnarray}

\noindent From the above expression, unique solution $\beta^*$ exists if and only if $E[v|v\geq v_L]>r$. Lemma \ref{lemma_feas_0} establishes that $\beta^*$ satisfies eq. (\ref{eq_BET}).
\end{proof}

\begin{proof}[Proof of Lemma \ref{optdu1}]
The lemma follows from Lemmas \ref{lemma_phi} and \ref{lemma_full_dis}. By Lemma \ref{lemma_full_dis}, $v_L\leq v_{\dagger}$, and thus $\phi(\cdot)$ given by eq. (\ref{eq_PHI_0}) reduces to (\ref{eq_PHI}). Furthermore, $v_L\leq v_{\dagger}$ implies that 1A) and 2A) are the only relevant conditions. 

As discussed in the proof of Lemma \ref{lemma_phi}, condition 1A) is binding only when $v_L=0$ and, thus, it reduces to {\it (i)}. For condition 2A), note that the continuity of $\phi(\cdot)$ at $v_H$ follows from Proposition \ref{Gstruc}, which requires $G(v_H)=F(v_H)$, while the convexity at $v_H$ follows from Lemma \ref{beta_properties} 1). Therefore, 2A) reduces to {\it (ii)}.    
\end{proof}

\begin{proof}[Proof of Proposition \ref{prop_r_exog}]
To establish the existence and uniqueness of $G^*$, we show that for every $r$, there is a unique value $v_L^{eq}$ that satisfies the necessary and sufficient conditions of Lemma \ref{optdu1}. Analogous to Lemma \ref{lemma_full_dis}, let $\widehat{\phi}(\cdot)$ denote the rescaled value of $\phi(\cdot)$. For $v<v_L$, $\widehat{\phi}(v)=F(v)^{n-1}$ and for $v\in [v_L,v_H]$, $\widehat{\phi}(v)=\widehat{\phi}_u(v|\beta^*)$, where $\widehat{\phi}_u(\cdot)$ is given by eq. (\ref{eq_phi_u}). Furthermore, define $Z(v,v_L,r)=\widehat{\phi}_u(v|\beta^*)-F(v)^{n-1}$, where by eq. (\ref{eq_phi_u}),

\begin{equation}
Z(v,v_L,r)=\alpha\eta+(1-\alpha)F(v_L)^{n-1}+(1-\alpha)\beta^*(v-r)-F(v)^{n-1}.
\label{eq_continuity}
\end{equation}

\noindent By Lemma \ref{optdu1}, an equilibrium with $v_L=0$ exists if and only if $Z(0,0,r)=\frac{\alpha}{n}-(1-\alpha)\beta^*(0,r)r\geq 0$. Note that $Z(0,0,0)>0$ and $\frac{d Z(0,0,r)}{dr}=-(1-\alpha)\beta^*(0,r)-(1-\alpha)\frac{d \beta^*(0,r)}{d r}r<0$ since by Lemma \ref{beta_properties}, $\frac{d \beta^*(0,r)}{d r}>0$. Moreover, by the same lemma, $\underline{v}_L=0$ for $r=\mu$, implying that $\lim_{r\to \mu} Z(0,0,r)= \frac{\alpha}{n}-(1-\alpha)\mu\times\lim_{v_L\to \underline{v}_L^+} \beta^*=-\infty$. Therefore, there exists a unique value $\underline{r}\in (0,\mu)$ such that $Z(0,0,\underline{r})=0$. 

For $r\leq \underline{r}$, $Z(0,0,r)\geq 0$, establishing the existence of an equilibrium with $v_L^{eq}=0$ for $r\leq \underline{r}$. For $r>\underline{r}$, $Z(0,0,r)<0$, implying that $v_L^{eq}>0$ for $r>\underline{r}$.

By Lemma \ref{optdu1}, an equilibrium with $v_L^{eq}>0$ exists if and only if $Z(v_L^{eq},v_L^{eq},r)=0$ and $\widehat{\phi}(v,v_L,r)$ is convex at $v_L^{eq}$, i.e., $(1-\alpha)\beta^*\geq [F(v_L^{eq})^{n-1}]^{\prime}$. Note that the convexity of $F(\cdot)^{n-1}$ implies that $Z(v,v_L,r)$ is concave in $v$. Therefore, $Z(v,v_L,r)=0$ has at most two roots, with the lower root satisfying $\frac{\partial Z}{\partial v}=(1-\alpha)\beta^*-F(v)^{n-1}>0$ and the higher root satisfying $\frac{\partial Z}{\partial v}<0$. The convexity requirement implies that the lower root is the only equilibrium candidate. Next, we show the existence and uniqueness of the lower root for each $r$.    

Since $Z(0,0,r)<0$ for $r>\underline{r}$ and $Z(r,r,r)=\alpha \left(\eta(r)-F(r)^{n-1}\right)>0$, the continuity of $Z$ implies the existence of a fixed point $v_L^{eq}\in (0,r)$. To show that this solution corresponds to the lower root, we need to show that the slope of $Z(v,v_L^{eq},r)$ (weakly) increases at $v_L^{eq}$. 

Analogous to the proof of Lemma \ref{lemma_full_dis}, let $T_{v_L}(v)=F(v_L^{eq})^{n-1}+[F(v_L^{eq})^{n-1}]'(v-v_L^{eq})$ denote the tangent line at $F(v_L^{eq})^{n-1}$. Notice that $Z(v_L^{eq},v_L^{eq},r)=\widehat{\phi}_u(v_L^{eq}|\beta^*)-T(v_L^{eq})=0$. Therefore, to establish convexity at $v_L^{eq}$, it suffices to show that $\widehat{\phi}_u(v|\beta^*)\geq T_{v_L}(v)$ for some $v>v_L$. As in the proof of Lemma \ref{lemma_feas_0}, $\widetilde{\mu}>r$, since Bayes' plausibility requires $\widetilde{\mu}=E_F[v|v\in [v_L,v_H]]=E_G[v|v\in [v_L,v_H]]>r$. Evaluating $\widehat{\phi}_u(\cdot|\beta^*)$ and $T_{v_L}(\cdot)$ at $\widetilde{\mu}$ yields 

\[
\widehat{\phi}_u(\widetilde{\mu}|\beta^*)=\alpha\eta+(1-\alpha)\widetilde{\eta}\geq \widetilde{\eta},
\]

\[
E[T_{v_L}(v)|v\in [v_L,v_H]]=T_{v_L}(\widetilde{\mu})\leq E[F(v)^{n-1}|v\in[v_L,v_H]]=\widetilde{\eta},
\]

\noindent where the above inequality follows from the convexity of $F(\cdot)^{n-1}$. Therefore, $\widehat{\phi}_u(\widetilde{\mu}|\beta^*)\geq \widetilde{\eta}\geq T_{v_L}(\widetilde{\mu})$, establishing the convexity of $\phi(v)$ at $v_L^{eq}$. To establish the uniqueness of $v_L^{eq}$, it suffices to show that $Z(v_L^{eq},v_L^{eq},r)$ is increasing in $v_L$. Differentiating $Z(v_L^{eq},v_L^{eq},r)$ w.r.t. $v_L^{eq}$ yields:

\[
\frac{dZ(v_L^{eq},v_L^{eq},r)}{dv_L}=\alpha\eta'(v_L^{eq})+(1-\alpha)[F(v_L^{eq})^{n-1}]'-(1-\alpha)\frac{d\beta^*}{dv_L}(r-v_L^{eq})+(1-\alpha)\beta^*-[F(v_L^{eq})^{n-1}]'>0,
\]

\noindent where the inequality follows from the convexity of $F(\cdot)^{n-1}$, $\eta'(v_L)>0$, $\frac{d\beta^*}{dv_L}<0$ (by Lemma \ref{beta_properties}), and $(1-\alpha)\beta^*\geq [F(v_L^{eq})^{n-1}]'$. Therefore, for $r>\underline{r}$, $v_L^{eq}\in(0,r)$ is unique. Moreover, it is straightforward to verify that $Z(v_L^{eq},v_L^{eq},r)$ is decreasing in $r$ since by Lemma \ref{beta_properties} $\frac{d\beta^*}{dr}>0$. Therefore, implicitly differentiating $Z(v_L^{eq},v_L^{eq},r)=0$ yields $\frac{dv_L^{eq}}{dr}>0$. 

Lastly, we establish $G(\cdot)\to F(\cdot)$ as $r\to \{0,1\}$. Note that by eq. (\ref{eq_H_star_vH}), $\lim_{r\to 1} H^{*}(v_H,\beta|v_L,r)=\int_{v_L}^{1} F(v)dv-(1-v_L)F(v_L)>0$ for all $v_L<1$, establishing that $\lim_{r\to 1} v_L^{eq}=\lim_{r\to 1} v_H^{eq}=1$. This, in turn, implies $G(\cdot)\to F(\cdot)$ as $r\to 1$.

To show that $G(\cdot)\to F(\cdot)$ as $r\to 0$, note that $\lim_{r\to 0} v_L^{eq}=0$. Furthermore, by eq. (\ref{eq_G}), $\lim_{r\to 0} G^*(v)=(\beta v)^{\frac{1}{n-1}}$ for $v\leq v_H$ and by eq. (\ref{eq_vH}), $D(v,\beta)=\beta v-F(v)^{n-1}$. Clearly $D(0)=0$. If $F(\cdot)^{n-1}$ is strictly convex, $D(\cdot,\beta)$ is strictly concave. If another root $v_2>0$ exists, the concavity of $D(\cdot,\beta)$ implies that $D(v)>0\Longleftrightarrow G(v)>F(v)$ for all $v\in (0,v_2)$. However, by eq. (\ref{eq_H_star_vH}), it follows that $\lim_{r\to 0} H^{*}(v_2)=-\int_{0}^{v_2} [G(v)-F(v)] dv<0$, violating $G\in MPC(F)$. Therefore, for a strictly concave $F(\cdot)^{n-1}$, the equilibrium must satisfy $\lim_{r\to 0} v_H^{eq}(r)=0$. If $F(\cdot)^{n-1}$ is affine, it is immediate that $\beta^*=[F(\cdot)^{n-1}]'$ is the only possible equilibrium candidate whenever $v_L^{eq}=r=0$ since a strictly higher or lower slope will violate the MPC condition. This establishes that $G(\cdot)\to F(\cdot)$ as $r\to 0$.  
\end{proof}

\begin{proof}[Proof of Proposition \ref{prop_r_endog}]
First, we establish the existence of an equilibrium. Let $r^{fi}(v_L)$ denote the solution to eq. (\ref{eq_search2}). Implicit differentiation of eq. (\ref{eq_search2}) yields:

\begin{equation}
\frac{d r^{se}(v_L)}{dv_L}=(r^{se}-v_L)\frac{f(v_L)}{1-F(v_L)}.
\label{eq_r_fi_slope}
\end{equation}

\noindent Therefore, $r^{se}(v_L)$ increases in $v_L$ for $v_L<r^{fi}$ and decreases in $v_L$ for $v_L>r^{fi}$, where $r^{fi}$ solves $r^{se}(r^{fi})=r^{fi}$. Moreover, by eq. (\ref{eq_search2}), $r^{se}(0)=\mu-s$. 

An equilibrium with endogenous $r$ satisfies $r^{se}(v_L^{eq}(r))=r$, where $v_L^{eq}(r)$ is defined in Proposition \ref{prop_r_exog}. For $\underline{r}\geq r^{se}(0)=\mu-s$, $v_L^{eq}(\mu-s)=0$ and thus $r^{se}(v_L^{eq}(\mu-s))=\mu-s=r^*$ and $v_L^*=0$ constitutes an equilibrium.

For $\underline{r}<r^{se}(0)=\mu-s$, $v_L^{eq}(\mu-s)>0$ and $r^{se}(v_L^{eq}(\mu-s))>\mu-s$. Recall from Proposition \ref{prop_r_exog} that $v_L^{eq}(r)$ is strictly increasing in $r$ with $v_L^{eq}(r)<r$. Therefore, $r^{se}(v_L^{eq}(r^{fi}))<r^{fi}$. Since $r^{se}(v_L)$ and $v_L^{eq}(r)$ are continuous in $r$ for $r\in [\mu-s,r^{fi}]$, there exists a fixed point, $r^*\in (\mu-s,r^{fi})$ that satisfies $r^{se}(v_L^{eq}(r^*))=r^*$.

Next, we establish the uniqueness of $r^*$ by showing that for $r^*>\mu-s>\underline{r}$, the inverse function $r^{eq}(v_L)=[v_L^{eq}]^{-1}(r)$ is steeper than $r^{se}(v_L)$ at $v_L^*$, implying a unique intersection. 

Recall from the proof of Proposition \ref{prop_r_exog} that for $v_L^{eq}>0$, $Z(v_L^{eq},v_L^{eq},r)=0$, where $Z(\cdot)$ is defined by eq. (\ref{eq_continuity}). Moreover, substituting for $\eta=E_F[F(v)^{n-1}|v\geq v_L]$, the equilibrium condition $Z(v_L,v_L, r^{eq})=0$ can be rewritten as follows:

\begin{equation}
\alpha\frac{\int_{v_L}^1 \left(F(v)^{n-1}(v)-F(v_L)^{n-1}\right) f(v)dv}{1-F(v_L)}-(1-\alpha)\beta^*(r^{eq}-v_L)=0.
\label{eq_Z}
\end{equation}

\noindent Note that

\begin{eqnarray*}
 \frac{d}{d v_L}\frac{\int_{v_L}^1 \left(F(v)^{n-1}(v)-F(v_L)^{n-1}\right) f(v)dv}{1-F(v_L)}&=&\frac{\int_{v_L}^1 \left(F^{n-1}(v)-F^{n-1}(v_L)\right)f(v)dv}{(1-F(v_L))^2}f(v_L)-[F(v_L)^{n-1}]'=\\
 &=&\frac{1-\alpha}{\alpha}\beta^*(r^{eq}-v_L)\frac{f(v_L)}{1-F(v_L)}-[F(v_L)^{n-1}]'.
\end{eqnarray*}

\noindent Therefore, the total derivative of eq. (\ref{eq_Z}) reduces to:

\begin{eqnarray*}
(1-\alpha)\left(\beta^*+\frac{d \beta^*}{d r}(r^{eq}-v_L)\right)\frac{d r^{eq}}{d v_L}&=&(1-\alpha)\beta^*(r^{eq}-v_L)\frac{f(v_L)}{1-F(v_L)}-\alpha[F(v_L)^{n-1}]'+\\
&+&(1-\alpha)\beta^*-(1-\alpha)\frac{d \beta^*}{d v_L}(r^{eq}-v_L).
\label{eq_der_r_vL}
\end{eqnarray*}

\noindent Substituting for $\frac{d \beta^*}{d v_L}=-\frac{\beta^*(r^{eq}-v_L)}{\widetilde{\mu}-r^{eq}}\frac{f(v_L)}{F(v_H)-F(v_L)}-\frac{[F(v_L)^{n-1}]'}{\widetilde{\mu}-r^{eq}}$ and $\frac{d \beta^*}{d r}=\frac{\beta^*}{\widetilde{\mu}-r^{eq}}$ from Lemma \ref{beta_properties} and simplifying obtains:

\begin{eqnarray*}
\frac{d r^{eq}}{d v_L}=\frac{\widetilde{\mu}-r^{eq}}{\widetilde{\mu}-v_L}\left(\frac{(r^{eq}-v_L)f(v_L)}{1-F(v_L)}+\frac{(r^{eq}-v_L)^2f(v_L)}{(\widetilde{\mu}-r^{eq})(F(v_H)-F(v_L))}+1+\frac{[F(v_L)^{n-1}]'}{(1-\alpha)\beta^*}\left((1-\alpha)\frac{\widetilde{\mu}-v_L}{\widetilde{\mu}-r^{eq}}-1\right)\right).
\end{eqnarray*}

\noindent Note that $\frac{f(v_L)}{F(v_H)-F(v_L)}\geq \frac{f(v_L)}{1-F(v_L)}$. Therefore,

\begin{eqnarray}
\frac{d r^{eq}}{d v_L}&\geq&\frac{\widetilde{\mu}-r^{eq}}{\widetilde{\mu}-v_L}\left((r^{eq}-v_L)\frac{f(v_L)}{1-F(v_L)}\frac{\widetilde{\mu}-v_L}{\widetilde{\mu}-r^{eq}}+1+\frac{[F(v_L)^{n-1}]'}{(1-\alpha)\beta^*}\left((1-\alpha)\frac{\widetilde{\mu}-v_L}{\widetilde{\mu}-r^{eq}}-1\right)\right)= \notag \\
&=& (r^{eq}-v_L)\frac{f(v_L)}{1-F(v_L)}+\frac{\widetilde{\mu}-r^{eq}}{\widetilde{\mu}-v_L}\left(1-\frac{[F(v_L)^{n-1}]'}{(1-\alpha)\beta^*}\right)+\frac{[F(v_L)^{n-1}]'}{\beta^*}.
\label{eq_r_se_slope}
\end{eqnarray}

Comparing (\ref{eq_r_fi_slope}) and (\ref{eq_r_se_slope}), for $r^{se}=r^{eq}=r^*$, $\frac{d r^{eq}}{dv_L}>\frac{d r^{se}}{dv_L}$ since the convexity of $\phi(v)$ at $v_L$ ensures $\frac{[F(v_L)^{n-1}]'}{(1-\alpha)\beta^*}\leq 1$ and Bayes' plausibility implies $\widetilde{\mu}=E_F[v|v\in[v_L,v_H]]=E_G[v|v\in[v_L,v_H]]>r^{eq}$. This establishes the uniqueness of the equilibrium $r^*$.
\end{proof}

\begin{proof}[Proof of Lemma \ref{lemma_4}]
Given $G(\cdot)$, let $G_{(n)}(\cdot)$ denote the distribution of $\max\{v_1,...,v_n\}$. Then, accounting for the inexperienced consumer's optimal search strategy, $CS_i$ is given by:
\begin{eqnarray*}
CS_i&=&\sum_{i=0}^{n} \Pr(v<r^*)^{i-1}\left[\Pr(v\geq r^*)E_{G^*}[v|v\geq r^*]-s\right]+\Pr(v<r^*)^nE_{G^*_{(n)}}[v| v<r^*]
\end{eqnarray*}
\noindent The last term captures the event of visiting all firms when each offers $v<r^*$, while the first term accounts for the event of encountering a high value $v>r^*$ at some point in the search process. The first term takes into account that the search cost of visiting seller $i$ is paid only if all prior $i-1$ visits have yielded values below $r^*$. By eq. (\ref{eq_r_j}), $\Pr(v\geq r^*)E_{G^*}[v|v\geq r^*]-s=\Pr(v\geq r^*)r^*$. Substituting this term in the above equation and simplifying yields:

\begin{eqnarray*}
CS_i&=&\sum_{i=0}^{n} \Pr(v<r^*)^{i-1}\Pr(v\geq r^*)r^*+\Pr(v<r^*)^nE_{G^*_{(n)}}[v| v<r^*]=\\
&=& [1-\Pr(v<r^*)^n]r^*+\Pr(v<r^*)^nE_{G^*_{(n)}}[v| v<r^*]=E_{G^*_{(n)}}[\min\{v,r\}]=E_{G^*_{(n)}}[\tilde{v}].
\end{eqnarray*}

\noindent The savvy consumers' surplus follows immediately from the above expression by setting the reserve value of search to $r=1$. 
\end{proof}

\begin{proof}[Proof of Proposition \ref{prop_large_n}]
We establish the propositional statements in sequence. 

\noindent {\it 1) Competition and Disclosure:} Recall by Proposition \ref{prop_r_endog} that in the unique equilibrium $v_L^*=0$ if and only if $\underline{r}\geq \mu-s$. In what follows, we show that there exists a unique $\underline{n}\in [2,\infty)$ such that $\underline{r}\geq \mu-s$ if and only if $n\geq \overline{n}$. Recall from the proof of Proposition \ref{prop_r_exog} that $\underline{r}$ solves:

\begin{equation}
Z(0,0,\underline{r})=\frac{1}{n}\left(\alpha-(1-\alpha)n\beta_n^*(0,\underline{r})\underline{r}\right)=0,
\label{eq_r_thresh}
\end{equation}

From Lemma \ref{beta_properties}, $n\beta^*(0,\underline{r})$ decreases in $n$ and $\beta^*(0,\underline{r})$ increases in $\underline{r}$. Therefore, implicit differentiation of eq. (\ref{eq_r_thresh}) immediately implies that $\underline{r}$ is increasing in $n$. Moreover, by the same Lemma, $\lim_{n\to \infty} n\beta_n^*(0,r)=0$ for all $r<\mu$, which in turn implies that $\lim_{n\to \infty}\underline{r}=\mu$. Therefore, there exists a unique $\overline{n}\in [2,\infty)$ such that $\underline{r}\geq \mu-s$ if and only if $n\geq \overline{n}$. It follows immediately from Proposition \ref{prop_r_endog} that $v_L^*=0$ and $r^*=\mu-s$ for $n\geq \overline{n}$. 

\bigskip

\noindent {\it 2) Competition and Informativeness:} Let $n'>n>\underline{n}$. Then, by 1), $G_{n}(v)=G_{n'}(v)=0$ for $v\leq r^*=\mu-s$. Moreover, since $G^{n-1}(\cdot)$ is affine on $[r^*,v_H^*(n)]$, $G_{n}(\cdot)$ and $G_{n'}(\cdot)$ can intersect at most once. Such an intersection point $\widehat{v}_C$ must exist- otherwise one distribution first-order stochastically dominates the other, contradicting $G\in MPC(F)$ for all values of $n$. Additionally, by Lemma \ref{beta_properties}, $v_H^*(n)$ decreases in $n$, implying that $G_{n'}(v)$ must intersect $G_{n}(v)$ from above, as illustrated in Figure \ref{fig_large_n}. Therefore, $G_{n'}\in MPS(G_{n})$, and, thus, higher $n$ increases informativeness.

\bigskip

\noindent {\it 3) Paradox of Choice:} This statement follows immediately from the proof of 1) as inexperienced consumers search with positive probability if and only if $v_L^*>0$. 

\bigskip

\noindent {\it 4) The Rich get Richer:} Since $\max\{v_1,...,v_n\}$ is a convex function, and by Lemma \ref{lemma_4}, $CS_s=E_{G}[\max\{v_1,...,v_n\}]$, it follows immediately that $CS_s(n')>CS_s(n)$ for $n'>n\geq \underline{n}$ since $G_{n'}\in MPS(G_{n})$.

\bigskip

\noindent {\it 5) The Poor get Poorer:} For $n<\underline{n}$, $v_L^*(n)>0$, and $G_{(n)}(\cdot)$ first-order stochastically dominates $G(\cdot)$, implying $CS_i(n)=E_{G_{(n)}}[\widetilde{v}]>E_{G}[\widetilde{v}]=\mu-s$.\footnote{$E_{G}[\widetilde{v}]=\mu-s$ is established in the proof of Proposition \ref{com_stat_s}.} For $n'\geq\underline{n}$, $v_L^*(n')=0$ and thus $CS_i(n')=E_{G(n')}[\widetilde{v}]=r^*=\mu-s$. Therefore, $CS_i(n)>CS_i(n')$.
\end{proof}

\begin{proof}[Proof of Proposition \ref{com_stat_s}] We prove the propositional statement for small $s$ ($s<\underline{s}$) and large $s$ ($s>\overline{s}$) separately.

\noindent \textbf{Case 1:} Small $s$.  

By Proposition \ref{prop_r_endog}, there is a unique equilibrium that solves $r^{se}(v_L^{eq}(r^*))=r^*$. Note that $r^{fi}(\cdot)$ that solves $\int_{r^{fi}}^1 (v-r^{fi})dF=s$ must satisfy $\lim_{s\to 0} r^{fi}=1$. Moreover, in the proof of Proposition \ref{prop_r_exog}, we show that $\lim_{r\to 1} v_L^{eq}(r)=\lim_{r\to 1} v_H^{eq}(r)=1$. Therefore, $\lim_{s\to 0} r^{se}(v_L^{eq}(r^{fi}))=\lim_{s\to 0} r^{fi}=1$, establishing that $\lim_{s\to 0} r^*(s)=1$, and, thus, $\lim_{s\to 0} v_L^*(s)=\lim_{s\to 0} v_H^*(s)=\lim_{s\to 0} v_T^*(s)=1$. This proves that $G(\cdot)\to F(\cdot)$ as $s\to 0$. 

Next, we show that for small $s$, the equilibrium features no disclosure at the top (i.e., $v_H^*=1$). Recall by Lemma \ref{optdu1} that for $v_H^*<1$, the convexity of $\phi(\cdot)$ at $v_L^*$ and $v_H^*$ require 

\[
\frac{1}{1-\alpha} [F(v_L^*)^{n-1}]'\leq \beta^*\leq [F(v_H^*)^{n-1}]'.
\]

\noindent Since $\lim_{s\to 0} v_L^*(s)=\lim_{s\to 0} v_H^*(s)$, by the continuity of $v_L^*(\cdot)$ and $v_H^*(\cdot)$, the above condition is violated for small $s$. Therefore, there exists $\underline{s}>0$ such that $v_H^*(s)=1$ for $s\leq \underline{s}$. Moreover, for small $s$, $v_T^*(s)<1$. To see this, note from above that $\lim_{s\to 0} \beta^*(s)\geq \frac{[F(1)^{n-1}]'}{1-\alpha}$, and therefore, by the continuity of $\beta^*(\cdot)$, $\beta^*(s)>[F(1)^{n-1}]'$ for small $s$. Moreover, from the proof of Lemma \ref{lemma_feas_0}, the higher root $v_2^D$ of the concave function $D(\cdot,\beta^*)$ given by eq. (\ref{eq_vH}) satisfies $[F(v_2^D)^{n-1}]'>\beta^*(\cdot)$. Therefore, $v_2^D>1$ for small $s$ and since $D(v,\beta^*)>0$ for $v\in (v_1^D,v_2^D)$, it follows that $D(1,\beta^*)>1\Longrightarrow F(v_L^*)^{n-1}+\beta^*(1-r^*)>1$. The last inequality implies that $v_T^*(\cdot)=\overline{v}<1$ for small $s$.    

We next show that informativeness cannot decrease in $s$ for small $s$. To see this, note that $s$ shifts the curve $r^{se}(v_L)$ in the right panel of Figure \ref{fig_endog} downward while leaving $r^{eq}(v_L)$ unchanged. Therefore, it is immediate that $v_L^*(\cdot)$ and $r^*(\cdot)$ decrease with $s$. For any $s'<s<\underline{s}$, let $G_{s'}(\cdot)$ and $G_{s}(\cdot)$ denote the corresponding equilibrium distributions, and $H_{G}(z)=\int_{0}^{z} (G_{s'}(v)-G_s(v))dv$. Since $v_L^*(s')>v_L^*(s)$, $G_s(v)$ is flat on $[v_L^*(s),r^*(s)]$, while $G_{s'}(v)$ is strictly increasing on $(v_L^*(s),v_L^*(s'))$. Therefore, $H_G(r^*(s))=\int_{v_L^*(s)}^{r^*(s)} (G_{s'}(v)-G_s(v_L^*(s)))dv>0$, implying that $G_{s}\notin MPS(G_{s'})$.  

To show that informativeness increases for sufficiently small $s$, note that $v_T^*(\cdot)$ converges to $1$ from below since, as shown above, $v_T^*(s)<1$ for small $s$. Therefore, there exists $\tilde{s}\in (0,s)$ such that for any $s''<\tilde{s}$, $v_L^*(s)<v_L^*(s'')$, $r^*(s)<r^*(s'')$, and $v_T^*(s)<v_T^*(s'')$. Clearly, $G_{s''}(v)>G_{s}(v)$ for $v\in (v_L^*(s),r^*(s))$. Moreover, since $v_T^*(s)<v_T^*(s'')$, $G_{s''}(v_T^*(s))<G_{s}(v_T^*(s))=1$, implying that the two distributions intersect at least once on $(r^*(s),v_T^*(s))$.

To show that the intersection is unique, let $\widehat{v}_c$ denote the lowest crossing point. $G_{s}(\cdot)^{n-1}$ is affine at and above the crossing. If $\widehat{v}_c\geq r^*(s'')$, both distributions are affine at the crossing and thus cannot intersect more than once on $(r^*(s''),v_T^*(s))$. If $\widehat{v}_c<r^*(s'')$, $\widehat{v}_c$ is located on the convex or the flat region of $G_{s''}(v)$, but cannot cross each region more than once. This is obvious for the flat region. For the convex region where $G_{s''}(\cdot)$ coincides with $F(\cdot)$, a second crossing at some $\widehat{v}'_c$ implies $[G_{s}(\widehat{v}'_c)^{n-1}]'<[F(\widehat{v}'_c)^{n-1}]'$. This, in turn, implies that $G_{s}(\cdot)^{n-1}$ is strictly lower than $F(\cdot)^{n-1}$ to the right of $\widehat{v}'_c$, contradicting $G_s(v_T^*(s))^{n-1}=1>F(v_T^*(s))^{n-1}$. Therefore, there is a single crossing below $r^*(s'')$. This, in turn, implies $G_s(r^*(s''))>G_{s''}(r^*(s''))$, and since both $G_{s}(\cdot)^{n-1}$ and $G_{s''}(\cdot)^{n-1}$ are affine on $(r^*(s''),v_T^*(s))$ and $G_{s}(v_T^*(s))=1>G_{s''}(v_T^*(s))$, $G_s(\cdot)$ is strictly above $G_{s''}(\cdot)$ on $[r^*(s''),v_T^*]$. This establishes that $G_{s}(\cdot)$ and $G_{s''}(\cdot)$ have a unique intersection, and since they have the same mean, $G_{s''}\in MPS(G_{s})$.  

Analogous to the Proof of Proposition \ref{prop_large_n}, it is immediate that $CS_s(s'')>CS_s(s)$. To establish that $CS_i(s'')>CS_i(s)$, let $\widetilde{G}_{r}(\cdot)$ denote the distribution of $\widetilde{v}=\min\{v,r\}$:

\begin{equation*}
\widetilde{G}_{r}(v)=
\begin{cases}
\min\{F(v),F(v_L)\} & \text{for } v< r\\
1   & \text{for } v\geq r.
\end{cases}
\end{equation*}

\noindent First, note that 

\[
E_{\widetilde{G}_{r}}[v]=\int_0^{v_L} vdF+\int_{v_L}^1 rdF+\int_{v_L}^{1}vdF-\int_{v_L}^{1}vdF=\mu-\int_{v_L}^1(v-r)dF
\]

\noindent Therefore, for $r=r^*(s)$, by eq. (\ref{eq_search2}), $E_{\widetilde{G}_{r^*(s)}}[v]=E_{G}[\widetilde{v}]=\mu-s$. As established above, $v_L^*(s)<v_L^*(s'')$ and $r^*(s)<r^*(s'')$. Clearly, $E_{\widetilde{G}_{r^*(s)}}[v]=\mu-s<\mu-s''=E_{\widetilde{G}_{r^*(s'')}}[v]$. Consider a distribution

\begin{equation*}
\overline{G}(v)=
\begin{cases}
\min\{F(v),F(v_L^*(s''))\} & \text{for } v< \overline{r}\\
1   & \text{for } v\geq \overline{r}.
\end{cases}
\end{equation*}

\noindent Given $\overline{G}(\cdot)$, let $\overline{r}\in (r^*(s),r^*(s''))$ satisfy:

\begin{eqnarray*}
E_{\overline{G}}[v]&=&\mu-\int_{v_L^*(s'')}^1(v-\overline{r})dF=\mu-s''-(r^*(s'')-\overline{r})(1-F(v_L^*(s'')))=\mu-s\Longrightarrow \\
\Longrightarrow \overline{r}&=&r^*(s'')-\frac{s-s''}{1-F(v_L^*(s''))}.
\end{eqnarray*}

\noindent By construction, $\overline{G}\in MPS(\widetilde{G}_{r^*(s)})$. This, in turn, implies

\[
CS_i(s)=E_{\widetilde{G}_{r^*(s)}}[max\{v_1,...,v_n\}]<E_{\overline{G}}[max\{v_1,...,v_n\}].
\]

\noindent Furthermore, $\overline{G}(\cdot)$ coincides with $\widetilde{G}_{r^*(s'')}(\cdot)$ up until $\overline{r}$ and is weakly higher thereafter. Therefore, $\widetilde{G}_{r^*(s'')}(\cdot)$ first-order stochastically dominates $\overline{G}(\cdot)$, which implies

\[
E_{\overline{G}}[max\{v_1,...,v_n\}]<E_{\widetilde{G}_{r^*(s'')}}[max\{v_1,...,v_n\}]=CS_{i}(s'').
\]

\noindent Together, the two inequalities above establish $CS_{i}(s'')>CS_{i}(s)$.

\noindent \textbf{Case 2:} Large $s$

By Proposition \ref{prop_r_endog}, for $s\geq \mu-\underline{r}=\overline{s}$, $v_L^*=0$ and $r^*(s)=\mu-s$. Then, clearly, $\lim_{s\to \mu} r^*(s)=0$ and by Proposition \ref{prop_r_exog}, $G(\cdot)\to F(\cdot)$ as $s\to \mu$. 

To show that informativeness increases with $s$, consider first $v_H^*<1$. Analogous to eq. (\ref{eq_vH_n}), implicitly differentiating eq. (\ref{eq_beta_equiv_2}) and accounting for $r^*=\mu-s$ yields:

\[
\frac{d v_H^*}{ds}=\frac{\frac{n-1}{n}F(v_H^*)}{n\left(1-\frac{[F(v_H^*)^{n-1}]'}{\beta^*}\right)}<0.
\]

\noindent Therefore, $r^*(s)$ and $v_H^*(s)$ decrease in $s$. This implies that $G_{s'}(\cdot)$ and $G_{s}(\cdot)$ are single-crossing for any $s'>s>\overline{s}$ (see Figure \ref{fig_change_s}) and since they have the same mean, $G_{s'}\in MPS (G_{s})$.

Alternatively, if $v_H^*=1$, by Lemma \ref{lemma_feas_0} and (\ref{beta_0_1}), $v_T^*=(n-1)s+\mu$. Since $r^*(s)$ decreases in $s$, while $v_T^*(s)$ increases in $s$ and the distribution $G_{s}(\cdot)^{n-1}$ is affine on the entire support, it is immediate that an increase in $s$ results in a wider support and a clockwise rotation of the distribution, which implies that $G_{s'}\in MPS(G_{s})$ for $s'>s>\overline{s}$.     
 It follows immediately that $CS_s(s')>CS_s(s)$. In contrast, since $CS_i=\mu-s$ for $s>\overline{s}$, $CS_i$ is decreasing in s.
\end{proof}

{\centering
\bibliographystyle{chicago}
\bibliography{references}
}

\newpage
\section{Supplemental Appendix (For Online Publication)}\setcounter{page}{1}
\subsection{Additional Cost Heterogeneity}
In this section we prove Proposition \ref{prop_hetero}. 
\paragraph{Single-Cost Benchmark.} Consider a \textit{single-cost benchmark} in which the inexperienced consumer's cost is always $s_1$, which is the smallest cost in the support of $K(\cdot)$. The Proof of Proposition \ref{prop_large_n}, shows that for all $n$ sufficiently large, the unique equilibrium is 
\begin{equation*}
G(v|s_1,n)=
\begin{cases}
0&\text{ for } v\in[0,r_1)\\
(\beta^*(n))^{\frac{1}{n-1}}(v-r_1)^{\frac{1}{n-1}} & \text{for } v\in[r_1,v^*_H(n)]\\
F(v) & \text{for } v\in (v_H^*(n),1],
\end{cases}
\end{equation*}
where
\begin{align*}
\beta^*(n)=\frac{F(v^*_H(n))^{n-1}}{v^*_H(n)-r_1},\qquad r_1=\mu-s_1, \qquad\text{and}\qquad E_F[v|v<v_H^*(n)]=r_1.
\end{align*}
This equilibrium is supported by multiplier 
\begin{align}\label{Amult}
    \phi(v)\equiv
    \begin{cases}
\widetilde{\alpha}+(1-\widetilde{\alpha})\beta^*(n)(v-r_1)) & \text{if } v\leq v^*_H(n)\\
\widetilde{\alpha}+(1-\widetilde{\alpha})F(v)^{n-1} & \text{if } v\in (v_H^*(n),1]\end{cases}.
\end{align}
Furthermore, because the inexperienced consumer always stops at the first firm ($G(r_1|s_1,r_1)=0$), we have 
\begin{align*}
    \eta=\frac{1}{n},\qquad\qquad\qquad\widetilde{\alpha}=\frac{\alpha(\frac{1}{n})}{\alpha(\frac{1}{n})+(1-\alpha)}.
\end{align*}
The associated payoff function $u(\cdot|s_1,n)$ is zero below $r_1$ and coincides with $\phi(\cdot)$ above, i.e., 
\begin{align*}
    u(v|s_1,n)=
    \begin{cases}
    0& \text{if } v\in[0,r_1)\\
\widetilde{\alpha}+(1-\widetilde{\alpha})\beta^*(n)(v-r_1)) & \text{if } v\in[r, v^*_H(n)]\\
\widetilde{\alpha}+(1-\widetilde{\alpha})F(v)^{n-1} & \text{if } v\in (v_H^*(n),1]\end{cases}.
\end{align*}
The multiplier and payoff function of the single-cost benchmark are illustrated in the left panel of Figure \ref{fig_hetero}.

\paragraph{Introducing Cost Heterogeneity.} Consider a distribution of search costs for the inexperienced consumer, $K(\cdot)$, satisfying the model's assumptions. Suppose that firms continue to use disclosure strategy $G(\cdot|s_1,n)$ from the single-cost benchmark. We wish to derive the associated payoff function $u_K(\cdot)$ in the presence of cost heterogeneity for the inexperienced consumer.

\begin{claim}\label{Aa} If firms use disclosure strategy $G(\cdot|s_1,n)$, then the inexperienced consumer stops at the first visited firm for all realizations of the search cost in the support of $K(\cdot)$. Furthermore, the reservation value of an inexperienced consumer with search cost $s\in[s_1,s_k]$ is $r(s)=\mu-s$.
\end{claim}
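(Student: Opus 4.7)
The proof will rely on two facts that follow essentially from inspection of $G(\cdot \mid s_1, n)$. First, this distribution places no mass below $r_1 = \mu - s_1$: its support is contained in $[r_1, 1]$. Second, it is mean-preserving with respect to $F(\cdot)$, so $E_G[v] = \mu$. My plan is to use these two properties to solve the reservation-value equation \eqref{eq_r_j} in closed form for any cost $s \in [s_1, s_k]$, and then to observe that the resulting threshold lies weakly below the support of $G$, which immediately yields the stopping conclusion.

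The first step is to evaluate $\int_r^1 (v - r)\, dG$ under the conjectured disclosure $G(\cdot \mid s_1, n)$ for any candidate $r \le r_1$. Because $G$ has no mass below $r_1$, the integral collapses to
\begin{equation*}
\int_r^1 (v - r)\, dG = \int_{r_1}^1 (v - r)\, dG = E_G[v] - r = \mu - r.
\end{equation*}
Setting $\mu - r = s$ gives the candidate $r(s) = \mu - s$. For any $s \in [s_1, s_k]$, we have $s \ge s_1$, hence $r(s) = \mu - s \le \mu - s_1 = r_1$, which confirms that the computation above was valid and that the candidate lies in the correct region. A brief uniqueness argument (monotonicity of the left-hand side in $r$) rules out any solution with $r > r_1$, since even at $r = r_1$ the left-hand side equals $s_1 \le s$.

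The second step is immediate from the first. Under $G(\cdot \mid s_1, n)$ every realization $v$ drawn by a visiting consumer lies in the support, so $v \ge r_1$. Since $r(s) \le r_1$, every draw satisfies $v \ge r(s)$, and Pandora's rule prescribes that the consumer stop searching at the first visited firm. This holds for every realized search cost $s$ in the support of $K(\cdot)$, establishing both statements of the claim.

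There is no real obstacle in this argument: the delicate work was already performed in the proof of Proposition \ref{prop_large_n}, which pinned down the structure of $G(\cdot \mid s_1, n)$ and in particular its lack of mass below $r_1$. Everything here is a direct consequence of that structure together with Bayes-plausibility. The only point that deserves care in the write-up is to verify that $r(s) = \mu - s$ is a genuine solution (not just a candidate), but this follows from monotonicity of $\int_r^1(v-r)\, dG$ in $r$ and the boundary value at $r = r_1$.
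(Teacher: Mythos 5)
Your proof is correct and follows essentially the same route as the paper: both arguments rest on the facts that the support of $G(\cdot\mid s_1,n)$ lies in $[\mu-s_1,1]$ and that $E_G[v]=\mu$ by Bayes-plausibility, so the search equation collapses to $\mu-r=s$ and every realization exceeds $r(s)\le r_1$, forcing an immediate stop. The only cosmetic difference is that you solve the search equation directly and verify uniqueness via monotonicity of $\int_r^1(v-r)\,dG$ in $r$, while the paper first invokes monotonicity of the reservation value in the search cost; the substance is the same.
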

The claim follows from elementary properties of the reservation value. Note that reservation value is decreasing in search cost. Because $s_1$ is the minimum search cost in the support of $K(\cdot)$ the inexperienced consumer's search cost is weakly higher and the reservation value is weakly smaller than the reservation value of type-$s_1$, i.e. for any $s\in(s_1,s_k]$, we have $r(s)<r(s_1)=\mu-s_1$. Because the support of $G(\cdot|s_1,n)$ is $[\mu-s_1,1]$, for any realized value weakly exceeds $r(s)$ for all $s\in[s_1,s_k]$. Consequently, the inexperienced consumer stops at the first visited firm regardless of the realized search cost. Furthermore, because the entire support of $G(\cdot|s_1,n)$ is above $r(s)$ for $s\in[s_1,s_k]$, the search equation implies $r(s)=\mu-s$.

\begin{claim}\label{Ab}Suppose costs are distributed according to $K(\cdot)$ and firms use disclosure strategy $G(\cdot|s_1,n)$. The probability of visit by an inexperienced consumer is $\eta=1/n$, and  the probability that a consumer is inexperienced conditional on visit is 
\begin{align*}
\widetilde{\alpha}=\frac{\alpha(\frac{1}{n})}{\alpha(\frac{1}{n})+(1-\alpha)},
\end{align*}
both identical to the single-cost benchmark.
\end{claim}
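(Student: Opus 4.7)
The plan is to derive both equalities as essentially immediate corollaries of Claim \ref{Aa}. The substantive work—showing that every inexperienced consumer, regardless of her realized search cost, stops at the first firm she visits—has already been done. What remains is a short combinatorial observation for $\eta$ and a one-line application of Bayes' rule for $\widetilde{\alpha}$.

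First, I would argue that the probability of visit is $1/n$. Because all firms use the same disclosure strategy $G(\cdot|s_1,n)$, every inexperienced consumer (of every cost type) is indifferent across search orders and randomizes uniformly over the $n!$ sequences, exactly as in the main model's derivation of \eqref{eta}. By Claim \ref{Aa}, each such consumer stops at the first visited firm with probability one. Thus a fixed firm $i$ is visited by a given inexperienced consumer if and only if firm $i$ is drawn first in her random search order, which occurs with probability $1/n$. Crucially, this reasoning is valid pointwise in the realized cost $s \in [s_1, s_k]$, so integrating against $K(\cdot)$ gives $\eta = 1/n$. Equivalently, one can note that with $G(r(s)|s_1,n) = 0$ for every $s$ in the support, the formula $\eta = (1 - G(r)^n)/(n(1 - G(r))) = 1/n$ from \eqref{eta} applies uniformly across cost types.

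Second, with $\eta = 1/n$ in hand, I would substitute into the Bayes' rule formula \eqref{alphatilde} to obtain
\begin{equation*}
\widetilde{\alpha} \;=\; \frac{\alpha\eta}{\alpha\eta + (1-\alpha)} \;=\; \frac{\alpha/n}{\alpha/n + (1-\alpha)},
\end{equation*}
which is exactly the expression for the posterior belief in the single-cost benchmark.

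There is no real obstacle here once Claim \ref{Aa} is established; the only conceptual subtlety worth mentioning is that although cost heterogeneity could in principle change the firm's inference about the consumer's cost type, it does not change the firm's inference about the consumer being \emph{inexperienced}, because all inexperienced types behave identically (visit a single, uniformly chosen firm) when facing the candidate disclosure $G(\cdot|s_1,n)$. This uniformity in behavior is precisely what collapses the heterogeneous-cost environment to the same reduced-form $(\eta,\widetilde{\alpha})$ as the single-cost benchmark, which is what will ultimately let us recycle the multiplier $\phi(\cdot)$ from \eqref{Amult} in the subsequent optimality argument.
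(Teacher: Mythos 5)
Your proposal is correct and follows essentially the same route as the paper: invoke Claim \ref{Aa} to conclude every inexperienced type stops at the first (uniformly chosen) firm, giving $\eta=1/n$, and then apply the Bayes' rule formula \eqref{alphatilde} for $\widetilde{\alpha}$. The extra remark about uniform behavior across cost types is a harmless elaboration of the same argument.
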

The claim follows immediately from Claim \ref{Aa}. Because the inexperienced type stops at the first visited firm regardless of realized cost, the probability of visit is $1/n$. The posterior belief follows immediately from Bayes' rule.

\begin{claim}\label{Ac}Suppose costs are distributed according to $K(\cdot)$ and firms use disclosure strategy $G(\cdot|s_1,n)$. The probability of sale with realized posterior mean $v$ is 
\begin{align*}
    u_K(v)=
    \begin{cases}
    \widetilde{\alpha}(1-K(\mu-v))& \text{if } v\in[0,r_1)\\
\widetilde{\alpha}+(1-\widetilde{\alpha})\beta^*(n)(v-r_1) & \text{if } v\in[r, v^*_H(n)]\\
\widetilde{\alpha}+(1-\widetilde{\alpha})F(v)^{n-1} & \text{if } v\in (v_H^*(n),1]\end{cases}.
\end{align*}
\end{claim}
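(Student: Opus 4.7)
\textbf{Proof Proposal for Claim \ref{Ac}.}

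The plan is to decompose the probability of sale as in the main text,
\[
u_K(v) = \widetilde{\alpha}\,\Pr(\text{sale}\mid\text{visit}\cap v\cap\text{inexp}) + (1-\widetilde{\alpha})\,\Pr(\text{sale}\mid\text{visit}\cap v\cap\text{savvy}),
\]
using $\widetilde{\alpha}$ from Claim \ref{Ab}, and then compute each conditional sale probability separately, handling the cases $v\geq r_1$ and $v<r_1$.

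For the savvy segment, exactly as in the derivation of \eqref{eq_u}, a savvy consumer's visit conveys no information about rivals' values, and the firm sells iff it has the highest realized posterior among all firms. Thus $\Pr(\text{sale}\mid\text{visit}\cap v\cap\text{savvy}) = G(v\mid s_1,n)^{n-1}$, which is $0$ for $v<r_1$ (the support of $G(\cdot\mid s_1,n)$ lies above $r_1$), equals $\beta^*(n)(v-r_1)$ on $[r_1,v_H^*(n)]$, and equals $F(v)^{n-1}$ on $(v_H^*(n),1]$.

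For the inexperienced segment, I would use Claim \ref{Aa}: every inexperienced type, regardless of cost realization $s\in[s_1,s_k]$, has reservation value $r(s)=\mu-s\leq \mu-s_1=r_1$. Therefore, when $v\geq r_1$ the consumer stops at the first visited firm no matter what $s$ is, giving $\Pr(\text{sale}\mid\text{visit}\cap v\cap\text{inexp})=1$; combining with the savvy term yields the last two cases of $u_K$. When $v<r_1$, the consumer stops iff $v\geq r(s)=\mu-s$, i.e., iff $s\geq \mu-v$, which has probability $1-K(\mu-v)$ under the cost distribution $K$. The key subtlety is that since every cost type visits each firm in uniformly random order and stops at the first visit, the search order is independent of the cost realization; hence Bayes' rule does not update the cost distribution conditional on visit. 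Moreover, if the consumer does not stop at the current firm, she proceeds to other firms whose posteriors are drawn from $G(\cdot\mid s_1,n)$, supported strictly above $r_1>v$, so she never returns to this firm. Thus $\Pr(\text{sale}\mid\text{visit}\cap v\cap\text{inexp}) = 1-K(\mu-v)$, and combining with the (zero) savvy term gives $u_K(v)=\widetilde{\alpha}(1-K(\mu-v))$ on $[0,r_1)$.

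The computation is essentially bookkeeping; the only mildly delicate step is the Bayes argument verifying that conditional on visit, the distribution of costs among inexperienced visitors remains $K(\cdot)$, which follows cleanly from Claim \ref{Aa} (uniform search order across all cost types). I do not anticipate a serious obstacle here, and in particular the continuity of $u_K$ at $r_1$ (if $K$ is continuous at $s_1$, so that $K(s_1)=0$) provides a natural sanity check: $u_K(r_1^-)=\widetilde{\alpha}=u_K(r_1^+)$.
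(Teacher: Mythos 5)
Your proposal is correct and follows essentially the same route as the paper's proof: decompose $u_K$ by consumer type with the weight $\widetilde{\alpha}$ from Claim \ref{Ab}, take the savvy term to be $G(v\mid s_1,n)^{n-1}$, and for $v<r_1$ use Claim \ref{Aa} to get a sale probability of $1-K(\mu-v)$ among inexperienced visitors. Your observation that the cost distribution is not updated conditional on visit is just the paper's cancellation $\Pr(\text{sale}\cap\text{visit}\mid v\cap\text{inexp})/\Pr(\text{visit}\mid v\cap\text{inexp})=\eta(1-K(\mu-v))/\eta$ stated in different words, so there is no substantive difference.
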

The calculation is parallel to the one in the main text.  
\begin{align*}
    u_K(v)=\widetilde{\alpha}\Pr(\text{sale}\mid v\cap \text{visit}\cap\text{inexp})+(1-\widetilde{\alpha})\Pr(\text{sale}\mid v\cap \text{visit}\cap\text{savvy}).
\end{align*}
For $v\geq r_1$, we have 
\begin{align*}
    &\Pr(\text{sale}\mid v\cap \text{visit}\cap\text{inexp})=1\\
    &\Pr(\text{sale}\mid v\cap \text{visit}\cap\text{savvy})=G(v|s_1,n)^{n-1}.
\end{align*}
Therefore, for $v\geq r_1$, we have $u_K(v)=\widetilde{\alpha}+(1-\widetilde{\alpha})G(v|s_1,n)^{n-1}=u(v|s_1,n)$. Next, consider $v<r_1$. Note that 
\begin{align*}
    \Pr(\text{sale}\mid v\cap \text{visit}\cap\text{savvy})=G(v|s_1,n)^{n-1}=0.
\end{align*}
Furthermore, 
\begin{align*}
\Pr(\text{sale}\mid\text{visit}\cap v\cap\text{inexp})=\frac{\Pr(\text{sale}\cap\text{visit}\mid v\cap\text{inexp})}{\Pr(\text{visit}\mid v\cap\text{inexp})}.
\end{align*} 
Note that the firm can only sell to an inexperienced consumer if it is the first visited firm, and if the consumer's reservation value $r(s)<v$. Therefore $\Pr(\text{sale}\cap\text{visit}\mid v\cap\text{inexp})=\eta\Pr(r(s)<v)=\eta (1-K(\mu-v))$. Noting that $\Pr(\text{visit}\mid v\cap\text{inexp})=\eta$, we have the expression in the claim. 

Note that graphically, the payoff function $u_K(\cdot)$ is a horizontal reflection of $K(\cdot)$ around $v=\mu$, followed by a vertical reflection around $y=1/2$, and then a scaling down the $y$ axis to $\widetilde{\alpha}$. The right panel of Figure \ref{fig_hetero} illustrates.

\begin{figure}
\begin{minipage}{0.5\textwidth}
\begin{tikzpicture}[scale=0.6]

\pgfmathsetmacro{\al}{0.1}
\pgfmathsetmacro{\et}{0.4}
\pgfmathsetmacro{\vL}{0}
\pgfmathsetmacro{\r}{5}
\pgfmathsetmacro{\vH}{7.6}
\pgfmathsetmacro{\z}{1}
\pgfmathsetmacro{\n}{2}
\draw[line width=0.5pt] (0,0) -- (0,10.5); 
\draw[line width=0.5pt] (0,0) -- (10.5,0); 
\draw[dotted, line width=0.5pt] (0,10) -- (10,10) -- (10,0);

\fill (\r,0) node[below] {\footnotesize{$r=\mu-s_1$}};
\fill (\vH,0) node[below] {\footnotesize{$v_H$}};

\draw[dotted, line width=0.5pt] (\vH,0)--
  (\vH,{10*\al + 10*(1-\al)*(0.1*\vH)^\n});

 \draw[line width=1pt, teal] (\vL,0.1*\vL^\n)--(\r,0.1*\vL^\n);
 \draw[line width=1pt, teal, dashed] 
   (\r,0.1*\vL^\n) --
   (\r,{0.1*(\al/\et+1-\al)*\vL^\n+\z+(10*\al+10*(1-\al)*(0.1*\vH)^\n -\z- 0.1*(\al/\et+1-\al)*\vL^\n)/(\vH-\vL)*(\r-\vL)});

\draw[decorate, decoration={brace,amplitude=10pt}]
  (\r,{0.1*(\al/\et+1-\al)*\vL^\n+\z+(10*\al+10*(1-\al)*(0.1*\vH)^\n -\z - 0.1*(\al/\et+1-\al)*\vL^2)/(\vH-\vL)*(\r-\vL)})
  -- (\r,0.1*\vL^\n);

\draw[violet, line width=1pt, domain=\vL:\vH] 
  plot (\x,{0.1*(\al/\et+1-\al)*\vL^\n+\z + (10*\al + 10*(1-\al)*(0.1*\vH)^\n-\z - (0.1*(\al/\et+1-\al)*\vL^\n))/(\vH-\vL)*(\x-\vL)});
\draw[violet, line width=1pt, domain=\vH:10] 
  plot (\x,{10*\al + 10*(1-\al)*(0.1*\x)^\n});

\fill (0,0) node[left] {\footnotesize{$0$}};
\fill (0,10) node[left] {\footnotesize{$1$}};
\fill (\r+0.5,2.4) node[right]{\footnotesize{$\widetilde{\alpha}$}};
 \fill (2,0) node[above] {\footnotesize{$u(\cdot)$}};
\fill (8,8.2) node[above] {\footnotesize{$\phi(\cdot)$}};
\fill (6,0) node[above] {\footnotesize{$I_{a}$}};
\node[rotate=35] at (3,3.5) {\scriptsize{slope $(1-\widetilde{\alpha})\beta^*(n)$}};

\draw[green, line width=3pt, opacity=0.5] (\r,0)--(\vH,0);

\end{tikzpicture}
\end{minipage}
\begin{minipage}{0.5\textwidth}
\begin{tikzpicture}[scale=0.6]

\pgfmathsetmacro{\al}{0.1}
\pgfmathsetmacro{\et}{0.4}
\pgfmathsetmacro{\vL}{0}
\pgfmathsetmacro{\r}{5}
\pgfmathsetmacro{\vH}{7.6}
\pgfmathsetmacro{\z}{1}
\pgfmathsetmacro{\n}{2}
\pgfmathsetmacro{\alt}{0.1*(\al/\et+1-\al)*\vL^\n+\z+(10*\al+10*(1-\al)*(0.1*\vH)^\n -\z- 0.1*(\al/\et+1-\al)*\vL^\n)/(\vH-\vL)*(\r-\vL)}
\pgfmathsetmacro{\ra}{0.75*\r}
\pgfmathsetmacro{\rb}{0.47*\r}
\pgfmathsetmacro{\rc}{0.25*\r}
\pgfmathsetmacro{\ha}{0.25*\alt}
\pgfmathsetmacro{\hb}{0.2*\alt}
\pgfmathsetmacro{\hc}{0.15*\alt}

\draw[line width=0.5pt] (0,0) -- (0,10.5); 
\draw[line width=0.5pt] (0,0) -- (10.5,0); 
\draw[dotted, line width=0.5pt] (0,10) -- (10,10) -- (10,0);

\fill (\r,0) node[below] {\footnotesize{$r_1$}};
\fill (\vH,0) node[below] {\footnotesize{$v_H$}};

\draw[dotted, line width=0.5pt] (\vH,0)--
  (\vH,{10*\al + 10*(1-\al)*(0.1*\vH)^\n});

 \draw[line width=1pt, teal] (0,0)--(\rc,0);
 \draw[line width=1pt, teal] (\rc,\hc)--(\rb,\hc);
 \draw[line width=1pt, teal] (\rb,\hc+\hb)--(\ra,\hc+\hb);
 \draw[line width=1pt, teal] (\ra,\hc+\hb+\ha)--(\r,\hc+\hb+\ha);
 
 \draw[line width=1pt, teal, dashed] 
   (\r,\ha+\hb+\hc) --
   (\r,\alt);
\draw[line width=0.5pt, black, dotted] 
   (\r,\ha+\hb+\hc) --(\r,0);

   \draw[line width=1pt, teal, dashed] 
   (\ra,\ha+\hb+\hc) --
   (\ra,\hb+\hc);
\draw[line width=0.5pt, black, dotted] 
   (\ra,\hb+\hc) --(\ra,0);
   \fill (\ra,0) node[below]{\footnotesize{$r_2$}};

    \draw[line width=1pt, teal, dashed] 
   (\rb,\hc+\hb) --
   (\rb,\hc);
\draw[line width=0.5pt, black, dotted] 
   (\rb,\hc) --(\rb,0);
   \fill (\rb,0) node[below]{\footnotesize{$r_3$}};
 
\draw[line width=1pt, teal, dashed] 
   (\rc,\hc) --
   (\rc,0);
   \fill (\rc,0) node[below]{\footnotesize{$r_4$}};

\draw[line width=0.5pt, red] 
   (\rc,\hc) --
   (\r,\alt);

\node[rotate=45] at (\rc+0.5,\hc+0.8) {\tiny{slope $b(v)$}};
   
\draw[decorate, decoration={brace,amplitude=10pt}]
  (\r,\alt)
  -- (\r,0);

  \draw[decorate, decoration={brace,amplitude=5pt}]
  (\ra,\ha+\hb+\hc) --
   (\ra,\hb+\hc) node[midway,xshift=12pt] {\scriptsize{$\widetilde{\alpha}p_2$}};;

\draw[decorate, decoration={brace,amplitude=5pt}]
  (\rb,\hb+\hc) --
   (\rb,\hc) node[midway,xshift=12pt] {\scriptsize{$\widetilde{\alpha}p_3$}};;

\draw[decorate, decoration={brace,amplitude=5pt}]
  (\rc,\hc) --
   (\rc,0) node[midway,xshift=12pt] {\scriptsize{$\widetilde{\alpha}p_4$}};;

\draw[violet, line width=1pt, domain=\vL:\vH] 
  plot (\x,{0.1*(\al/\et+1-\al)*\vL^\n+\z + (10*\al + 10*(1-\al)*(0.1*\vH)^\n-\z - (0.1*(\al/\et+1-\al)*\vL^\n))/(\vH-\vL)*(\x-\vL)});
\draw[violet, line width=1pt, domain=\vH:10] 
  plot (\x,{10*\al + 10*(1-\al)*(0.1*\x)^\n});

\fill (0,0) node[left] {\footnotesize{$0$}};
\fill (0,10) node[left] {\footnotesize{$1$}};
\fill (\r+0.5,\alt/2) node[right]{\footnotesize{$\widetilde{\alpha}$}};
 
\fill (8,8.2) node[above] {\footnotesize{$\phi(\cdot)$}};
\fill (6,0) node[above] {\footnotesize{$I_{a}$}};
\node[rotate=35] at (3,3.5) {\scriptsize{slope $(1-\widetilde{\alpha})\beta^*(n)$}};

\draw[green, line width=3pt, opacity=0.5] (\r,0)--(\vH,0);

\end{tikzpicture}
\end{minipage}

\caption{\label{fig_hetero} 
\textit{Additional Cost Heterogeneity.}
Left panel: A single inexperienced type $s_1$.
Right panel: Multiple inexperienced types.
Multiplier in violet, payoff in teal with $p_i=\Pr(s=s_i)$. The red line interpolates $(v,u_K(v))$ and $(r_1,u_K(r_1))$. We denote its slope $b(v)$.}
\end{figure}
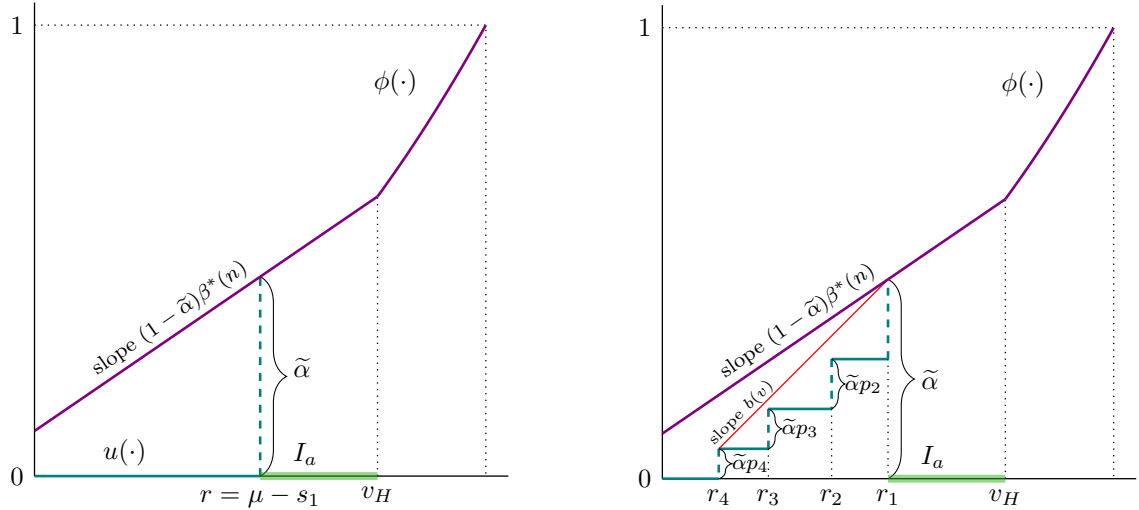

\paragraph{Equilibrium Construction.}
\begin{claim}\label{Ad} $G(\cdot\mid s_1,n)$ is an equilibrium in the model with cost heterogeneity if the associated multiplier $\phi(\cdot)$ in \eqref{Amult} lies above $u_K(\cdot)$ on $[0,r_1]$, i.e., $\phi(v)\geq u_K(v)$ for all $v\in[0,r_1]$.
\end{claim}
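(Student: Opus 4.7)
The plan is to apply the Dworczak--Martini optimality criterion (Theorem \ref{DMO}) directly, with the \emph{new} payoff function $u_K(\cdot)$ from Claim \ref{Ac}, and to show that the \emph{same} multiplier $\phi(\cdot)$ in \eqref{Amult}---the one that supports $G(\cdot\mid s_1,n)$ in the single-cost benchmark---continues to support $G(\cdot\mid s_1,n)$ against $u_K(\cdot)$. The key observation driving the argument is that $u_K$ and $u(\cdot\mid s_1,n)$ \emph{differ only on} $[0,r_1)$. On $[r_1,1]$, both payoffs reduce to $\widetilde{\alpha}+(1-\widetilde{\alpha})G(v\mid s_1,n)^{n-1}$ (compare Claim \ref{Ac} with the expression for $u(\cdot\mid s_1,n)$), because the inexperienced consumer stops at the first firm for every cost realization in the support of $K(\cdot)$ (Claim \ref{Aa}) and because $\widetilde{\alpha}$ and $\eta$ are unchanged (Claim \ref{Ab}).

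I would then verify (DM1)--(DM4) in turn. Conditions (DM1) (continuity/convexity of $\phi$) and (DM4) ($\int \phi\,dG=\int \phi\,dF$) do not reference the payoff at all, and $\phi$ is literally the same function as in the single-cost benchmark; these conditions therefore inherit from the benchmark verification in the proof of Proposition \ref{prop_large_n}. For (DM3), the support of $G(\cdot\mid s_1,n)$ lies in $[r_1,1]$, where $u_K(v)=u(v\mid s_1,n)=\phi(v)$, so the condition again reduces to its benchmark counterpart. The only condition that can fail when we swap in $u_K$ is (DM2): we need $\phi(v)\geq u_K(v)$ for all $v\in[0,1]$. On $[r_1,1]$ this is automatic since $u_K=u(\cdot\mid s_1,n)\leq \phi$ already holds in the benchmark; on $[0,r_1]$ it is precisely the hypothesis of the claim. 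Having established (DM1)--(DM4) under $u_K$, Theorem \ref{DMO} yields that $G(\cdot\mid s_1,n)$ is a best response to itself given the belief $\widetilde{\alpha}$ and visit probability $\eta$ derived from $G(\cdot\mid s_1,n)$, which together with consistency of search and beliefs (Claims \ref{Aa}--\ref{Ab}) is exactly the definition of equilibrium.

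There is essentially no hard step in this argument, because the structural work (construction of $\phi$, verification of convexity, Bayes-plausibility of $G(\cdot\mid s_1,n)$) has already been done for the single-cost benchmark in Proposition \ref{prop_large_n}; cost heterogeneity only modifies the payoff on the region $[0,r_1)$ where $G(\cdot\mid s_1,n)$ puts no mass. The \textbf{one place that deserves care} is ensuring that passing from $u$ to $u_K$ really leaves (DM1), (DM3), and (DM4) untouched---i.e., that the \emph{same} $\phi$ from the benchmark is available. This is immediate once one notes that $\phi$ in \eqref{Amult} is defined without any reference to $u$ below $r_1$, and that its agreement with the payoff on $[r_1,1]$ persists because $u_K(\cdot)=u(\cdot\mid s_1,n)$ on $[r_1,1]$. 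Thus the content of Proposition \ref{prop_hetero} in the subsequent analysis is really to show that the sufficient condition $\phi\geq u_K$ on $[0,r_1]$ holds for all $n$ large enough, which is a separate exercise (comparing the multiplier's polynomial-rate behavior to the bounded function $\widetilde{\alpha}(1-K(\mu-v))$).
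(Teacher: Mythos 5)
Your proposal is correct and follows essentially the same route as the paper: the paper's own (very terse) argument likewise observes that cost heterogeneity changes the firm's objective only on $[0,r_1]$, so the benchmark multiplier $\phi$ from \eqref{Amult} can fail only condition (DM2), which is exactly the hypothesis of the claim. Your more explicit verification of (DM1), (DM3), and (DM4), and of the consistency of beliefs and search via Claims \ref{Aa}--\ref{Ab}, simply fills in the details the paper leaves implicit.
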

The only difference between the problem characterizing the firm's best response in the single-cost benchmark and the model with heterogeneous cost is the payoff function on interval $[0,r_1]$. By implication, the only optimality condition that $\phi(\cdot)$ might fail to satisfy is (DM-2),  which requires $\phi(v)\geq u_K(v)$ for $v\in[0,r_1)$.

We define two interrelated objects.
\begin{align*}
    b(v)\equiv \frac{u_K(r_1)-u_K(v)}{r_1-v}=\widetilde{\alpha}\frac{K(\mu-v)}{r_1-v}.
\end{align*}
Note that $b(v)$ is the slope of the line interpolating $(v,u_K(v))$ and $(r_1,u_K(r_1))$. Recall that the multiplier is linear on $[0,r_1]$. Clearly, if the slope of the multiplier $\phi(\cdot)$ is equal to $b(v)$, then the multiplier crosses $u_K(\cdot)$ at $v$ (consult Figure \ref{fig_hetero}). Furthermore, if the slope of the multiplier is smaller than $b(v)$, then the multiplier lies strictly above the payoff at $v$, i.e., 
\begin{align*}
(1-\widetilde{\alpha})\beta^*(n)< b(v)\Rightarrow \phi(v)>u_K(v)
\end{align*}
Therefore, a sufficient condition for $\phi(v)>u_K(v)$ for all $v\in[0,r_1)$ is 
\begin{align}\label{Aineq}
    &(1-\widetilde{\alpha})\beta^*(n)<\min_{v\in[0,r_1)}b(v)\iff
    (1-\widetilde{\alpha})\beta^*(n)<\widetilde{\alpha}\min_{v\in[0,r_1)}\frac{K(\mu-v)}{r_1-v}.
\end{align}
\begin{claim}\label{Ae} Given the assumptions on $K(\cdot)$,
\begin{align*}
b^*\equiv \inf_{v\in[0,r_1)}\frac{K(\mu-v)}{r_1-v}>0.
\end{align*}
\end{claim}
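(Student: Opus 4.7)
The plan is to reduce the claim to a one-variable problem via the substitution $t = \mu - v$. As $v$ ranges over $[0, r_1)$ with $r_1 = \mu - s_1$, the variable $t$ ranges over $(s_1, \mu]$, and the expression becomes
\[
b^* = \inf_{t \in (s_1,\mu]} \frac{K(t)}{t - s_1}.
\]
I will then handle the discrete and continuous cases of $K(\cdot)$ separately, since each uses a different feature of the assumptions.

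The discrete case is immediate. Because $s_1$ is in the support of the discrete distribution $K(\cdot)$, it carries a strictly positive atom $p_1 > 0$. By right-continuity, $K(t) \geq p_1$ for every $t \geq s_1$, so in particular for every $t \in (s_1,\mu]$. Combining this with the trivial bound $t - s_1 \leq \mu - s_1$ gives the uniform lower bound $K(t)/(t - s_1) \geq p_1/(\mu - s_1) > 0$, from which $b^* > 0$ follows.

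The continuous case is the more delicate one. Define $g(t) \equiv K(t)/(t - s_1)$ on $(s_1, \mu]$. Because the support of $K(\cdot)$ is $[s_1, s_k]$ and $K$ is continuous, $K(t) > 0$ strictly for $t > s_1$, so $g$ is continuous and strictly positive on $(s_1,\mu]$. The only issue is the indeterminate $0/0$ behavior as $t \downarrow s_1$. Here I invoke assumption (ii): because $K(s_1) = 0$ and $K$ has right-derivative $k(s_1) > 0$ at $s_1$, the limit $\lim_{t \downarrow s_1} g(t) = k(s_1) > 0$ exists and is strictly positive. Extending $g$ to $[s_1,\mu]$ by this limit yields a continuous, strictly positive function on a compact interval, which therefore attains a strictly positive minimum. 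This minimum is a lower bound for $b^*$, so $b^* > 0$.

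The main step I anticipate needing care is the continuous case, specifically verifying that $g$ admits a continuous positive extension to $[s_1,\mu]$; everything else is routine. The assumption $s_1 > 0$ is not used for strict positivity of $b^*$ itself but is needed in Claims \ref{Aa}--\ref{Ac} upstream to ensure the reservation values $r(s) = \mu - s$ lie strictly below the support of $G(\cdot\mid s_1,n)$. Given $b^* > 0$, Claim \ref{Ae} will then be combined with the rate estimates in Lemma \ref{beta_properties} (in particular $\widetilde{\alpha} = \Theta(1/n)$ and $\beta^*(n) = o(1/n)$) to conclude that the sufficient condition \eqref{Aineq} from Claim \ref{Ad} holds for all sufficiently large $n$, completing the proof of Proposition \ref{prop_hetero}.
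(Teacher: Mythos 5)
Your proof is correct and follows essentially the same route as the paper: positivity is immediate away from the endpoint, and the only delicate point is the $0/0$ limit as $v\to r_1^-$ (your $t\downarrow s_1$), resolved in the discrete case by the atom at $s_1$ and in the continuous case by $k(s_1)>0$. Your use of the definition of the right-derivative (plus compactness) in place of the paper's L'H\^opital argument is a cosmetic, if slightly cleaner, variation rather than a different approach.
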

First note for $v\in[0,r_1)$, we have $\mu-v>\mu-r_1=s_1$. Therefore, for all such $v$, we have $K(\mu-v)>0$. By implication, for all such $v$, the ratio $K(\mu-v)/(r_1-v)>0$. Therefore, the infimum over this interval is strictly positive as long as 
\begin{align*}
    \lim_{v\rightarrow r_1^-}\frac{K(\mu-v)}{r_1-v}>0.
\end{align*}
First, consider a discrete distribution. In this case, $\lim_{v\rightarrow r_1^-}K(\mu-v)=K(s_1)>0$. That is, the numerator is simply the probability of the lowest type in the support, $s_1$. Because the denominator approaches 0, the limit in question is infinite.

Second, consider a continuous distribution. In this case, the numerator approaches $K(s_1)=0$, and the denominator approaches $r_1-r_1=0$. Evaluating the limit by L'hospital's rule, we have 
\begin{align*}
    \frac{-k(\mu-r_1)}{-1}=k(s_1)>0,
\end{align*}
where the last inequality follows from the Assumptions on $K(\cdot)$.

To complete the Proof of Proposition \ref{fig_hetero}, we show that for sufficiently large $n$, the inequality in \eqref{Aineq} is satisfied. By implication, $\phi(\cdot)$ supports $G(\cdot|s_1,n)$ as an equilibrium, even with cost heterogeneity for  the inexperienced consumer.
\begin{claim}\label{Af} For sufficiently large $n$, $(1-\widetilde{\alpha})\beta^*(n)<\widetilde{\alpha}b^*.
$
\end{claim}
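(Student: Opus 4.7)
The plan is to rewrite the inequality in a form that isolates the asymptotic rates of the two sides, and then apply the limiting behavior of $\beta^*(n)$ established earlier together with the positivity of $b^*$.

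First, I would note that $\widetilde{\alpha} = \frac{\alpha/n}{\alpha/n + (1-\alpha)}$, so
\begin{equation*}
\frac{\widetilde{\alpha}}{1-\widetilde{\alpha}} = \frac{\alpha/n}{1-\alpha} = \frac{\alpha}{n(1-\alpha)}.
\end{equation*}
Consequently, the desired inequality $(1-\widetilde{\alpha})\beta^*(n) < \widetilde{\alpha}\, b^*$ is equivalent to
\begin{equation*}
n\,\beta^*(n) < \frac{\alpha}{1-\alpha}\, b^*.
\end{equation*}
This reformulation makes clear that the inequality pits the polynomial decay rate of the posterior belief $\widetilde{\alpha}$ (of order $1/n$) against the decay rate of the slope $\beta^*(n)$.

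Second, I would invoke Lemma \ref{beta_properties}(2)(b), which gives $\lim_{n\to\infty} n\,\beta^*_n(0,r_1) = 0$. Intuitively, because $v_H^*(n) \to v_H^{\infty} \in (r_1,1)$ with $F(v_H^{\infty})<1$, the numerator $F(v_H^*(n))^{n-1}$ decays exponentially while the denominator $v_H^*(n)-r_1$ is bounded away from zero, so $\beta^*(n)$, and therefore $n\beta^*(n)$, vanishes in the limit.

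Third, by Claim \ref{Ae}, $b^* > 0$, so the right-hand side $\frac{\alpha}{1-\alpha}\,b^*$ is a fixed strictly positive constant. Combining the two observations, there exists a finite $\underline{n}_K$ such that for all $n > \underline{n}_K$ we have $n\,\beta^*(n) < \frac{\alpha}{1-\alpha}\, b^*$, which is the claim. There is no real obstacle here; the content of the argument lies in the two previously established facts (the polynomial versus exponential rate distinction highlighted in Remark \ref{whyno} and formalized in Lemma \ref{beta_properties}, and the positivity of $b^*$ from Claim \ref{Ae}), and this last step just records their immediate consequence.
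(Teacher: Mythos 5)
Your proposal is correct and follows essentially the same route as the paper: rewrite the inequality as $n\beta^*(n)<\frac{\alpha}{1-\alpha}b^*$ using $\widetilde{\alpha}/(1-\widetilde{\alpha})=\alpha/(n(1-\alpha))$, note that $n\beta^*(n)\to 0$ because $v_H^*(n)\to v_H^{\infty}\in(r_1,1)$ makes $F(v_H^*(n))^{n-1}$ decay exponentially while the denominator stays bounded away from zero, and conclude via $b^*>0$. The only cosmetic difference is that you cite Lemma \ref{beta_properties}(2)(b) for the limit where the paper recomputes it in place, which is fine since $r_1=\mu-s_1<\mu$ puts you exactly in that lemma's scope.
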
.
Consider the following equivalent inequalities
\begin{align*}
    (1-\widetilde{\alpha})\beta^*(n)<\widetilde{\alpha}b^*\iff
    \frac{(1-\widetilde{\alpha})}{\widetilde{\alpha}}\beta^*(n)<b^*\iff
    \frac{(1-\alpha)}{\alpha\eta}\beta^*(n)<b^*\iff
    n\beta^*(n)<\frac{\alpha}{1-\alpha}b^*.
\end{align*}
Next, note that for all $n$ sufficiently large, $v_H^*(n)-r_1>0$. Furthermore, as $n\rightarrow\infty$, $v_H^*(n)\rightarrow v_H^\infty$ such that $E_F[v|v<v_H^\infty]=r_1$. By implication $v_H^{\infty}\in(r_1,1)$. It follows that
\begin{align*}
    \lim_{n\rightarrow\infty}n\beta^*(n)=\lim_{n\rightarrow\infty}\frac{nF(v_H^*(n))^{n-1}}{v^*_H(n)-r}=0.
\end{align*}
Hence, the desired inequality holds for all $n$ sufficiently large.

\end{document}